\documentclass[11pt]{article}
\usepackage[utf8]{inputenc}
\usepackage[T1]{fontenc}
\usepackage{times}
\usepackage{fixltx2e}
\usepackage{tensor, multicol, multirow, comment}
\usepackage{stmaryrd}
\usepackage{graphicx, float, subcaption}
\usepackage{caption}
\usepackage{longtable}
\usepackage{setspace}
\usepackage{wrapfig}
\usepackage{soul}
\usepackage{pdfpages}
\usepackage{amsmath,amssymb,mathtools}
\usepackage{textcomp}
\usepackage{marvosym}
\usepackage{wasysym}
\usepackage{latexsym}
\usepackage[colorlinks=true, linkcolor=blue, citecolor=blue, urlcolor=blue,pdfborder={0 0 0 0}]{hyperref}
\usepackage{algorithm}
\usepackage{algpseudocode}
\usepackage{xcolor}
\usepackage[round, colon]{natbib}
\usepackage[margin=1in]{geometry}
\usepackage{enumerate,enumitem}
\usepackage{bigstrut}

\usepackage{lineno}
\usepackage[affil-it]{authblk}

\def\T{\top}

\input{./Definitions}
\begin{document} 
\title{CoMET: A Compressed Bayesian Mixed-Effects Model for High-Dimensional Tensors}
\author{Sreya Sarkar \footnote{PhD Candidate, Department of Statistics and Actuarial Science, The University of Iowa}, Kshitij Khare \footnote{Professor, Department of Statistics, University of Florida}, Sanvesh Srivastava \footnote{Associate Professor, Department of Statistics and Actuarial Science, The University of Iowa}}
\date{}

\maketitle

\begin{abstract}
Mixed-effects models are fundamental tools for analyzing clustered and repeated-measures data, but existing high-dimensional methods largely focus on penalized estimation with vector-valued covariates. 
Bayesian alternatives in this regime are limited, with no sampling-based mixed-effects framework that supports tensor-valued fixed- and random-effects covariates while remaining computationally tractable.
We propose the Compressed Mixed-Effects Tensor (CoMET) model for high-dimensional repeated-measures data with scalar responses and tensor-valued covariates. CoMET performs structured, mode-wise random projection of the random-effects covariance, yielding a low-dimensional covariance parameter that admits simple Gaussian prior specification and enables efficient imputation of compressed random-effects. For the mean structure, CoMET leverages a low-rank tensor decomposition and margin-structured Horseshoe priors to enable fixed-effects selection. These design choices lead to an efficient collapsed Gibbs sampler whose computational complexity grows approximately linearly with the tensor covariate dimensions. We establish high-dimensional theoretical guarantees by identifying regularity conditions under which CoMET’s posterior predictive risk decays to zero. Empirically, CoMET outperforms penalized competitors across a range of simulation studies and two benchmark applications involving facial-expression prediction and music emotion modeling.

\end{abstract}

Keywords: Gibbs sampling; mixed-effects model; random projection; tensor decomposition; uncertainty quantification

\section{Introduction}\label{Section:Intro}

Mixed-effects models are flexible extensions of regression models that include random-effects to capture within-cluster dependence. Recent high-dimensional literature has largely focused on penalization-based methods for vector-valued covariates, and, to our knowledge, there is no sampling-based Bayesian mixed-effects framework that accommodates tensor-valued fixed- and random-effect covariates in high dimensions. To fill this gap, we propose a Compressed Mixed-Effects Tensor (CoMET) model for sampling-based Bayesian inference in high-dimensional repeated measures data with a scalar response and tensor-valued covariates. CoMET uses random projection-based compression of the random-effects covariance structure induced by tensor covariates to reduce the effective dimension of the covariance parameter, bypassing the primary computational bottleneck in posterior computation. Together with a low-rank shrinkage prior on the fixed-effects coefficients, this approach yields an efficient collapsed Gibbs sampler for fixed-effects selection and prediction that outperforms penalized competitors across a range of simulated and real-data analyses. We also provide theoretical support for empirical results by establishing regularity conditions under which CoMET's posterior predictive risk decays to zero.


Modern vision and audio applications generate tensor-valued covariates with within-cluster dependence. For example, a celebrity in the Labeled Faces in the Wild (LFW) database has multiple facial images and associated expression intensity (e.g., smiling score) \citep{2016_LFWSurvey}. Images of the same celebrity exhibit similar celebrity relationships between facial features and expression intensity, motivating regression models with random-slope effects for tensor covariates. Similarly, in the MediaEval Database for Emotional Analysis in Music (DEAM), the valence or arousal of a song at a time point depends on the corresponding matrix-valued spectral acoustic representation (e.g., acoustic feature summaries and their first-order derivatives) \citep{2017_DEAM}. The valence or arousal scores within a song are correlated across time points, and the covariate-response relationship can vary across songs due to differences in spectral features. Tensor regression methods assume independence across observations, ignoring within-celebrity and within-song dependence. Mixed-effects methods fail to exploit tensor structure because they vectorize the covariates. CoMET fills this gap by extending random-slope mixed-effects modeling to tensor covariates while enabling sampling-based Bayesian inference.

There is an extensive literature on tensor regression for independent observations, which focuses on regressing a scalar response on a high-dimensional tensor-valued predictor. The main idea is to impose a low-rank structure on the regression coefficient tensor. Common notions of tensor rank include spectral rank, Tucker rank, and related multilinear ranks \citep{2009_Kolda}. Non-Bayesian methods estimate the regression coefficient tensor via penalized (quasi) likelihood  formulations, encouraging low-rank structure through spectral penalties, nuclear-norm relaxations, and other structured regularizers \citep{2013_Zhou_etal,2014_ZhouLi,2015_SlawskiRegularization,2017_ZhaoTraceReg,2018_Lock,2019_Raskutti,2019_ChenRaskutti,2025_PengTensorTrain}. Analogous Bayesian methods replace explicit penalties with structured priors on the coefficient tensor and perform posterior inference via Monte Carlo algorithms. These include multiway shrinkage priors and Gaussian priors on low-rank factorizations of the tensor coefficient \citep{2017_Guh_etal,2024_WangXu}. These approaches do not directly extend to mixed-effects models with tensor covariates, where repeated measurements induce within-subject dependence and inference requires estimating high-dimensional mean and covariance parameters.

The literature on repeated-measures data with multi-dimensional features is limited, especially in modeling within-subject dependence. \cite{2019_Zhang_etal} developed a scalar-on-tensor regression framework based on generalized estimating equations (GEE), imposing low-rank structure on the tensor regression coefficient. The GEE-based approach has two limitations.  First, it relies on a pre-specified working correlation structure to account for within-subject correlation. Second, it is not well suited to high-dimensional regimes because stable estimation requires a large number of observations per cluster. Mixed-effects models can also be applied to tensor-valued clustered data by vectorizing the tensor covariates. However, this strategy fails to exploit the multilinear structure and cross-mode dependence of a tensor and becomes computationally prohibitive as the tensor dimensions grow.

These limitations have motivated recent extensions of mixed-effects models that directly accommodate tensor covariates. \citet{2020_Yue_etal} proposed a tensor-on-tensor mixed model, but their framework has two limitations. First, it does not include tensor-valued random-effects covariates, so it cannot model random-slope effects. Second, it requires the mode-specific random-effects covariance matrices to be positive definite, which can be restrictive in high-dimensional settings. \cite{2020_Spencer_GuhPrado} regress a tensor-valued response on a scalar stimulus and a scalar subject-region-specific random intercept. This approach does not accommodate tensor-valued random slopes, so it fails to model within-subject dependence across tensor modes. Addressing these issues, \citet{2025_HulSri} introduced a mixed-effects model with tensor covariates that jointly estimates mean and covariance components under sparsity and low-rank assumptions. A Bayesian extension of this framework is impractical because it requires priors for high-dimensional mean and covariance parameters, leading to intractable posterior computation even at moderately large dimensions.

CoMET addresses these gaps through structured dimension reduction of the random-effects covariance. We employ mode-wise random projections to represent random-effects slopes in a low-dimensional space. Because the resulting compressed covariance parameter is low-dimensional, we place a default Gaussian prior on it, which simplifies posterior updates and enables efficient imputation of the compressed random-effects. Similar to the tensor regression literature, we assume a low-rank decomposition for the regression coefficient tensor, yielding linear scaling of the mean-parameter dimension. We then assign Horseshoe priors structured according to the low-r margins of the mean parameter. Together, these design choices yield an efficient collapsed Gibbs sampler, with computational complexity that scales approximately linearly in the tensor covariate dimensions. Finally, CoMET's setup enables us to establish that the its posterior predictive risk decays to zero in the high-dimensional regime.

Recently, compression of the random-effects covariance has been used to enable efficient Bayesian inference in high-dimensional linear mixed-effects models \citep{2025_SarKhSri}. Applying this approach to tensor covariates via vectorization has two limitations. First, vectorization necessitates placing shrinkage priors on a coefficient vector whose dimension grows polynomially with the tensor dimensions. CoMET avoids this by imposing a low-rank decomposition on the coefficient tensor and assigning shrinkage priors to its low-dimensional margins, yielding linear scaling of the mean-parameter dimension. Second, vectorization-based covariance compression fails to exploit the mode-specific structure of tensor-valued random-effects. CoMET instead imposes a separable covariance structure on the random-effects, enabling mode-wise compression via random projection matrices.

In summary, our contributions are threefold. First, we develop CoMET, a compressed mixed-effects model for clustered data with scalar responses and tensor-valued fixed- and random-effects covariates (Section \ref{Section:Methods}). Leveraging the tensor regression literature, we impose sparsity in the mean structure via a low-rank factorization and perform fixed-effects selection through Horseshoe priors structured according to the low-rank margin factors. Second, we compress the random-effects covariance by imposing a separable covariance structure and performing mode-wise compression via random projections, which yields a tensor mixed-effects model with random slopes for tensor-valued covariates. This substantially reduces the effective dimensions of the mean and covariance parameters and expands modeling flexibility relative to existing penalized and Bayesian alternatives for tensor regression, while keeping posterior computation tractable. We further provide theoretical support for these empirical findings by establishing regularity conditions under which CoMET's posterior predictive risk decays to zero (Section \ref{Section:tensorTheory}). Finally, we demonstrate across a range of simulation studies (Section \ref{Section:Simulations}) and two benchmark applications (a music database and an image database) that CoMET outperforms penalized competitors on multiple performance metrics (Section \ref{Section:RealDataAnalysis}).

\section{Methodology}\label{Section:Methods}
\subsection{Background}\label{Section:Background}

Consider the setup of a mixed model with scalar response and tensor-variate fixed- and random-effects covariates. Let there be $n$ subjects with $m_i$ observations recorded for the $i$-th subject and $N = \sum_{i=1}^n m_i$ be the total sample size. For $i=1,\dots,n$ and $ j=1,\dots,m_i$, suppose $\mathcal{X}_{ij} \in \mathbb{R}^{p_1 \times \dots \times p_D}$ and $\mathcal{Z}_{ij} \in \mathbb{R}^{q_1 \times \dots \times q_D}$ are $D$-th order tensors, respectively,  denoting the fixed- and random-effect covariates, and $y_{ij} \in \RR$ is the response for the $i$-th subject on the $j$-th occasion. A scalar-on-tensor mixed-effects model for relating $y_{ij}$ to $(\mathcal{X}_{ij}, \mathcal{Z}_{ij})$ is 
\begin{equation}\label{eq:Origmodel}
    y_{ij} = \langle \mathcal{X}_{ij}, \mathcal{B} \rangle + \langle \mathcal{Z}_{ij}, \mathcal{A}_{i} \rangle + \epsilon_{ij}, \quad \Acal_{ij} \perp \epsilon_{ij},\quad i = 1, \dots, n, \quad j = 1, \dots, m_i,
\end{equation}
where $\mathcal{B} \in \mathbb{R}^{p_1 \times \dots \times p_D}$ is a $D$-th order tensor representing the population-specific mean parameter, and $\langle \cdot, \cdot \rangle$ denotes the inner product between two tensors of the same order \citep{2025_HulSri}. The idiosyncratic  errors $\epsilon_{ij}$s are independently and identically distributed (iid) as $\mathcal{N}(0, \tau^2)$. The random-effect for subject $i$ is $\mathcal{A}_i \in \mathbb{R}^{q_1 \times \dots \times q_D}$, $\Acal_i$s and $\epsilon_{ij}$s are mutually independent, and $\Acal_i$s are iid as $\mathcal{N}_{q_1, \dots, q_D}(0, \tau^2 \Sigma_1, \dots, \Sigma_D)$, where $\mathcal{N}_{q_1, \dots, q_D}$ denotes an order-$D$ tensor normal distribution and $\Sigma_d$ is the symmetric positive semi-definite covariance matrix corresponding to mode-$d$ \citep{2011_Hoff}.

Tensor regression is recovered as a special case of \eqref{eq:Origmodel} by setting $\mathcal{A}_i = 0$ and $m_i = 1$ for all $i$. Existing regularization-based methods estimate $\mathcal{B}$ using penalties that promote low-rank structures, most commonly by expressing $\Bcal$ as a small number of rank-one tensors \citep{2013_Zhou_etal,2014_ZhouLi,2015_SlawskiRegularization,2017_ZhaoTraceReg,2018_Lock,2019_Raskutti,2019_ChenRaskutti,2025_PengTensorTrain}. 
Their Bayesian counterparts place shrinkage priors on the rank-one tensor components, typically shrinking a subset of these components. For example, multiway Dirichlet generalized double Pareto priors induce adaptive lasso-type shrinkage on the entries of $\mathcal{B}$ to promote rank reduction by simultaneously shrinking many rank-one components towards zero \citep{2017_Guh_etal, 2025_Spencer_Biostats}. These methods quantify uncertainty in $\mathcal{B}$ by drawing samples from its posterior distribution. 

After vectorization, the model in \eqref{eq:Origmodel} can be written as a linear mixed-effects (LME) model with a Kronecker-structured random-effects covariance. In particular, $D=1$ corresponds to the standard LME model, and its recent quasi-likelihood extensions replace $\Sigma_1$ with a suitable proxy matrix to estimate the fixed-effect parameter $\mathcal{B} \in \mathbb{R}^{p_1}$ using sparsity-inducing penalties \citep{FanLi12,Lietal21}. For a general $D$, let $x_{ij} = \text{vec}(\mathcal{X}_{ij}) \in \mathbb{R}^{p^*}$ and $z_{ij} = \text{vec}(\mathcal{Z}_{ij}) \in \mathbb{R}^{q^*}$ denote the vectorized covariates, where $p^* = \prod_{d=1}^D p_d$, $q^* = \prod_{d=1}^D q_d$, and $\text{vec}(\cdot)$ vectorizes a tensor by stacking its entries so that the first-mode index varies fastest, followed by the second, and so on. Using this representation, \eqref{eq:Origmodel} becomes an LME model with fixed- and random-effects covariates $(x_{ij}, z_{ij})$, fixed-effects parameter $\text{vec}(\Bcal) \in \RR^{p^*}$, and random-effect $\text{vec}(\Acal_i) \in \RR^{q^*}$ with covariance $\tau^2 (\Sigma_D \otimes \cdots \otimes \Sigma_1)$, where $\otimes$ denotes the Kronecker product \citep{2011_Hoff}. This implies that the mean and covariance parameter dimensions are $O(p^*)$ and $O(\sum_{d=1}^D q_d^2)$. While existing penalized and quasi-likelihood approaches for high-dimensional LME models can be applied to this vectorized form, they treat $(x_{ij}, z_{ij})$'s as unstructured vectors and fail to exploit the underlying tensor structure, leading to unnecessary computational burden and potentially suboptimal inference.

GEE-based models keep the mean structure in \eqref{eq:Origmodel} but model within-subject dependence through a marginal error term with covariance replaced by a scaled working correlation matrix. The marginal error plays the role of the random-effects and idiosyncratic errors in \eqref{eq:Origmodel}. Similar to tensor regression, GEE-based methods use penalties that promote low-rank structures for estimating $\mathcal{B}$ \citep{2019_Zhang_etal}.
The main limitation of these methods is that the working correlation structure is restricted to a small set of parametric forms (e.g., equicorrelation). Moreover, they cannot adapt to the dependence structure in the data because they do not include an equivalent of the random-effects term such as $\langle \mathcal{Z}_{ij}, \mathcal{A}_{i} \rangle$ in \eqref{eq:Origmodel}. They are also not well-suited to high-dimensional regimes with $N \ll \min\{p^*, q^*\}$.

Mixed model trace regression methods use \eqref{eq:Origmodel} to obtain sparse estimates of the mean parameter and low-rank estimates of the covariance components in high-dimensional settings \citep{2025_HulSri}. Conditional on $\Sigma_1, \ldots, \Sigma_D$, the regularized estimation of $\Bcal$ 
follows from standard tensor regression methods that impose low-rank or sparse structures on $\Bcal$.
The key additional step in the mixed-effects setting is to treat the random-effects $\Acal_i$s as latent variables and to model their separable covariance through a low-rank Tucker representation \citep{2009_Kolda}:
\begin{equation}\label{eq:Ai_dist2}
    \mathcal{A}_i = \mathcal{D}_i \times_1 L_1 \times_2 \cdots \times_D L_D, \quad \mathcal{D}_i \sim \mathcal{N}_{q_1, \dots, q_D}(0, \tau^2 I_{q_1}, \dots, I_{q_D}), \quad d = 1, \dots, D,
\end{equation}
where $L_d$ is a (possibly low-rank) square-root factor of the mode-$d$ covariance matrix $\Sigma_d$, satisfying $\Sigma_d = L_d L_d^{\T}$. The covariance components $\Sigma_d$'s are estimated via their factors $L_d$'s using an EM-type algorithm: the E-step uses the Gaussian conditional distribution of $\Acal_i$ and the M-step sequentially updates $L_1, \ldots, L_D$ under a group-lasso penalty on their columns. Despite their flexibility, this approach becomes computationally prohibitive in high-dimensional applications and does not provide uncertainty estimates.

A Bayesian extension of \eqref{eq:Origmodel} is computationally prohibitive in high dimensions. The main 
challenge is specifying priors on $\Sigma_1, \ldots, \Sigma_D$ that still yield tractable posterior computation. Standard high-dimensional priors for covariance matrices, including the shrinkage inverse Wishart prior \citep{2020_BergerSIW}, require $\Sigma_d$ to be positive definite and are not suited for estimating $O(q_d^2)$ covariance parameters when the number of observations per subject is limited. In the next section, we introduce the CoMET model, which addresses these limitations by using a quasi-likelihood based on mode-wise covariance compression to bypass high-dimensional covariance estimation. Together with a low-rank shrinkage prior on $\Bcal$, this formulation enables sampling-based posterior inference with uncertainty quantification. 

\subsection{CoMET: A Compressed Mixed-Effects Tensor Model}\label{Section:OurModel}

The key idea behind CoMET's mode-wise covariance compression is to replace each high-dimensional random-effects covariance component
with a low-dimensional proxy obtained by applying random projections to a covariance factor. Let $R_d, S_d \in \mathbb{R}^{k_d \times q_d}$ be mode-$d$-specific random matrices with independent entries distributed as $\mathcal{N}(0, 1/k_d)$, where $k_d \ll q_d$. For each mode $d = 1, \dots, D$, the CoMET model compresses the square-root covariance factor $L_d$ in \eqref{eq:Ai_dist2} by replacing it with $\Gamma_d = S_d L_d R_d^{\T} \in \mathbb{R}^{k_d\times k_d}$.

The CoMET model reformulates the tensor mixed-effects model in \eqref{eq:Origmodel} using the compressed factors $(\Gamma_1, \ldots, \Gamma_D)$. Following \eqref{eq:Ai_dist2}, the compressed random-slope tensor for subject $i$ is defined as
\begin{equation}\label{eq:compAi_dist}
    \tilde{\mathcal{A}}_i
    = \tilde{\mathcal{D}}_i \times_1 \Gamma_1 \times_2 \cdots \times_D \Gamma_D,
    \qquad
    \tilde{\mathcal{D}}_i \sim \mathcal{N}_{k_1,\ldots,k_D}\big(0,\ \tau^2 R_1R_1^{\T},\ldots, R_DR_D^{\T}\big),
\end{equation}
where $\tilde{\mathcal{D}}_i = \mathcal{D}_i \times_1 R_1 \times_2 \cdots \times_D R_D$ is the compressed core tensor. To match the dimension of $\tilde {\mathcal{A}}_i$ and its random-effects covariates, we define $\tilde{\mathcal{Z}}_{ij}
=\mathcal{Z}_{ij}\times_1 S_1 \times_2 \cdots \times_D S_D$.  While the fixed-effects covariates $\mathcal{X}_{ij}$'s remain unchanged, we impose a low-rank structure on their coefficient tensor $\mathcal{B}$ to reduce the effective mean parameter dimension. The resulting CoMET model is 
\begin{equation}\label{eq:cme1}
    y_{ij}
    = \langle \mathcal{X}_{ij}, \mathcal{B} \rangle
    + \langle \tilde{\mathcal{Z}}_{ij}, \tilde{\mathcal{A}}_i \rangle
    + \epsilon_{ij}, \quad \epsilon_{ij} \sim \Ncal(0, \tau^2),
    \quad j=1,\ldots,m_i,\quad i=1,\ldots,n ,
\end{equation}
where $\tau^2$ is the idiosyncratic error variance.  The compressed inner product $\langle \tilde{\mathcal{Z}}_{ij}, \tilde {\mathcal{A}}_i \rangle$ in \eqref{eq:cme1} is a low-dimensional proxy for $\langle \mathcal{Z}_{ij}, {\mathcal{A}}_i \rangle$ in \eqref{eq:Origmodel}. This approximation is achieved by mode-wise multiplication of $(S_1, \ldots, S_D)$ to the covariates $Z_{ij}$ in \eqref{eq:Origmodel} and $(R_1, \ldots, R_D)$ to the core random-effects tensor $\mathcal{D}_i$ in \eqref{eq:Ai_dist2}. 

Compared to \eqref{eq:Origmodel}, the CoMET model in \eqref{eq:cme1} reduces the mean and covariance parameter dimensions.  First, the compressed covariance factors $(\Gamma_1, \dots, \Gamma_D)$ reduce the covariance parameter dimension from $O(q_1^2 + \cdots + q_D^2)$ to $O( k_1^2 + \cdots + k_D^2)$, where $k_d \ll q_d$ for $d = 1, \dots, D$. Second, we impose a rank-$K$ CP structure on $\Bcal$  as
\begin{equation}\label{eq:cp}
    \mathcal{B} = \sum_{g=1}^{K} \bm \beta^{(g)}_{1} \circ \cdots \circ \bm \beta^{(g)}_{D}, \quad \bm \beta^{(g)}_{d} \in \mathbb{R}^{p_d}, \quad g = 1, \dots, K, \quad d = 1, \dots, D,
\end{equation}
where $K \ll \min\{ p_1, \ldots, p_D\}$ and $\circ$ denotes the vector outer product. Specifically, \eqref{eq:cp} implies that the $(j_1, \dots, j_D)$-th entry of $\mathcal{B}$ satisfies $\mathcal{B}_{j_1 \ldots j_D} = \sum_{g=1}^K \beta^{(g)}_{1 j_1}\cdots\beta^{(g)}_{D j_D}$ \citep{2009_Kolda}. For $d=1, \ldots, D$, let $B_d = \big[ \bm \beta^{(1)}_{d}, \cdots, \bm \beta^{(K)}_{d} \big] \in \mathbb{R}^{p_d \times K}$ be the mode-$d$ mean factor matrix. This parametrization replaces the $O(\prod_{d=1}^D p_d)$ free parameters in an unstructured $\Bcal$ with $O(K \sum_{d=1}^D p_d)$ parameters in $(B_1, \ldots, B_D)$. Overall, CoMET reduces the parameter dimension from $O(\prod_{j=1}^D p_j + \max_d q_d^2)$ in \eqref{eq:Origmodel} to 
$O(K \sum_{j=1}^D p_j + \max_d k_d^2)$.

We complete the CoMET model specification by assigning priors to $\{B_1, \ldots, B_D,  \Gamma_1, \dots, \Gamma_D, \tau^2\}$. Because $B_1, \ldots, B_D$ determine $\Bcal$ through the CP representation in \eqref{eq:cp}, we place Horseshoe priors on the entries of the factor matrices to promote shrinkage of the $\Bcal$ entries and fixed-effects selection \citep{2010_CarPolSco,2017_VanderPas_SzaboVaart,2022_BhatKharePal,2023_SonLia}. For $g = 1, \ldots, K$, $d = 1, \dots, D$, and $j = 1, \dots, p_d$, we specify prior on $(\Bcal, \tau^2)$ using the following hierarchy
\begin{align}\label{eq:hsprior_2}
    \beta^{(g)}_{dj} \mid \lambda_{gdj}^2, \delta_{g}^2, \tau^2 \overset{\text{ind}}{\sim} \mathcal{N}(0, \lambda_{gdj}^2 \delta_{g}^2 \tau^2 ),\quad \lambda_{gdj} \overset{\text{ind}}{\sim} \mathcal{C}^{+}(0, 1), \quad 
    \delta_{g} \overset{\text{ind}}{\sim} \mathcal{C}^{+}(0, 1), \quad 
    \tau^2 \sim \mathcal{IG}(a_0, b_0),
\end{align}
where $\mathcal{C}^{+}(0, 1)$ denotes the standard half-Cauchy distribution
and $\mathcal{IG}(a_0, b_0)$ denotes the inverse-gamma distribution with the shape and scale parameters $a_0$ and $b_0$, respectively. The global shrinkage parameter $\delta_{g}$ controls overall shrinkage across all modes within the $g$-th rank-one component in \eqref{eq:cp}, encouraging rank reduction by shrinking the entire component toward zero. The local shrinkage parameter $\lambda_{gdj}$ controls entry-specific shrinkage of $\beta_{dj}^{(g)}$, which impacts the shrinkage of the tensor entries $\mathcal{B}_{j_1 \ldots j_D}$ through the  multiplicative structure in \eqref{eq:cp} and promotes variable selection. Finally, independent of $(\mathcal{B}, \tau^2)$, we place Gaussian priors,  $\mathcal{N}(0, \sigma_{d}^2 I_{k_d^2})$, on $\gamma_d$ independently across $d$ modes, where $\gamma_d = \text{vec}(\Gamma_d)$ and $\sigma_{d}^2$ is a hyperparameter. 

\subsection{Fixed-Effects Selection and Prediction using CoMET}
\label{Section:Sampler}

We develop a collapsed Gibbs sampler for fixed-effects selection and prediction using the CoMET model. For simplicity, we present the sampler for three-dimensional tensors (i.e., $D = 3$), but the extension to a general order-$D$ tensor is straightforward. The algorithm cycles between two blocks. In the first bock, we  condition on $(\mathcal{B}, \tau^2)$ and update the compressed random-effects cores 
$\tilde {\mathcal{D}}_i$s in \eqref{eq:cme1}
given $(\Gamma_1, \Gamma_2, \Gamma_3)$  followed by updating $(\Gamma_1, \Gamma_2, \Gamma_3)$
given $\tilde {\mathcal{D}}_i$s. In the second block, we update $(\Bcal, \tau^2)$ conditional on 
$(\Gamma_1, \Gamma_2, \Gamma_3)$ after marginalizing out $\tilde {\mathcal{D}}_i$s in \eqref{eq:cme1}.

We derive the joint distribution of the responses and random-effects under \eqref{eq:cme1}, which is key to obtaining analytically tractable full conditional distributions for the Gibbs sampler. Let $\Gamma^* = \Gamma_3 \otimes \Gamma_2 \otimes \Gamma_1, \, R^* = R_3 \otimes R_2 \otimes R_1$, and $S^* = S_3 \otimes S_2 \otimes S_1$. Then, we can re-express the CoMET model in \eqref{eq:cme1} as
\begin{equation}\label{eq:cme2}
    y_{ij} = x_{ij}^{\T} \bm{\beta} + \tilde z_{ij}^{\T}\Gamma^* \tilde d_i + \epsilon_{ij}, \quad \tilde d_i \sim \mathcal{N}(0, \tau^2 R^*R^{*\T}),
\end{equation}
where $x_{ij} = \text{vec}(\mathcal{X}_{ij}), \, \bm{\beta} = \text{vec}(\mathcal{B}), \, \tilde z_{ij} = \text{vec}(\mathcal{\tilde Z}_{ij}) = S^* \text{vec}(\mathcal{Z}_{ij})$, and $\tilde d_i = \text{vec}(\tilde {\mathcal{D}}_i)$. For the $i$-th subject, we then stack $y_{ij}$s, $x_{ij}^{\T}$s, and $\tilde z_{ij}^{\T}$s row-wise to construct $y_i \in \mathbb{R}^{m_i}$, $ X_i \in \mathbb{R}^{m_i \times p_1 p_2 p_3}$, and $\tilde Z_i \in \mathbb{R}^{m_i \times k_1 k_2 k_3}$, respectively. The joint distribution of $(y_i, \tilde d_i)$ is Gaussian:
\begin{align}\label{eq:joint_yitildedi}
    &\mathcal{N}_{m_i + k_1 k_2 k_3}\left( \begin{pmatrix} X_i \bm\beta \\ 0 \end{pmatrix}, \tau^2 \begin{pmatrix}V_{y_i, y_i} & V_{y_i, \tilde d_i} \\ V_{y_i, \tilde d_i}^{\T} & V_{\tilde d_i, \tilde d_i}\end{pmatrix} \right), 
\end{align}
where $V_{\tilde d_i, \tilde d_i} = R^* R^{*\T}$, $V_{y_i, \tilde d_i} = \tilde Z_i \Gamma^* R^* R^{*\T}$, and  $V_{y_i, y_i} = \tilde Z_i \Gamma^* R^* R^{*\T}\Gamma^{*\T}\tilde Z_i^{\T} + I_{m_i}$.

For the first block, we condition on $(\mathcal{B},\tau^2)$ and derive the full conditional distributions for the compressed core random-effects $\tilde{\mathcal{D}}_i$s and the compressed covariance factors $(\Gamma_1,\Gamma_2,\Gamma_3)$. For $i=1, \ldots, n$, the joint distribution in \eqref{eq:joint_yitildedi} implies that the full conditional distribution of $\tilde d_i$ given $y_i$ and $(\Gamma_1,\Gamma_2,\Gamma_3)$  is
\begin{equation}\label{eq:fullcond_tildedi}
    \mathcal{N}_{k_1 k_2 k_3}(\mu_{\tilde d_i \mid .}, V_{\tilde d_i \mid .}), \ \mu_{\tilde d_i \mid .}  = V_{y_i, \tilde d_i}^{\T} V_{y_i,  y_i}^{-1} (y_i - X_i \bm\beta), \ V_{\tilde d_i \mid .} = V_{\tilde d_i, \tilde d_i} - V_{y_i, \tilde d_i}^{\T} V_{y_i, y_i}^{-1} V_{y_i, \tilde d_i}.
\end{equation}
Conditional on $(\Bcal, \tau^2)$, $\tilde d_i$s, and $y_i$s, we sample $\Gamma_1, \Gamma_2$, and $\Gamma_3$ sequentially. For any $d=1, 2, 3$, we have
\begin{align}\label{innerprod_gammad}
   \langle \mathcal{\tilde Z}_{ij}, \mathcal{\tilde A}_{i} \rangle = \text{tr}(\mathcal{\tilde Z}_{ij(d)}^{\T} \mathcal{\tilde A}_{i(d)}) & = \{\operatorname{vec}(\mathcal{\tilde Z}_{ij(d)} (\Gamma_3 \otimes \cdots \otimes \Gamma_{d+1} \otimes \Gamma_{d-1} \otimes \cdots \otimes \Gamma_1)\tilde {\mathcal{D}}_{i(d)}^{\T}) \}^{\T} \gamma_d,
\end{align}
where $\text{tr}(\cdot)$ is the trace operator and $\mathcal{\tilde Z}_{ij(d)}$ and $\tilde {\mathcal{D}}_{i(d)}$ are mode-$d$ matricization \citep{2009_Kolda} of the tensors $\mathcal{\tilde Z}_{ij}$ and $\tilde {\mathcal{D}}_{i}$, respectively. Substituting \eqref{innerprod_gammad} in \eqref{eq:cme1} yields the following linear regression model 
\begin{equation}\label{eq:regmodel_gammad}
    \check y = \check Z_{\gamma_d} \gamma_d + \epsilon, \quad \epsilon \sim \mathcal{N}(0, \tau^2 I_N), \quad \gamma_d \in \mathbb{R}^{k_d^2}, \quad \pi (\gamma_d)\sim \mathcal{N}(0, \sigma_{d}^2 I_{k_d^2}),
\end{equation}
where $\check y$ is obtained by stacking $y_{ij} - \langle \mathcal{X}_{ij}, \mathcal{B}\rangle$s row-wise and $\check Z_{\gamma_d} \in \mathbb{R}^{N \times k_d^2}$ is obtained by stacking $\{\text{vec}(\mathcal{\tilde Z}_{ij(d)} (\Gamma_3 \otimes \cdots \otimes \Gamma_{d+1} \otimes \Gamma_{d-1} \otimes \cdots \otimes \Gamma_1)\tilde {\mathcal{D}}_{i(d)}^{\T}) \}^{\T}$s  row-wise for $j = 1, \dots, m_i$ and $i = 1, \dots, n$. The Gaussian likelihood and prior on $\gamma_d$ in \eqref{eq:regmodel_gammad} implies that the full conditional distribution of $\gamma_d$ is
\begin{equation}\label{eq:post_gammad}
   \mathcal{N}(\mu_{\gamma_d}, \Sigma_{\gamma_d}), \quad \Sigma_{\gamma_d} = \left(\tau^{-2} \check{Z}_{\gamma_d}^{\T} \check{Z}_{\gamma_d} + \sigma_{d}^{-2} I_{k_d^2} \right)^{-1}, \quad \mu_{\gamma_d} = \tau^{-2}\Sigma_{\gamma_d} \check{Z}_{\gamma_d}^{\T} \check y, \quad  d = 1, \ldots, 3. 
\end{equation}
Using \eqref{eq:post_gammad}, we update $\Gamma_1$, $\Gamma_2$, and $\Gamma_3$ 
sequentially from their full conditional distributions, each conditioned on all remaining parameters.

Before deriving the full conditionals in the second block, we restructure the model in \eqref{eq:cme1} using \eqref{eq:cp} and identities based on the Khatri-Rao product \citep{2009_Kolda}. Let $\mathcal{X}_{ij(d)}$ and $\mathcal{B}_{(d)}$ respectively denote the mode-$d$ matricization of the tensors $\mathcal{X}_{ij}$ and $\mathcal{B}$. Then, leveraging the CP decomposition \eqref{eq:cp}, we can express the mode-wise matricizations of $\mathcal{B}$ in terms of the mode-specific factor matrices as follows
\begin{align}\label{eq:property_modeB}
    \mathcal{B}_{(1)} = B_1 (B_3 \odot B_2)^{\T} , \quad \mathcal{B}_{(2)} = B_2 (B_3 \odot B_1)^{\T}, \quad \mathcal{B}_{(3)} = B_3 (B_2 \odot B_1)^{\T} ,
\end{align}
where $\mathcal{B}_{(1)} \in \RR^{p_1 \times p_2 p_3}$, $\mathcal{B}_{(2)} \in \RR^{p_2 \times p_1 p_3}$, and $\mathcal{B}_{(3)}  \in \RR^{p_3 \times p_1 p_2}$ and the operator $\odot$ denotes the Khatri-Rao product. For instance,  
$B_3 \odot B_2 = [\bm \beta^{(1)}_{3} \otimes \bm \beta^{(1)}_{2} \quad \cdots \quad \bm \beta^{(K)}_{3} \otimes \bm \beta^{(K)}_{2}] \in \RR^{p_3 p_2 \times K}$, where $\bm \beta^{(1)}_{3}, \bm \beta^{(1)}_{2}, \ldots, \bm \beta^{(K)}_{3}, \bm \beta^{(K)}_{2}$ are defined in \eqref{eq:cp}. Using \eqref{eq:property_modeB}, the marginal mean component in \eqref{eq:cme1} satisfies
\begin{equation}\label{eq:property_innerprod_fixef}
    \langle \mathcal{X}_{ij}, \mathcal{B} \rangle = \text{tr}(\mathcal{X}_{ij(d)}^{\T} \mathcal{B}_{(d)}) = \{\text{vec}(\mathcal{X}_{ij(d)} B_{-d})\}^{\T} \tilde{\bm\beta}_d, \quad \tilde{\bm\beta}_d= \text{vec}(B_d), \quad  d = 1, 2, 3,
\end{equation}
where  $B_{-1} = B_3 \odot B_2, \, B_{-2} = B_3 \odot B_1$ and $B_{-3} = B_2 \odot B_1$. We re-express \eqref{eq:cme1} using \eqref{eq:property_innerprod_fixef} as
\begin{align}\label{eq:modelBd}
    y_{ij} & = \{\text{vec}(\mathcal{X}_{ij(d)}B_{-d})\}^{\T}\tilde{\bm\beta}_d + \tilde z_{ij}^{\T}\Gamma^* \tilde d_i + \epsilon_{ij}, \quad i=1, \ldots, n, \quad   j=1, \ldots, m_i.
\end{align}

We marginalize over $\tilde d_i$s in \eqref{eq:modelBd} and derive the full conditional distribution of $(B_1, B_2, B_3, \tau^2)$ given $(\Gamma_1, \Gamma_2, \Gamma_3)$ in the second block. After marginalization, \eqref{eq:modelBd} reduces to the following weighted regression model
\begin{equation}\label{eq:modelBd_marg}
y_i \sim \mathcal{N}_{m_i}(\tilde X_i \tilde{\bm\beta}_d, \tau^2 C_i), \quad C_i = V_{y_i, y_i} = \tilde Z_i \Gamma^* R^* R^{*\T}\Gamma^{*\T}\tilde Z_i^{\T} + I_{m_i}, \quad i = 1, \ldots, n,
\end{equation}
where $\tilde X_i \in \RR^{m_i \times K p_d}$ and $\{\text{vec}(\mathcal{X}_{ij(d)}B_{-d})\}^{\T}$ is the $j$-th row of $\tilde X_i$. We scale $y_i$ and $\tilde X_i$ to obtain $y_i^* = C_i^{-1/2} y_i$ and $\tilde X_i^* = C_i^{-1/2} \tilde X_i$, where $C_i^{-1/2}$ is the square root of $C_i^{-1}$. Finally, we stack $y_i^*$s and $\tilde X_i^*$s row-wise to obtain $y^* \in \mathbb{R}^N$ and $X_{B_d}^* \in \mathbb{R}^{N \times K p_d}$,  respectively, and obtain the following regression model 
\begin{equation}\label{eq:modelBd_cycle2}
    y^* = X_{B_d}^* \tilde{\bm\beta}_d + \epsilon^*, \quad \epsilon^* \sim \mathcal{N}(0, \tau^2 I_N).
\end{equation}
The mode-structured Horseshoe prior in \eqref{eq:hsprior_2} and the Gaussian likelihood in \eqref{eq:modelBd_cycle2} yield the following full conditional distribution of the mode-$d$ margins of $\Bcal$:
\begin{align}\label{eq:fullcondBd}
    \tilde{\bm\beta}_d \sim \mathcal{N}(\Sigma_{B_d}X_{B_d}^{*\T}y^{*}, &\tau^{2} \Sigma_{B_d}), \quad 
    \Sigma_{B_d} = \left[ X_{B_d}^{*\T}X_{B_d}^{*} + \Bigl\{\text{diag}(\delta_{1}^{2} \Lambda_{1d}, \dots, \delta_{K}^{2} \Lambda_{Kd}) \Bigr\}^{-1} \right]^{-1},
\end{align}
where $\Lambda_{gd} = \text{diag}(\lambda^{2}_{gd1}, \dots, \lambda^{2}_{gdp_d}) $ for $d=1, 2, 3$ and $g=1, \ldots, K$; therefore, after marginalization with respect to the compressed core random-effects, we use \eqref{eq:fullcondBd} to sequentially update the vectorized mean factor matrices $\tilde{\bm\beta}_1$, $\tilde{\bm\beta}_2$ and $\tilde{\bm\beta}_3$, conditional on the observed data and all the remaining parameters. 

The second block updates are finished by updating $\tau^2$ and  the local and global shrinkage parameters, $\lambda^2_{gdj}$s and $\delta_g$s,  given the observed data, factor margins $(B_1, B_2, B_3)$, and compressed factors $(\Gamma_1, \Gamma_2, \Gamma_3)$. The vectorized factor margins $\tilde{\bm\beta}_1$, $\tilde{\bm\beta}_2$ and $\tilde{\bm\beta}_3$ yield $\mathcal{B}$ using the CP structure in \eqref{eq:cp}. Given $\mathcal{B}$ and $(\Gamma_1, \Gamma_2, \Gamma_3)$, the Gaussian likelihood in \eqref{eq:cme2} implies that the full conditional distribution of $\tau^2$ is inverse-gamma.
Exploiting a parameter-expanded representation of the half-Cauchy prior in \eqref{eq:hsprior_2}, we obtain closed-form full conditional distributions for these shrinkage parameters, along with the associated auxiliary variables introduced through the reparameterization.  Detailed derivations of these conditional updates are provided in Section~\ref{Section:FullCondDerivations} of the supplementary material. Algorithm \ref{algoCoMET:1} summarizes the overall sampling scheme including the analytic forms of the full conditional distributions.

Given the posterior samples of the CoMET model's parameters, posterior prediction for new subjects proceeds as follows. Let $(\mathcal{B}^{(t)}, \tau^{2(t)}, \Gamma_1^{(t)}, \Gamma_2^{(t)}, \Gamma_3^{(t)})$ denote the marginal parameter chain at the $t$-th iteration of Algorithm \ref{algoCoMET:1}. For a test subject with $m^*$ observations and covariates $(X_{\text{new}}, Z_{\text{new}})$, the posterior predictive distribution at iteration $t$ is
\begin{equation}\label{eq:predict}
    y_{\text{new}}^{(t)} \sim \mathcal{N}(X_{\text{new}}\bm\beta^{(t)}, \tau^{2(t)}(Z_{\text{new}} S^{*\T} \Gamma^{* (t)} R^{*} R^{*\T}\Gamma^{* (t) \T}S^{*}Z_{\text{new}}^{\T} + I_{m^*})),
\end{equation}
where $\bm\beta^{(t)} = \text{vec}(\mathcal{B}^{(t)})$ and $\Gamma^{* (t)} = \Gamma_3^{(t)} \otimes \Gamma_2^{(t)} \otimes \Gamma_1^{(t)}$, and $R^*$ and $S^*$ consist of the same random projection matrices used during posterior sampling of the parameters. Finally, we average over these marginal posterior predictive draws to obtain point estimates of the test response.

For fixed-effects selection, we use the marginal chain $(\mathcal{B}_{j_1 j_2 j_3})_{t=1}^{\infty}$ to construct cell-wise credible intervals. These intervals are used for  uncertainty quantification at a desired nominal level. Additionally, we use the sequential 2-means algorithm \citep{LiPati17} to perform fixed-effects selection. Applying this procedure on the marginal chain $(\mathcal{B}_{j_1 j_2 j_3})_{t=1}^{\infty}$ partitions the fixed-effect coefficient tensor cells into two clusters corresponding to zero (inactive) and non-zero (active) cells. The selected fixed-effects corresponds to the indices of the non-zero cells.

\begin{algorithm}[h!]
\caption{\textbf{Collapsed Gibbs sampler for CoMET}}
\label{algoCoMET:1}
\begin{enumerate}\setlength{\itemsep}{0pt}
    \item Set $k_d = O(\log(\max_{d}(q_d)) )$, $ K = O( \log(\max_{d}(p_d)) )$, $a_0 = b_0 = 0.01$, $\sigma_d^2 = 1$ for $d = 1, 2, 3$.
    \item Draw $R_d = ((r_{ij})), \ r_{ij} \overset{\text{iid}}{\sim} \mathcal{N}(0, 1/k_d)$ and $S_d = ((s_{ij})), \ s_{ij} \overset{\text{iid}}{\sim} \mathcal{N}(0, 1/k_d)$, independently for $d = 1, 2, 3$.
    \item Initialize $(\tilde{\bm\beta}_1^{(0)}, \tilde{\bm\beta}_2^{(0)}, \tilde{\bm\beta}_3^{(0)}, \tau^{2(0)}, \gamma_1^{(0)}, \gamma_2^{(0)}, \gamma_3^{(0)})$. For superscript $t = 1, 2, \ldots, \infty$, the $t$-th iteration of the Gibbs sampler cycles through the following steps.
    \begin{enumerate}
        \item Given $(\tilde{\bm\beta}_1^{(t-1)}, \tilde{\bm\beta}_2^{(t-1)}, \tilde{\bm\beta}_3^{(t-1)}, \tau^{2(t-1)}, \gamma_1^{(t-1)}, \gamma_2^{(t-1)}, \gamma_3^{(t-1)})$, draw $d_i^{(t)}$'s using \eqref{eq:fullcond_tildedi}, independently for $i = 1, \dots, n$.
        \item Given $(d_1^{(t)}, \dots, d_n^{(t)})$ and $(\tilde{\bm\beta}_1^{(t-1)}, \tilde{\bm\beta}_2^{(t-1)}, \tilde{\bm\beta}_3^{(t-1)}, \tau^{2(t-1)})$, use \eqref{eq:post_gammad} to sample $\gamma_d^{(t)}$ given $(\gamma_1^{(t)}, \dots, \gamma_{d-1}^{(t)}, \gamma_{d+1}^{(t-1)}, \dots, \gamma_3^{(t-1)})$, sequentially for $d = 1, 2, 3$.
        \item Given $(\tilde{\bm\beta}_1^{(t)}, \dots, \tilde{\bm\beta}_{d-1}^{(t)}, \tilde{\bm\beta}_{d+1}^{(t-1)}, \dots, \tilde{\bm\beta}_3^{(t-1)}, \tau^{2(t-1)}, \gamma_1^{(t)}, \gamma_2^{(t)}, \gamma_3^{(t)})$, use \eqref{eq:modelBd}-\eqref{eq:fullcondBd} sequentially for $d = 1, 2, 3$ to draw
        \begin{enumerate}\setlength{\itemsep}{0pt}
        \item $\tilde{\bm\beta}_d^{(t)} \sim \mathcal{N}(\Sigma_{B_d}X_{B_d}^{*\T}y^{*}, \tau^{2(t-1)} \Sigma_{B_d}),$
        \item $\lambda_{gdj}^{2(t)} \sim \mathcal{IG} \left( 1, \frac{1}{\nu_{gdj}^{(t-1)}} + \frac{\beta_{dj}^{2(g)(t)}}{2 \tau^{2(t-1)}\delta_{g}^{2(t-1)}} \right)$, and
        $\nu_{gdj}^{(t)} \sim \mathcal{IG} \left(1, 1 + \frac{1}{\lambda_{gdj}^{2(t)}} \right), \ g = 1, \dots, K, \ j = 1, \dots, p_d.$
        \end{enumerate}
    
        \item Given $(\tilde{\bm\beta}_1^{(t)}, \tilde{\bm\beta}_2^{(t)}, \tilde{\bm\beta}_3^{(t)}, \gamma_1^{(t)}, \gamma_2^{(t)}, \gamma_3^{(t)})$, for $g = 1 \dots, K$, draw $$\delta_g^{2(t)} \sim \mathcal{IG}\left(\frac{1 + \sum_{d=1}^{3}p_d}{2}, \frac{1}{\xi_g^{(t-1)}} + \frac{1}{2\tau^{2(t-1)}}\sum_{d=1}^3 \sum_{j = 1}^{p_d}\frac{\beta_{dj}^{2(g)(t)}}{\lambda_{gdj}^{2(t)}}\right), \text{ and } \xi_{g}^{(t)} \sim \mathcal{IG} \left( 1, 1 + \frac{1}{\delta_{g}^{2(t)}} \right)$$
        \item Given $(\tilde{\bm\beta}_1^{(t)}, \tilde{\bm\beta}_2^{(t)}, \tilde{\bm\beta}_3^{(t)}, \gamma_1^{(t)}, \gamma_2^{(t)}, \gamma_3^{(t)})$, compute $\bm \beta^{(t)} = \text{vec}(\mathcal{B}^{(t)})$ using \eqref{eq:cp} and sample $$\tau^{2(t)} \sim \mathcal{IG}\left( a_0 + \frac{N + K \sum_{d=1}^3 p_d}{2}, b_0 + \frac{1}{2}\left\{(y^* - X^*\bm \beta^{(t)})^{\T}(y^* - X^*\bm \beta^{(t)}) + \sum_{g=1}^K \sum_{d=1}^3 \frac{\beta_{dj}^{2(g)(t)}}{\delta_g^{2(t)} \lambda_{gdj}^{2(t)}}\right\} \right).$$
    \end{enumerate}
\end{enumerate}
\end{algorithm}

\section{Theoretical Properties}\label{Section:tensorTheory}
In this section, we examine the asymptotic behavior of mean square prediction risk of the CoMET model. Our approach is grounded in the (matrix‑variate) methodology introduced by \citet{2025_SarKhSri}. Here, we generalize those ideas by formulating and analyzing their tensor‑variate counterparts. As emphasized in \cite{2025_SarKhSri}, deriving high‑dimensional asymptotic properties for posterior distributions in Bayesian mixed‑effects models is a very difficult problem that has received limited attention in the literature. The tensor‑variate setting considered here introduces an additional layer of complexity. Nonetheless, under a set of reasonable simplifying assumptions, we establish asymptotic results that offer both theoretical support and insight into the proposed methodology. 

We focus on third-order tensor covariates without any loss of generality, noting that our results directly extend to tensor covariates of any order. For this evaluation, we consider a setting where the (marginalized) true data generating model for the $i$-th subject is given by 
\begin{equation}\label{eq:populationmodel}
    y_i = X_i \bm\beta^0 + \epsilon_{0i}, \ \epsilon_{0i} \sim \mathcal{N}_{m_i}(0, \tau_0^2 V_{0i}), \ V_{0i} = Z_i (\Sigma_{3}^{0} \otimes \Sigma_{2}^{0} \otimes \Sigma_{1}^{0}) Z_i^{\T} + I_{m_i}, \ i = 1, \dots, n,
\end{equation}

\noindent
where $\mathcal{B}^0, \tau^2_0$ and $\Sigma^0 = \Sigma_{3}^{0} \otimes \Sigma_{2}^{0} \otimes \Sigma_{1}^{0}$ denote the "true" parameter values, $\bm\beta^0 = \text{vec}(\mathcal{B}^0)$, and 
\begin{equation}
y_i = \begin{pmatrix} y_{i1} \\ \vdots \\ y_{i m_{i}} \end{pmatrix} \in \RR^{m_i}, \ X_i = \begin{pmatrix} \text{vec}(\mathcal{X}_{i1})^{\T} \\ \vdots \\ \text{vec}(\mathcal{X}_{i m_{i}})^{\T} \end{pmatrix} \in \RR^{m_i \times p^*}, \ Z_i = \begin{pmatrix} \text{vec}(\mathcal{Z}_{i1})^{\T} \\ \vdots \\ \text{vec}(\mathcal{Z}_{i m_{i}})^{\T} \end{pmatrix} \in \RR^{m_i \times q^*}, 
\end{equation}

\noindent
with $p^* = p_1 p_2 p_3$, $q^* = q_1 q_2 q_3$, $ k^* = k_1 k_2 k_3$. On the other hand, the (marginalized) working compressed model for the $i$-th subject 
is given by 
\begin{equation}\label{eq:workingmodel}
        y_i = X_i \bm\beta + \epsilon'_{i}, \ \epsilon'_{i} \sim \mathcal{N}_{m_i}(0, \tau_0^2 C_{i}), \ C_{i} = Z_i S^{*\T} \Gamma^* R^*R^{* \T} \Gamma^{*\T} S^* Z_i^{\T} + I_{m_i}, \ i = 1, \dots, n,
\end{equation}

\noindent
where 
\begin{align}\label{eq:notations:parameters}
    S^* = S_3 \otimes S_2 \otimes S_1 &\in \RR^{k^* \times q^*}, \quad R^* = R_3 \otimes R_2 \otimes R_1 \in \RR^{k^* \times q^*}, \nonumber \\ \Gamma^* = \Gamma_3 \otimes \Gamma_2 \otimes \Gamma_1 &\in \RR^{k^* \times k^*}, \quad \bm\beta = \text{vec}(\mathcal{B}) \in \RR^{p^*}.
\end{align} 

\noindent
We establish the asymptotic decay of CoMET's prediction risk under compressed covariance structure with an unstructured fixed-effect coefficient tensor. In particular, we avoid the CP structure for $\mathcal{B}$ in (\ref{eq:cp}) and assume that $\mathcal{B}$ is full-rank for the purposes of this theoretical evaluation. Consequently, the working model assigns element-wise Horseshoe priors for $\text{vec}(\mathcal{B})$, given by
\begin{align}\label{eq:hsprior_vecB}
   \mathcal{B}_{j_1 j_2 j_3} \mid \lambda_{j_1 j_2 j_3}^2, \delta^2, \tau^2 & \sim \mathcal{N}(0, \lambda_{j_1 j_2 j_3}^2 \delta^2 \tau^2 ), \quad \lambda_{j_1 j_2 j_3} \sim \mathcal{C}^{+}(0, 1), \quad \delta \sim \mathcal{C}^{+}(0, 1), 
\end{align}
independently for $j_d = 1, \ldots, p_d$ and $ d = 1,2,3$. The squared global shrinkage parameter $\delta^2$ promotes the overall sparsity in $\mathcal{B}$ and $\lambda_{j_1 j_2 j_3}^2$ controls cell-specific shrinkage. The priors for the other parameters in the working model are identical to those prescribed in Section \ref{Section:OurModel}. 

Define $C = \diag(C_1, \ldots, C_n)$, $y \in \RR^N$ with $i$-th row block $y_i$, and $X \in \RR^{N \times p^*}$ with $i$-th row block $X_i$. Then, the compressed conditional posterior density of $\bm\beta$ given $\Lambda, \delta^2, C$ (under the working model) is given by 
\begin{align}
  \label{eq:cond_vecB}
  \bm\beta \mid \Lambda, \delta^2, C, y, X  \sim \mathcal{N} \left( \{ X^{\T} C^{-1} X + (\delta^{2} \Lambda)^{-1}\}^{-1} X^{\T} C^{-1} y, \tau_0^{2} \{ X^{\T} C^{-1} X + (\delta^{2} \Lambda)^{-1}\}^{-1}  \right),
\end{align}
where $\Lambda = \diag(\lambda_{111}^2, \lambda_{211}^2, \dots,\lambda_{(p_1-1)p_2p_3}^2, \lambda_{p_1p_2p_3}^2)$. The prediction risk of the compressed posterior based on the prior in \eqref{eq:hsprior_vecB} is
\begin{align}\label{eq:vecB_predrisk}
    \frac{1}{N}\EE \|X\bm{\beta}^0 - X \bar{\bm{\beta}}(\phi)\|_2^2 &= \frac{1}{N}\EE_{R,S,X,Z} \EE_y \{ \EE_{\phi \mid y} \|X\bm{\beta}^0 - X \bar{\bm{\beta}}(\phi)\|_2^2 \},
\end{align}
where the conditional posterior mean of $\bm\beta$ in \eqref{eq:cond_vecB} is denoted as $\bar{\bm\beta}(\phi)$ with $\phi = \{\Gamma^*, \Lambda, \delta^2\}$, $\EE_{R,S, X, Z}$, $\EE_{y}$, and $\EE_{\phi \mid y}$ respectively denote the expectations with respect to the distributions of $(R_1, R_2, R_3, S_1, S_2, S_3, X, Z_1, \dots, Z_n)$, $y$, and the conditional distribution of $\phi$ given $y$, respectively.

We now describe the regularity assumptions needed for deriving the theoretical guarantees in the current framework where the covariance parameter is compressed, but $\mathcal{B}$ is assumed to be full-rank.
\begin{enumerate}[label=(A\arabic*)]
\item The error variance $\tau^2$ is known and equals the true value $\tau_0^2$. The squared local shrinkage parameter $\lambda_{j_1 j_2 j_3}^2$ is supported on a compact domain $[a, 1/a]$ for a universal constant $a$.

\item The support for the Gaussian prior on $\gamma_d$ is restricted to the set $\{\Gamma_d \in \RR^{k_d \times k_d}: \| \Gamma_d\| \leq b_d\}$, where $b_d$ is a universal constant for $d = 1,2,3$, and $\| \cdot \|$ is the operator norm of a matrix. Consequently, using the fact that $\|A \otimes B\| = \|A\| \|B\|$ for any two matrices $A$ and $B$,  $\Gamma^*$ belongs to a class of matrices $\mathcal{G}$ with bounded operator norm, i.e., $\mathcal{G} = \{\Gamma^* \in \RR^{k^* \times k^*} : \|\Gamma^*\| \leq b^*\}$, where $b^* = b_1 b_2 b_3$.
\item The entries of the tensor covariates $\mathcal{X}_{ij}$ and $\mathcal{Z}_{ij}$ are mutually independent Gaussian random variables with mean zero and variance $\sigma_X^2$ and $\sigma_Z^2$, respectively, where $\sigma_X^2$ is a constant, and $\sigma_Z^2$ satisfies $n\sigma_Z^4 \left(\prod_{d=1}^3q_d^6/k_d^4+ \prod_{d=1}^3 m_{\max}^2/k_d^4 \right) = O(1)$, with $m_{\max} = \max(m_1, \dots, m_n)$.
\end{enumerate}

Assumption (A1) simplifies the setup by considering the error variance to be known and the local shrinkage parameters to belong to a compact domain. Assumption (A2) models the mode-specific covariance structures as low-rank matrices through controlling the "sparsity" of the compressed covariance parameters $\Gamma_d$s. Finally, Assumption (A3) specifies the distribution of the fixed- and random-effect covariates and facilitates the derivation of the mean square prediction risk, averaged over the distributions of $\mathcal{X}_{ij}$s and $\mathcal{Z}_{ij}$s. Based on these assumptions, Theorem \ref{vecB_TheoremPredRisk} quantifies the asymptotic behavior of the prediction risk under covariance compression.

\begin{theorem}
\label{vecB_TheoremPredRisk}
If the assumptions (A1)-(A3) hold, $p^* = o(N)$, $k^{*2}\log k^* = o(p^*)$, $k^{*2}\log \log N = o(p^*)$ and $\|\bm{\beta}^0 \|^2 = o(N)$, then the posterior predictive risk satisfies
\begin{align*}
   \frac{1}{N}\EE \| X \bm{\beta}^0 - X \bar{\bm{\beta}} (\phi) \|_2^2 &\leq \EE_{X}\left\{ \kappa^4(X)\right\} \Bigg[ \frac{2\| \bm{\beta}^0\|_2^{2}}{N}\left\{O\left(nb^{*8}\sigma_Z^8 \left(\prod_{d=1}^3 \frac{q_d^{12}}{k_d^8} + \prod_{d=1}^3\frac{m_{\max}^4}{k_d^8} \right) + nb^{*8}\sigma_Z^8 \right)\right\} \\
    &\qquad \qquad \qquad \qquad + \frac{4\tau_{0}^{2}}{a^{4}} \Bigg\{ \left(\frac{N-p^*}{N\log N}\right)O\left( nb^{*5}\sigma_Z^6 \left(\prod_{d=1}^3\frac{q_d^9}{k_d^6} + \prod_{d=1}^3 \frac{m_{\max}^3}{k_d^6}\right)\right) \\
        &\qquad \qquad \qquad \qquad + \left(\frac{p^* + 4e^{-p^*/8}}{N}\right)O\left(nb^{*3}\sigma_Z^4\left(\prod_{d=1}^3\frac{q_d^6}{k_d^4}+ \prod_{d=1}^3 \frac{m_{\max}^2}{k_d^4} \right)\right)\Bigg\}\Bigg] \\
    &= o(1),
\end{align*}
where $\kappa(X)$ denotes the condition number of $X$.
\end{theorem}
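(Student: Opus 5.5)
The plan is a bias--variance decomposition of the conditional posterior mean, carried out for fixed $\phi$ and then averaged over $\phi\mid y$, $y$, and $(R,S,X,Z)$.

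\textbf{Step 1: decomposition for fixed $\phi$.} Write $M(\phi)=X^{\T}C^{-1}X+(\delta^2\Lambda)^{-1}$, so that $\bar{\bm\beta}(\phi)=M^{-1}X^{\T}C^{-1}y$. Substituting $y=X\bm\beta^0+\epsilon_0$ gives
\[
X\bm\beta^0-X\bar{\bm\beta}(\phi)=XM^{-1}(\delta^2\Lambda)^{-1}\bm\beta^0-XM^{-1}X^{\T}C^{-1}\epsilon_0 .
\]
Then $(u+v)^2\le 2u^2+2v^2$ produces the factor $2$ on the bias term and a variance term. Because $\phi$ depends on $y$, the variance term cannot be handled by taking $\EE_y$ inside. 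Instead I bound $\sup_\phi\|XM^{-1}X^{\T}C^{-1}\epsilon_0\|^2$ over the compact set of $\phi$ allowed by (A1)--(A2). For the $\Lambda$ part, use that $\lambda^2\in[a,1/a]$, which is the source of the $a^{-4}$ factor.

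\textbf{Step 2: reduce to spectral quantities.} Write $X^{\T}C^{-1}X\succeq\lambda_{\min}(C^{-1})X^{\T}X$. The bias term is then bounded by
\[
\|\bm\beta^0\|^2\,\kappa^4(X)\,\lambda_{\max}(C)^{4}\times\bigl(\text{shrinkage factor}\bigr).
\]
For $\lambda_{\max}(C)=\max_i\|C_i\|$, use
\[
\|C_i\|\le 1+b^{*2}\|R^*\|^2\|S^*Z_i^{\T}\|^2 .
\]
Raising to the fourth power gives the $b^{*8}\sigma_Z^8$ term.

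For the variance term, project $C^{-1/2}\epsilon_0$ onto the column space of $C^{-1/2}X$, which has dimension $p^*$, and onto its complement, which has dimension $N-p^*$. Handle the two pieces as follows.
\begin{itemize}
\item The projected piece is a $\chi^2_{p^*}$ times $\|C^{-1/2}V_{0}C^{-1/2}\|$. Its expectation, together with a Laurent--Massart tail bound, gives $(p^*+4e^{-p^*/8})/N$.
\item The complement contributes only through the sup over $\phi$ of the $\delta^2$-dependence. The $\delta$ posterior concentrates at scale $\lesssim 1/\log N$ (as in \citet{2025_SarKhSri}), which gives the $(N-p^*)/(N\log N)$ factor.
\end{itemize}
The covering-number argument over $\mathcal G$, with dimension $k^{*2}$ and entropy $k^{*2}\log k^*$, together with the conditions $k^{*2}\log k^*=o(p^*)$ and $k^{*2}\log\log N=o(p^*)$, controls the union bound over $\Gamma^*$ in the posterior concentration of $\delta^2$.

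\textbf{Step 3: moments over the random projections and covariates.} Compute $\EE\|C_i\|^r$ for $r=2,3,4$, and similarly for $\|V_{0i}\|$. Use Gaussianity and independence (A3): bound $\|S^*Z_i^{\T}\|^2\le\|S^*Z_i^{\T}\|_F^2$ and $\|R^*\|^2\le\|R^*\|_F^2$, then take moments of Gaussian quadratic forms. The Kronecker structure factorizes these moments mode-wise into products $\prod_d q_d^{\cdot}/k_d^{\cdot}$ plus $\prod_d m_{\max}^{\cdot}/k_d^{\cdot}$ terms. Summing over $i$ by $\max_i\le\sum_i$ gives the factor $n$. Separate $\EE_X\kappa^4(X)$ from the $Z$-moments using independence of $X$ and $Z$.

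\textbf{Step 4: conclude $o(1)$.} Three facts finish the argument:
\begin{itemize}
\item Under (A3), $n\sigma_Z^4(\cdots)=O(1)$, so all bracketed $O(\cdot)$ terms are bounded.
\item $\EE\kappa^4(X)=O(1)$ when $p^*=o(N)$ for Gaussian $X$, by Marchenko--Pastur-type bounds on extreme singular values.
\item $\|\bm\beta^0\|^2/N\to0$, $p^*/N\to0$, and $1/\log N\to0$.
\end{itemize}

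The main obstacle is Step 2's variance term with data-dependent $\phi$. It requires a uniform-in-$\Gamma^*$ posterior concentration of $\delta^2$ near zero. I would first try to mimic the matrix-variate argument of \citet{2025_SarKhSri}, lifted to the Kronecker covariance class $\mathcal G$.
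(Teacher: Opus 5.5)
\textbf{Main gap: the variance term in Step 2.} You make the $(N-p^*)/(N\log N)$ term come from a posterior concentration of $\delta^2$ at scale $1/\log N$, uniformly in $\Gamma^*$, and you spend the covering argument on proving that concentration. Nothing in (A1)--(A3) supports such a statement, since the global scale $\delta$ is left unrestricted. You give no argument for it, and it is not what drives the bound.

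In your own whitened coordinates the complement piece vanishes identically. With $W=C^{-1/2}X$ and $P_W$ the projection onto its column space, $X^{\top}C^{-1}\epsilon_0=W^{\top}P_WC^{-1/2}\epsilon_0$. So there is no complement contribution for $\delta^2$ to act on.

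The paper does no posterior analysis at all; it replaces $\mathbb{E}_{\phi\mid y}$ by a supremum over $\phi$.
\begin{itemize}
\item The $(\delta^2,\Lambda)$-dependence drops out deterministically. Write $X=U\bigl(\begin{smallmatrix}D\\0\end{smallmatrix}\bigr)V^{\top}$ and $\tilde D=DV^{\top}\delta^2\Lambda VD$. Then only $\lambda^2_{\max}(\tilde D)/\lambda^2_{\min}(\tilde D)\le a^{-4}\kappa^4(X)$ enters, and $\delta^2$ cancels.
\item What is left is $\sup_{\Gamma^*}\tilde\epsilon_2^{\top}A(\Gamma^*)^{\top}A(\Gamma^*)\tilde\epsilon_2$, with $A=\tilde C_{12}\tilde C_{22}^{-1}$ and $\tilde C=U^{\top}CU$. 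The $(N-p^*)$-block appears only because the split is made in the unwhitened SVD coordinates of $X$, where $C$ couples the two blocks.
\item This supremum is controlled by a $1/\log N$-net of $\mathcal{G}$ of size $\lceil 2b^*k^*\log N\rceil^{k^{*2}}$ together with the Lipschitz bound of Lemma~\ref{auxlem2}. The discretization error is of order $\|\tilde\epsilon_2\|^2/\log N$, with $\mathbb{E}\|\tilde\epsilon_2\|^2=\tau_0^2(N-p^*)$; this is exactly the $(N-p^*)/\log N$ term.
\item The maximum over the net is handled by a union bound with $\chi^2_{p^*}$ tails. This is where $k^{*2}\log k^*=o(p^*)$ and $k^{*2}\log\log N=o(p^*)$ enter, and where $p^*+4e^{-p^*/8}$ comes from.
\end{itemize}
Your whitened route can be repaired the same way. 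Since $WM^{-1}W^{\top}\preceq P_W$, the variance term is at most $\|C\|\,\|P_WC^{-1/2}\epsilon_0\|^2$, which is free of $(\delta^2,\Lambda)$. What you then need is a net over $\Gamma^*$, not a concentration statement about $\delta^2$.

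\textbf{Further issues.} Your mechanism also works against your bias term. The unspecified ``shrinkage factor'' is of order $\{a^2\delta^4s^2_{\min}(X)\}^{-1}$, because $\|XM^{-1}\|\le\lambda_{\max}(C)s_{\max}(X)/s^2_{\min}(X)$ and $\|(\delta^2\Lambda)^{-1}\boldsymbol{\beta}^0\|\le\|\boldsymbol{\beta}^0\|/(a\delta^2)$. The bias therefore needs $\delta^2$ kept away from zero under the posterior, which is the opposite of what you invoke for the variance.

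The paper records a $\delta$-free bound $\kappa^4(X)\|\boldsymbol{\beta}^0\|_2^2$ at this point. Its own intermediate inequality, however, only gives $\kappa^2(X)\|\boldsymbol{\beta}^0\|_2^2/\{a^2\delta^4s^2_{\min}(X)\}$. So this $\delta$-dependence has to be confronted explicitly rather than absorbed into a constant.

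Finally, the Frobenius relaxations in Step 3 throw away the compression gain. $\mathbb{E}\|R_d\|_F^2=q_d$, whereas the operator-norm moment bounds used in Lemma~\ref{auxlem3} give $\mathbb{E}\|R_d\|^{2r}\lesssim k_d^{-r}(q_d^r+k_d^r)$. With Frobenius norms you cannot reach the $\prod_d q_d^{\cdot}/k_d^{\cdot}$ rates in the statement, and (A3) would no longer make the bracketed terms $O(1)$.
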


Theorem~\ref{vecB_TheoremPredRisk} justifies the practical utility of the CoMET model. It shows that the mode-specific compression dimensions $(k_1,k_2,k_3)$ can be chosen such that $k^* \ll p^*$ and $k^* \ll N$ while still yielding asymptotically negligible predictive risk. These dimensions control the complexity of the covariance compression, and computational tractability requires $k_d \ll q_d$ for $d=1,2,3$. In the random-slope setting we consider, $q_d \le p_d$ for each mode $d$ and $q_d = p_d$ when $\mathcal{Z}_{ij}$ uses the same tensor features as $\mathcal{X}_{ij}$; therefore, choosing $k_d \ll q_d$ implies that $k_d \ll p_d$, which satisfies the assumptions of the theorem. The proof of Theorem~\ref{vecB_TheoremPredRisk} is given in Section~\ref{Theory_vecB} of the supplementary material.

We establish Theorem~\ref{vecB_TheoremPredRisk} under the simplified assumption that the fixed-effect coefficient tensor $\mathcal{B}$ is unstructured. A natural extension would quantify predictive risk in terms of the CP-rank $K$ of $\mathcal{B}$, in addition to $N$, $n$, and $p_d, q_d, k_d$ for $d=1, 2, 3$. Proving such a result would require non-trivial extensions of the arguments used for Theorem~\ref{vecB_TheoremPredRisk}, and we leave this analysis to future work. We, however, conjecture that an analogous result holds because our CoMET implementation combines covariance compression with shrinkage on a low-rank CP decomposition of $\mathcal{B}$ (see Algorithm~\ref{algoCoMET:1}), and the empirical results in the next section show strong performance in estimation accuracy, prediction, and uncertainty calibration.

\section{Simulated Data Analyses}\label{Section:Simulations}
\subsection{Setup}\label{Section:Sim_Setup}

We evaluated the CoMET model's empirical performance in simulations with two-way tensors ($D=2$). We set $p_d = q_d = 32$ for $d = 1, 2$  and generated $\Bcal \in \RR^{32 \times 32}$ from the CP representation \eqref{eq:cp} with rank $K=4$. The factor matrices $B_d \in \RR^{32 \times 4}$ were generated by sampling $\lceil 0.25 p_1 K\rceil$ entries of $B_1$ and $\lceil 0.25 p_2 K\rceil$ entries of $B_2$ from $\{-2, -1, 1, 2\}$ with replacement. All the remaining $B_1$ and $B_2$ entries are set to zero. For the random-effects, we used a separable covariance with an equicorrelation structure in each mode. Specifically, for $d = 1, 2$, we set $(\Sigma_{d})_{jj^{\prime}} = 0.5$ for any $j, j^{\prime} \in \{1, \dots, q_d\} $ if $ j \neq j^{\prime}$ and $(\Sigma_{d})_{jj} = 1$ 
for any $j\in \{1, \dots, q_d\}$.  This full-rank mode-specific covariance structure allowed us to assess the performance of quasi-likelihood methods, including CoMET, that used low-rank covariance approximations. Finally, the idiosyncratic error variance was $\tau^2 = 0.1$.

Using these parameters, we simulated the clustered data using \eqref{eq:Origmodel} for $n = 100$ subjects with balanced cluster sizes $m_1 = \dots = m_n = m \in \{3, 6, 9, 12\}$. The entries of $\mathcal{X}_{ij}$s and $\mathcal{Z}_{ij}$s were independently generated from $\mathcal{N}(0, 1)$ distribution. Each of these settings with four different sample sizes were replicated 25 times. An independent validation set of $n^* = 50$ new subjects was generated corresponding to each of these training datasets to assess the predictive performance.

We applied the CoMET model to the simulated data using Algorithm \ref{algoCoMET:1} and examined its sensitivity to different choices of fixed-effect ranks and covariance compression dimensions. For simplicity, we imposed the same compression dimensions across all modes, by setting $k_1 = k_2 = k$ and varied $k \in \{\log(32) \approx 3, 6, 9\}$ and CP-rank $K \in \{1, 2, 4, 6, 8\}$. These design choices aligned with low-rank structural assumptions of CoMET. For a weakly informative prior specification, the hyperparameters in \eqref{eq:hsprior_2} were set as $a_0 = b_0 = 0.01$ and $\sigma_d^2 = 1$, for all $d$. We ran the Gibbs sampler described in Algorithm \ref{algoCoMET:1} for 11,000 iterations and discarded the first 1,000 iterations as burn-in.

We compared the CoMET model against an oracle variant of CoMET (oracle), and penalized competitors based on quasi-likelihood (PQL) \citep{FanLi12, Lietal21}, and generalized estimating equations (GEE) \citep{2019_Zhang_etal}. The oracle fixes the covariance components in \eqref{eq:Origmodel} at their true values and varies the CP-rank $K$ of $\Bcal$. For each choice of $K$, the oracle draws $(\mathcal{B}, \tau^2)$ using Algorithm \ref{algoCoMET:1} while treating the covariance components as known. The oracle's performance serves as a benchmark for evaluating the impact of covariance misspecification in the CoMET model across different choices of $K$.

The PQL methods have two variants, PQL-1 and PQL-2. We applied them after vectorizing the fixed- and random-effects covariates in \eqref{eq:Origmodel}.  PQL-1 approximates the $q^* \times q^*$ random-effects covariance matrix $\Sigma = \Sigma_2 \otimes \Sigma_1$ using $\log(N)I_{nq^*}$ as its proxy \citep{FanLi12}. In contrast, PQL-2 selects the scaling factor by cross-validation and incorporates a debiasing step for estimating $\mathcal{B}$ \citep{Lietal21}. Following \cite{Lietal21}, we 
set the penalty parameter for PQL-2 as $\lambda = \hat{\tau} \{\log(p^*)/N\}^{1/2}$, where $\hat{\tau}$ is estimated using the scaled lasso \citep{SunZha12}. PQL-1 has no debiasing step, so we follow the PQL-2 method for debiasing $\Bcal$ estimates and constructing confidence intervals using the penalty parameter that minimizes the cross-validation error (i.e., $\lambda = \lambda_{\min}$). Finally, PQL methods are unaffected by the choices of $k$ and $K$ because they vectorize the covariates and do not exploit the low-rank tensor structure.

We implemented the GEE approach using the \texttt{Sparsereg} MATLAB package \citep{sparsereg_lib} with an equicorrelation matrix as the working correlation structure for the marginal error term, together with a lasso penalty for estimating $\mathcal{B}$. The tuning parameter involved in penalization was chosen through cross-validation. Because this penalized GEE approach does not incorporate a debiasing procedure, we excluded it from the confidence interval comparisons. Similar to the PQL methods, GEE's performance is unaffected by the choices of $k$ and $K$.

We compared the five methods using metrics for fixed-effects estimation and inference and out-of-sample prediction. Fixed-effects estimation accuracy was assessed via the root mean square error (RMSE) of $\mathcal{B}$, and inference was assessed using the empirical coverage and mean width of 95\% credible or confidence intervals. The out-of-sample predictive accuracy was assessed using the root mean squared prediction error (RMSPE). Specifically,
\begin{equation}\label{eq:rmse_rmspe}
\begin{aligned}
\mathrm{RMSE}(\hat{\mathcal{B}})
&= \left\{\frac{\|\mathcal{B}-\hat{\mathcal{B}}\|_F^2}{p^*}\right\}^{1/2},\quad
\mathrm{RMSPE}(\hat y)
&= \left[\frac{1}{n^*}\sum_{i=1}^{n^*}\left\{\frac{1}{m_i}\sum_{j=1}^{m_i}(y^*_{ij}-\hat y_{ij})^2\right\}\right]^{1/2},
\end{aligned}
\end{equation}
where $\|\cdot\|_F$ denotes the Frobenius norm, $y^*_{ij}$ and $\hat y_{ij}$ are the true and predicted $j$-th responses for test subject $i$, and $n^*$ is the number of test subjects. A point estimate of $\mathcal{B}$ is denoted as $\hat{\mathcal{B}}$, which equals the posterior median for oracle and CoMET, and the regularized estimator for the PQL-1, POL-2, and GEE methods. Finally, we assessed posterior predictive uncertainty quantification by comparing CoMET and oracle in terms of the empirical coverage and mean width of their 95\% prediction intervals.

\subsection{Empirical Results}

The CoMET model outperformed its penalized competitors in prediction and estimation of $\Bcal$ for every choice of CP-rank $K$ and sample size when the compressed covariance dimension was fixed at $k=3$ (Figure~\ref{fig:rmsermspe_equicorr_k3}). In the small-sample regime ($m=3,6$), CoMET achieved lower RMSE and RMSPE than  PQL-1, PQL-2, and GEE methods, and closely matched the oracle benchmark across all choices of $K$. For larger sample sizes ($m=9,12$), PQL-1 outperformed CoMET and the oracle when $K=1$, but the performance of CoMET and the oracle improved as $K$ increased. CoMET's relatively poor performance at $K=1$ is consistent with bias due to underspecification of the CP-rank of $\Bcal$. These findings are robust to the choice of compression dimension, with similar conclusions holding for $k=6$ and $k=9$; see Section~\ref{Section:supp_numresults} of the supplementary material. CoMET's strong performance despite the covariance misspecification highlights the effectiveness of its compression strategy for both fixed-effect estimation and prediction.

The credible intervals for $\Bcal$ in the oracle and CoMET methods attain nominal coverage when the CP-rank $K$ is sufficiently large (Figure~\ref{fig:covgwidthCI_equicorr_k3}). When $K$ is underspecified (i.e., $K=1,2$), the credible intervals for both methods exhibit undercoverage. In contrast, for moderate CP-ranks (i.e., $K=4,6,8$), CoMET achieves near-nominal coverage and yields substantially narrower intervals than the debiased PQL-1 and PQL-2 estimators, while retaining a similar performance to the oracle. These patterns hold for higher sample sizes and covariance compression dimensions; see Section~\ref{Section:supp_numresults} of the supplementary material for results with $k=6$ and $k=9$.

The oracle and CoMET have comparable posterior predictive uncertainty quantification. Across all combinations of $k$, $K$, and $m$, CoMET’s 95\% prediction intervals attain near-nominal empirical coverage, closely matching the oracle's performance  (Table~\ref{tab:PredUQ_covgPI_equicorr}). For small covariance compression dimensions (i.e., $k=3$), the widths of prediction intervals obtained using the oracle and CoMET are nearly identical (Figure~\ref{fig:PredUQ_widthPI_equicorr}). For larger compression dimensions (i.e., $k=6,9$), CoMET’s prediction intervals are slightly wider than oracle’s, but the gap decreases as the CP-rank $K$ of the fixed-effects coefficients increases. These results demonstrate that CoMET provides calibrated predictive uncertainty and that moderate choices of $K$ strike a favorable balance between prediction-interval width and nominal coverage across a broad range of sample sizes and compression dimensions.

In summary, with appropriate choices of the covariance compression dimension $k$ and the fixed-effects coefficient's CP-rank $K$, CoMET outperforms existing penalized methods in both prediction and fixed-effects inference. For these  metrics, CoMET closely matches the oracle's performance, even when the compression dimensions are small relative to the true dimension of the covariance components. The stability of these results across different values of $m$ highlights CoMET's practical utility for fixed-effects inference and prediction when the random-effects covariance structure is unknown and high-dimensional.

\begin{figure}[htbp]
    \centering
    \includegraphics[width=6.5in,height=6in]{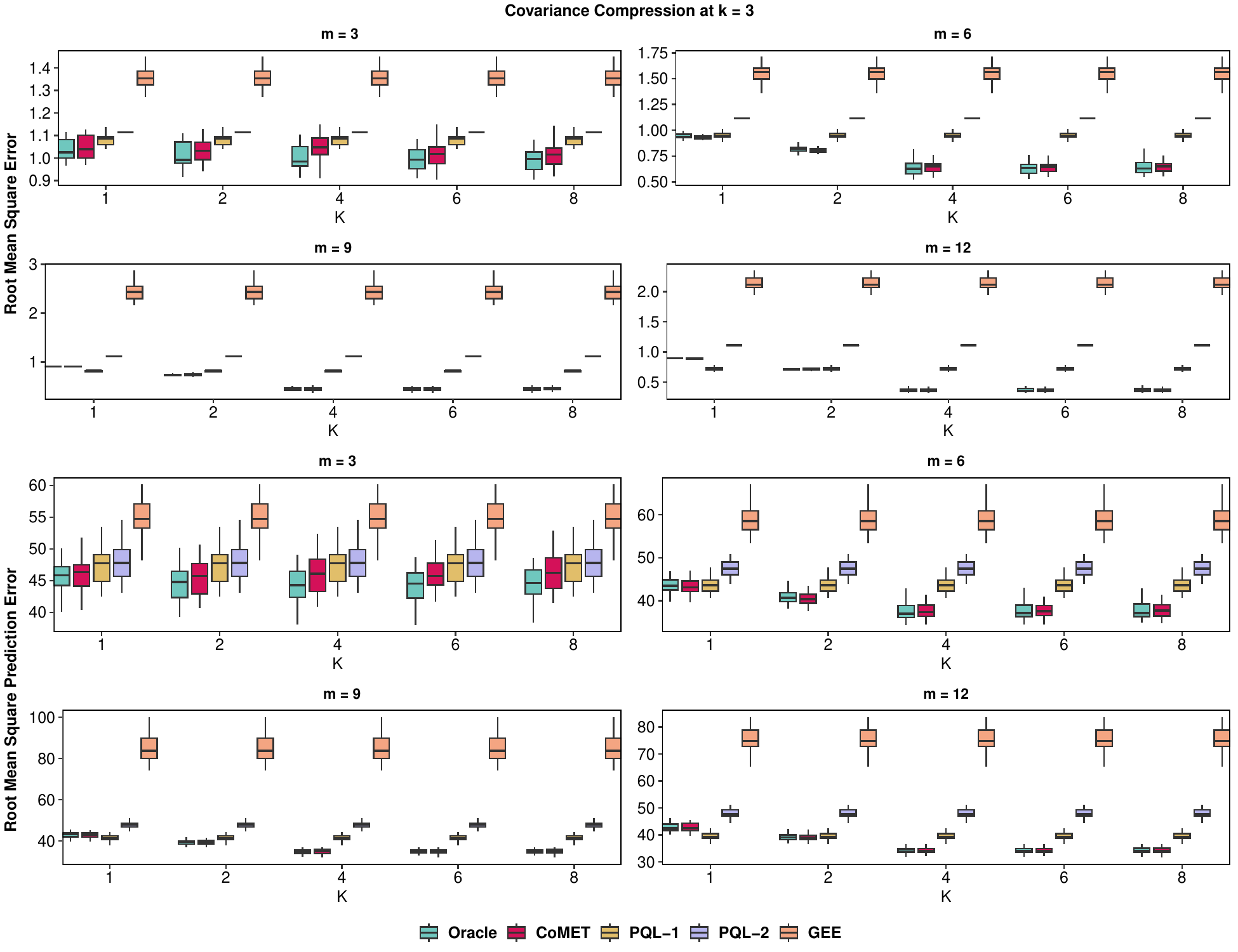}
    \caption{Comparison of RMSE and RMSPE, defined in \eqref{eq:rmse_rmspe}. CoMET substantially outperforms existing penalized quasi-likelihood methods in estimation of $\mathcal{B}$ (RMSE) and out-of-sample prediction (RMSPE) across various choices of fixed-effect rank $K$ and cluster sizes $m \in \{3, 6, 9, 12\}$. Results are presented for compressed covariance dimension $k = 3$, summarized over 25 replications. CoMET, the compressed mixed-effects tensor model; oracle, the oracle benchmark; PQL-1, the penalized quasi-likelihood approach of \citet{FanLi12}; PQL-2, the penalized quasi-likelihood approach of \citet{Lietal21}; GEE, the penalized generalized estimating equations method of \citet{2019_Zhang_etal}.}
    \label{fig:rmsermspe_equicorr_k3}
\end{figure}

\begin{figure}[htbp]
    \centering
    \includegraphics[width=6.5in,height=6in]{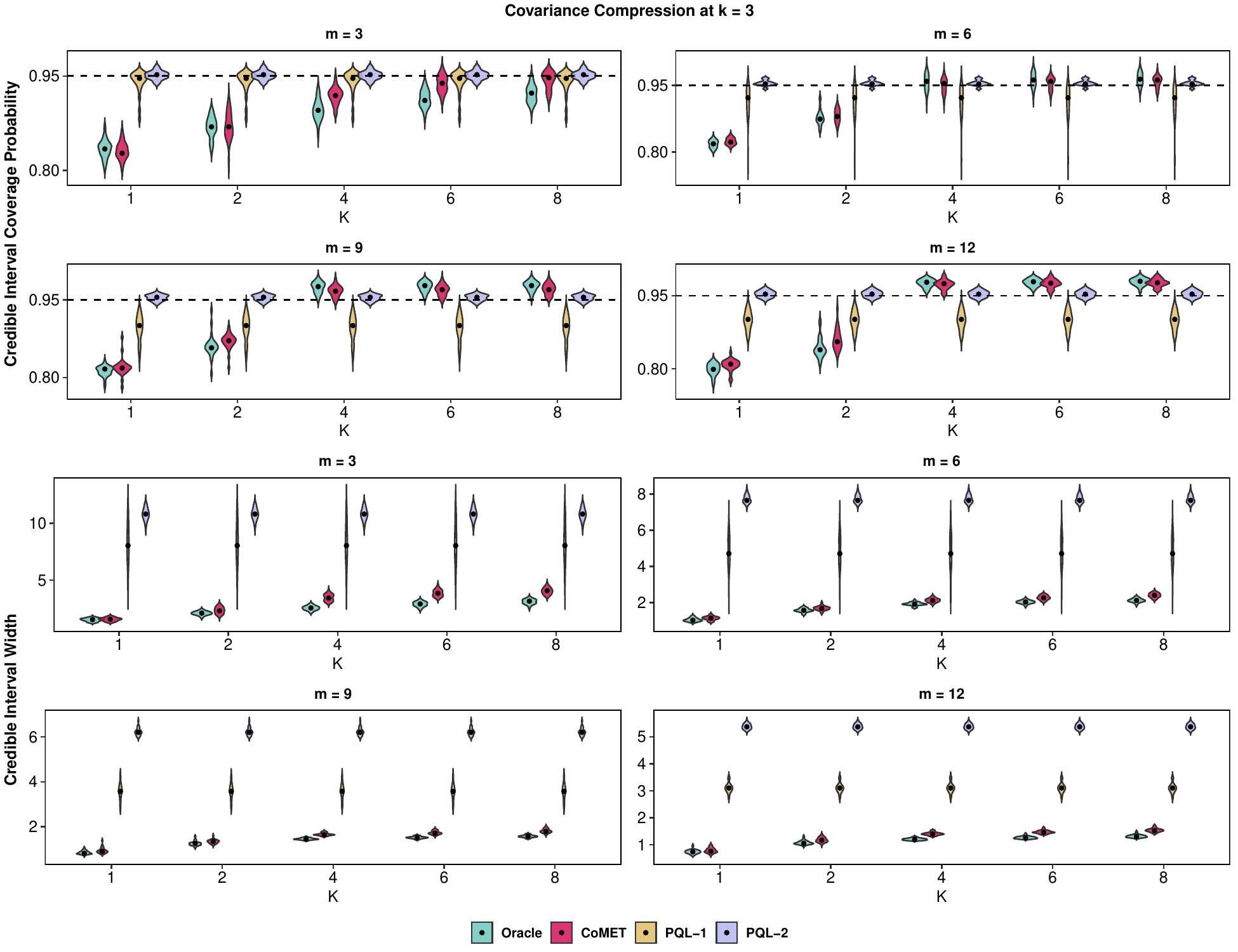}
    \caption{Fixed-effects inference comparisons. The CoMET model produces narrower credible intervals for $\mathcal{B}$ entries than the PQL methods when $k = 3$ and $K \in \{4, 6, 8\}$, along with achieving near-nominal coverage across all cluster sizes $m \in \{3, 6, 9, 12\}$. All metrics are summarized over 25 replications. CoMET, the compressed mixed-effects tensor model; oracle, the oracle variant of CoMET; PQL-1, the penalized quasi-likelihood approach of \citet{FanLi12}; PQL-2, the penalized quasi-likelihood approach of \citet{Lietal21}. GEE, the penalized generalized estimating equations approach of \citet{2019_Zhang_etal}, is excluded because of absence of debiasing technique for confidence intervals construction.}
    \label{fig:covgwidthCI_equicorr_k3}
\end{figure}

\begin{table}[htbp]
\centering
\caption{Comparison of the empirical coverage of CoMET's and oracle's posterior predictive intervals. CoMET yields 95\% prediction intervals with near-nominal empirical coverage across all choices of compressed covariance dimension ($k$) and fixed-effect coefficient's rank ($K$) for various cluster sizes ($m$), and closely matches the oracle's performance. Reported values are the median coverage probability over 25 replications, with subscripts indicating the corresponding quartile deviation.}
\scalebox{0.78}{
\begin{tabular}{|c|c|cc|cc|cc|cc|cc|}
\hline
\multirow{3}{*}{$k$} & \multirow{3}{*}{Cluster Sizes} & \multicolumn{10}{c|}{$K$}\\ \cline{3-12}
& & \multicolumn{2}{c|}{1} & \multicolumn{2}{c|}{2} & \multicolumn{2}{c|}{4} & \multicolumn{2}{c|}{6} & \multicolumn{2}{c|}{8} \\ \cline{3-12}
& & Oracle & CoMET  & Oracle & CoMET & Oracle & CoMET & Oracle & CoMET & Oracle & CoMET\\ \hline
\multirow{6}{*}{$3$} & & & & & & & & & & & \\
& $m = 3$ & $0.96_{0.01}$ & $\boldsymbol{0.95_{0.01}}$ & $0.95_{0.01}$ & $\boldsymbol{0.96_{0.01}}$ & $0.95_{0.02}$ & $\boldsymbol{0.95_{0.02}}$ & $0.94_{0.01}$ & $\boldsymbol{0.94_{0.01}}$ & $0.94_{0.01}$ & $\boldsymbol{0.93_{0.02}}$\\
& $m = 6$ & $0.96_{0.01}$ & $\boldsymbol{0.95_{0.01}}$ & $0.96_{0.01}$ & $\boldsymbol{0.96_{0.01}}$ & $0.95_{0.01}$ & $\boldsymbol{0.96_{0.02}}$ & $0.95_{0.02}$ & $\boldsymbol{0.95_{0.02}}$ & $0.95_{0.01}$ & $\boldsymbol{0.95_{0.02}}$\\
& $m = 9$ & $0.96_{0.01}$ & $\boldsymbol{0.95_{0.01}}$ & $0.96_{0.01}$ & $\boldsymbol{0.95_{0.01}}$ & $0.95_{0.01}$ & $\boldsymbol{0.94_{0.01}}$ & $0.95_{0.01}$ & $\boldsymbol{0.95_{0.01}}$ & $0.95_{0.01}$ & $\boldsymbol{0.95_{0.01}}$\\
& $m = 12$ & $0.96_{0.00}$ & $0.95_{0.00}$ & $0.96_{0.01}$ & $0.95_{0.01}$ & $0.95_{0.01}$ & $0.95_{0.01}$ & $0.95_{0.01}$ & $0.95_{0.01}$ & $0.95_{0.01}$ & $0.95_{0.01}$\\
& & & & & & & & & & & \\ \hline
\multirow{6}{*}{$6$} & & & & & & & & & & & \\
& $m = 3$ & $0.96_{0.01}$ & $\boldsymbol{0.95_{0.01}}$ & $0.95_{0.01}$ & $\boldsymbol{0.95_{0.01}}$ & $0.95_{0.02}$ & $\boldsymbol{0.95_{0.02}}$ & $0.94_{0.01}$ & $\boldsymbol{0.95_{0.02}}$ & $0.94_{0.01}$ & $\boldsymbol{0.95_{0.02}}$\\
& $m = 6$ & $0.96_{0.01}$ & $0.97_{0.01}$ & $0.96_{0.01}$ & $0.96_{0.01}$ & $0.95_{0.01}$ & $0.96_{0.01}$ & $0.95_{0.02}$ & $0.95_{0.01}$ & $0.95_{0.01}$ & $0.95_{0.01}$\\
& $m = 9$ & $0.96_{0.01}$ & $0.96_{0.01}$ & $0.96_{0.01}$ & $0.96_{0.01}$ & $0.95_{0.01}$ & $0.96_{0.01}$ & $0.95_{0.01}$ & $0.96_{0.01}$ & $0.95_{0.01}$ & $0.96_{0.01}$\\
& $m = 12$ & $0.96_{0.00}$ & $0.95_{0.00}$ & $0.96_{0.01}$ & $0.96_{0.01}$ & $0.95_{0.01}$ & $0.96_{0.00}$ & $0.95_{0.01}$ & $0.95_{0.01}$ & $0.95_{0.01}$ & $0.96_{0.01}$\\
& & & & & & & & & & & \\ \hline
\multirow{6}{*}{$9$} & & & & & & & & & & & \\
& $m = 3$ & $0.96_{0.01}$ & $\boldsymbol{0.95_{0.02}}$ & $0.95_{0.01}$ & $\boldsymbol{0.95_{0.02}}$ & $0.95_{0.02}$ & $\boldsymbol{0.95_{0.02}}$ & $0.94_{0.01}$ & $\boldsymbol{0.95_{0.01}}$ & $0.94_{0.01}$ & $\boldsymbol{0.95_{0.02}}$\\
& $m = 6$ & $0.96_{0.01}$ & $0.96_{0.01}$ & $0.96_{0.01}$ & $0.96_{0.01}$ & $0.95_{0.01}$ & $0.95_{0.01}$ & $0.95_{0.02}$ & $0.95_{0.01}$ & $0.95_{0.01}$ & $0.95_{0.01}$\\
& $m = 9$ & $0.96_{0.01}$ & $0.96_{0.01}$ & $0.96_{0.01}$ & $0.96_{0.01}$ & $0.95_{0.01}$ & $0.95_{0.01}$ & $0.95_{0.01}$ & $0.95_{0.01}$ & $0.95_{0.01}$ & $0.95_{0.01}$\\
& $m = 12$ & $0.96_{0.00}$ & $0.99_{0.00}$ & $0.96_{0.01}$ & $0.98_{0.00}$ & $0.95_{0.01}$ & $0.98_{0.00}$ & $0.95_{0.01}$ & $0.98_{0.01}$ & $0.95_{0.01}$ & $0.97_{0.01}$\\
& & & & & & & & & & & \\ \hline
\end{tabular}
}
\label{tab:PredUQ_covgPI_equicorr}
\end{table}

\begin{figure}[htbp]
    \centering
    \includegraphics[width=0.8\linewidth,height=3.2in]{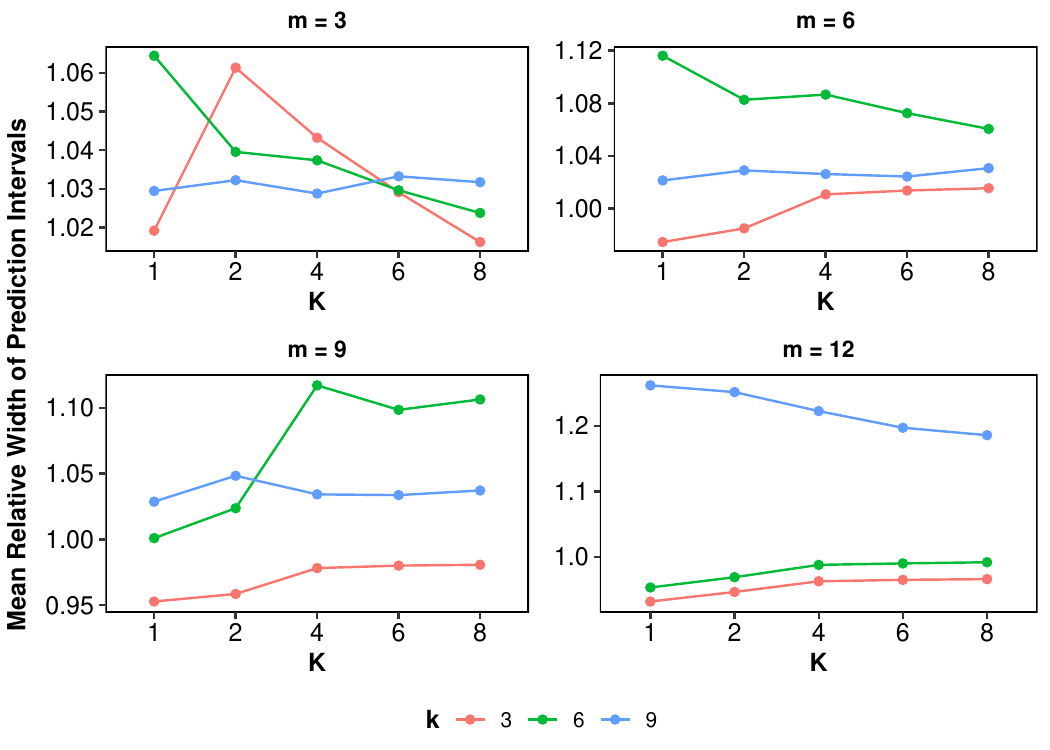}
    \caption{Posterior predictive interval widths of CoMET relative to those of the oracle. The 95\% prediction intervals produced by CoMET are comparable in width to those by the oracle benchmark for all choices of cluster size ($m$), covariance compression dimension ($k$), and fixed-effect rank ($K$). For each replication, relative width is computed by dividing the width of a prediction interval of CoMET by that of oracle. Across 25 replicates, the relative widths show minimal variability with standard deviations on the order of $10^{-2}$.}
  \label{fig:PredUQ_widthPI_equicorr}
\end{figure}

\section{Real Data Analyses}\label{Section:RealDataAnalysis}

We applied CoMET to two publicly available datasets, DEAM and LFW, introduced earlier. In the DEAM analysis, we predicted a song's emotion scores using matrix-valued acoustic features observed repeatedly over time. In the LFW analysis, we predicted a celebrity's facial traits from multiple face images represented as third-order tensor covariates. We compared CoMET's predictive performance with PQL-1, PQL-2 and GEE methods, with no fixed-effects estimation comparisons due to the absence of the ground truth. We did not include the oracle because the true covariance structure was unknown in these data.

\subsection{Music Emotion Recognition using DEAM Data}\label{Section:DEAMData}

Understanding human emotion perception through musical acoustics is a key problem in music emotion recognition. The DEAM data is one of the largest publicly available data in this domain, comprising emotion annotations for 1802 songs spanning a wide range of genres \citep{2017_DEAM}. For benchmarking the methods under consideration, we used the first 200 songs from DEAM's 2014 evaluation set.

Each selected song is accompanied by time-varying annotations of two kinds of emotions: valence and arousal. Valence reflects the degree of positive or negative emotion expressed in the song, and arousal captures the perceived energy of the song. These averaged dynamic annotations, generated at a 2Hz sampling rate, are measured on a continuous scale of $[-1, 1]$. The DEAM data also includes 260 features, comprising mean, standard deviation, and first-derivative summaries of acoustic characteristics (e.g., MFCCs and spectral and timbre features), computed over the interval from $15$ to $43.5$ seconds at $0.5$ second increments.

We used the first principal component of valence and arousal as the response, which captures the dominant emotion dimension. We focused on the 12 spectral descriptors in the data because spectral dynamics are primary drivers of perceived emotion \citep{2023_PandaSurvey}. We represented them as a two-way tensor covariates . The first mode indexed the spectral descriptors (e.g., centroid, flux, harmonicity), and the second mode indexed four statistics: mean, standard deviation, mean of first-order derivatives, and standard deviation of first-order derivatives. This resulted in matrix-valued fixed- and random-effects covariates with dimensions $p_1=q_1=12$ and $p_2=q_2=4$. We standardized each covariate entry across all observations to have mean zero and variance one. For each song, we averaged the response and covariates over pairs of consecutive $0.5$ second timestamps, resulting in $m=29$ repeated measurements per song. We randomly assigned 70\% of the songs ($n=140$) to the training set and held out the remaining 30\% as the test set ($n^*=60$). We repeated this train-test split 25 times to assess the stability of the results.

We used Algorithm \ref{algoCoMET:1} to fit the CoMET model and examined the sensitivity of its performance across various combinations of $(k, K)$. Because $\Bcal$ is a $12 \times 4$ matrix, its maximum CP-rank is 4; therefore, we set $K \in \{\log(12) \approx 2, 4\}$. Covariance compression was applied symmetrically across the two modes with $k_d = k \in \{2, 4\}$ for $d = 1, 2$. Following the simulation studies, we set the hyperparameters as $\sigma_d^2 = 1$ for $d = 1, 2$, and $a_0 = b_0 = 0.01$, and ran the Gibbs sampler for 11,000 iterations, discarding the first 1,000 iterations as burn-in. The setup for the penalized competitors was kept identical to that described in the simulations; see Section \ref{Section:Simulations}.

CoMET achieved the lowest RMSPE relative to PQL-1 and PQL-2 across all choices of $(k,K)$ (Figure~\ref{fig:rmspe_RealData}), demonstrating strong out-of-sample predictive performance and robustness to the choice of compression dimension and fixed-effects coefficients rank. Compared to GEE, CoMET yielded competitive predictive accuracy for all $(k, K)$, while exhibiting greater stability across data splits. To assess predictive uncertainty, we examined CoMET's 95\% posterior predictive intervals across $(k,K)$ and found that $(k,K)=(4,4)$ yields the narrowest intervals while maintaining a median empirical coverage of 0.96 (Table~\ref{tab:UQ_cometRealData}). At $(k,K)=(4,4)$, the CoMET estimate of $\mathcal{B}$ suggests that the average spectral centroid is most predictive of the dominant perceived emotion (Figure~\ref{fig:B_estimate_DEAM}). PQL-1 and GEE estimated $\Bcal$ with similar sparsity patterns. Compared to CoMET, the non-zero entries estimated by PQL-1 exhibit stronger shrinkage resulting in smaller magnitudes, whereas those from GEE have larger magnitudes. On the other hand, PQL-2 shrinks nearly every entry of $\mathcal{B}$ to zero. In summary, DEAM data analysis results indicate that CoMET captures the repeated measure structure of the acoustic features to improve emotion prediction, provides well-calibrated predictive uncertainty, and identifies the key acoustic features predictive of the dominant emotion.

\subsection{Facial Trait Recognition using LFW Data}\label{Section:LFWData}

The LFW data is a benchmark repeated-measures dataset in face recognition applications \citep{2007_LFWData,2016_LFWSurvey}. The database comprises over 13,000 publicly available images and 73 real-valued facial attributes of 5,749 individuals \citep{2009_Kumar}. Each image of the face of an individual is accompanied by several attributes capturing a wide range of characteristics measured on a continuous scale, including facial expressions (e.g., smiling, frowning), facial features (e.g., nose and mouth shape, hair color), imaging conditions (e.g., lighting type and blurriness), among others. Our objective is to predict these facial characteristics based on one or more images available per individual.

We pre-processed the dataset in three steps. First, we extracted individuals with at least 4 and at most 10 images. This filtering ensures that the data is computationally manageable. The resultant dataset consists of $N = 2421$ images for $n = 447$ subjects with $m_i \in \{ 4, \dots, 10\}$. Second, we computed the first principal component of the 73 facial attributes to construct a scalar response as an interpretable summary while retaining the largest proportion of variability among these related attributes. Finally, we resized each $250 \times 250 \times 3$ image into a $32 \times 32 \times 3$ tensor using the R package \emph{imager}, where the first two modes represent the spatial pixel locations and the third mode indexes the red-green-blue (RGB) color channels. This yielded three-way tensor-valued fixed- and random-effect covariates with dimensions $p_1 = p_2 = q_1 = q_2 = 32$ and $p_3 = q_3 = 3$. We standardized each covariate entry across all observations to have mean zero and variance one.

To implement the CoMET model, we allowed the covariance compression to vary across the modes, reflecting the fact that not all the modes are high-dimensional in this scenario. Specifically, for the first two modes corresponding to pixels, we set the compression dimensions as $k_1 = k_2 = k \in \{\log(32) \approx 3, 6, 9\}$, while retaining the original dimension for the third mode with $k_3 = q_3 = 3$. For simplicity, we varied the fixed-effect CP-rank over the same grid, $K \in \{3, 6, 9\}$. Similar to Section \ref{Section:Simulations}, we ran the Gibbs sampler for 11,000 iterations, discarding the first 1,000 iterations as burn-in with the same choices of hyperparameters for CoMET. The tuning parameters for the penalized methods were also kept unchanged. The equicorrelation structure failed to produce a valid covariance matrix for the LFW data, so we used the independent working correlation structure because GEE remains consistent under working correlation misspecification \citep{2019_Zhang_etal}. We compared the out-of-sample predictive performance of these methods by randomly choosing 70\% of the subjects $(n = 313)$ for model training and using the remaining 30\% individuals $(n^* = 134)$ for validation. We repeated this test-train split 25 times.

Across combinations of the covariance compression dimension $k$ and fixed-effects  coefficient's CP-rank $K$, CoMET achieved predictive accuracy that was comparable or better than its penalized competitors (Figure~\ref{fig:rmspe_RealData}). CoMET's 95\% posterior predictive intervals also attained near-nominal empirical coverage across all $(k,K)$ (Table~\ref{tab:UQ_cometRealData}). Among all the choices of $(k, K)$, $(k,K)=(9,6)$ provided the most favorable tradeoff between the predictive interval width and coverage, yielding the narrowest prediction intervals with empirical coverage closest to 0.95. CoMET with $(k,K)=(9,6)$ also outperforms its competitors in predictive accuracy. To interpret the $\Bcal$ estimates, we examined the frontal slices of $\mathcal{B}$ corresponding to the RGB channels. For each channel, the CoMET estimate for $(k,K)=(9,6)$ assigns the largest signal to the central facial region, indicating that this region is most informative for predicting the facial attribute (Figure~\ref{fig:B_estimate_LFW3D}). In contrast, the PQL methods produce substantially sparser coefficient estimates, possibly due to over-penalization. GEE, on the other hand, yields comparatively denser estimates of the frontal slices of $\mathcal{B}$ than CoMET, while producing the largest signals in the central region, consistent with the pattern observed for CoMET. In summary, these results suggest that when $(k,K)$ are low-to-moderate relative to the original tensor dimensions, CoMET achieves accurate prediction, provides reliable predictive uncertainty quantification, and identifies image regions predictive of facial traits.

\begin{figure}[h!]
    \centering
    \includegraphics[width=0.85\linewidth,height=3in]{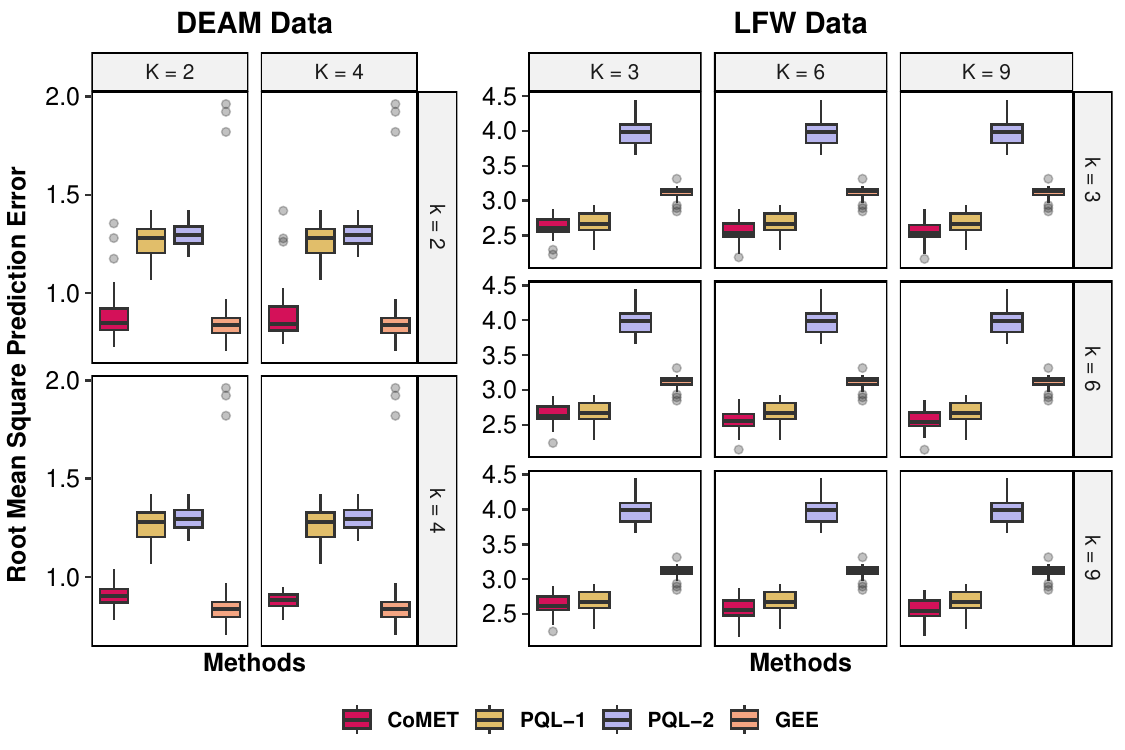}
    \caption{Comparison of RMSPE, defined in \eqref{eq:rmse_rmspe}, in real-data applications (\textbf{Left:} DEAM data and \textbf{Right:} LFW data). The CoMET model demonstrates either competitive or superior predictive accuracy across all covariance compression dimensions ($k$) and fixed-effects rank ($K$) compared to the penalized methods, PQL-1 \citep{FanLi12}, PQL-2 \citep{Lietal21} and GEE \citep{2019_Zhang_etal}. Results are summarized over 25 random splits of each of the datasets.}
  \label{fig:rmspe_RealData}
\end{figure}

\begin{table}[htbp]
\centering
\caption{Results for empirical coverage probability and width of 95\% prediction intervals produced by CoMET in real-data applications (\textbf{Left:} DEAM data and \textbf{Right:} LFW data). CoMET yields well-calibrated posterior predictive intervals for varying covariance compression dimensions ($k$) and fixed-effect ranks ($K$). Reported values are median coverage probability summarized over 25 random splits of each of the real datasets, with quartile deviation reported in subscript to assess the variability across splits.}

\makebox[0.8\textwidth][c]{
\begin{subtable}[t]{0.32\textwidth}
\centering
\caption{DEAM data}

\scalebox{0.85}{%
\begin{tabular}{|c|c|cc|}
\hline
\multirow{2}{*}{} & \multirow{2}{*}{$k$} & \multicolumn{2}{c|}{$K$} \\ \cline{3-4}
& & 2 & 4 \\ \hline
\multirow{2}{*}{Coverage} & 2 & $0.96_{0.01}$ & $0.96_{0.01}$ \\
& 4 & $0.95_{0.01}$ & $0.95_{0.01}$ \\
\hline
\multirow{2}{*}{Interval Width} & 2 & $1.74_{0.02}$ & $1.73_{0.02}$ \\
& 4 & $1.71_{0.03}$ & $1.69_{0.03}$ \\ \hline
\end{tabular}%
}
\end{subtable}
\hspace{0.05\textwidth}
\begin{subtable}[t]{0.43\textwidth}
\centering
\caption{LFW data}

\scalebox{0.65}{
\begin{tabular}{|c|c|ccc|}
\hline
\multirow{2}{*}{} & \multirow{2}{*}{$k$} & \multicolumn{3}{c|}{$K$} \\ \cline{3-5}
& & 3 & 6 & 9 \\ \hline
\multirow{3}{*}{Coverage} & 3 & $0.93_{0.01}$ & $0.93_{0.01}$ & $0.93_{0.01}$ \\
& 6 & $0.94_{0.01}$ & $0.95_{0.01}$ & $0.95_{0.02}$ \\
& 9 & $0.94_{0.01}$ & $0.95_{0.01}$ & $0.95_{0.01}$ \\ \hline
\multirow{3}{*}{Interval Width} & 3 & $9.53_{0.28}$ & $9.32_{0.26}$ & $9.27_{0.26}$ \\
& 6 & $10.98_{0.19}$ & $10.46_{0.42}$ & $10.76_{0.32}$ \\
& 9 & $10.63_{0.40}$ & $10.18_{0.24}$ & $10.20_{0.42}$ \\ \hline
\end{tabular}%
}
\end{subtable}
}
\label{tab:UQ_cometRealData}
\end{table}

\begin{figure}[h!]
    \centering
    \includegraphics[width=0.85\linewidth,height=2.5in]{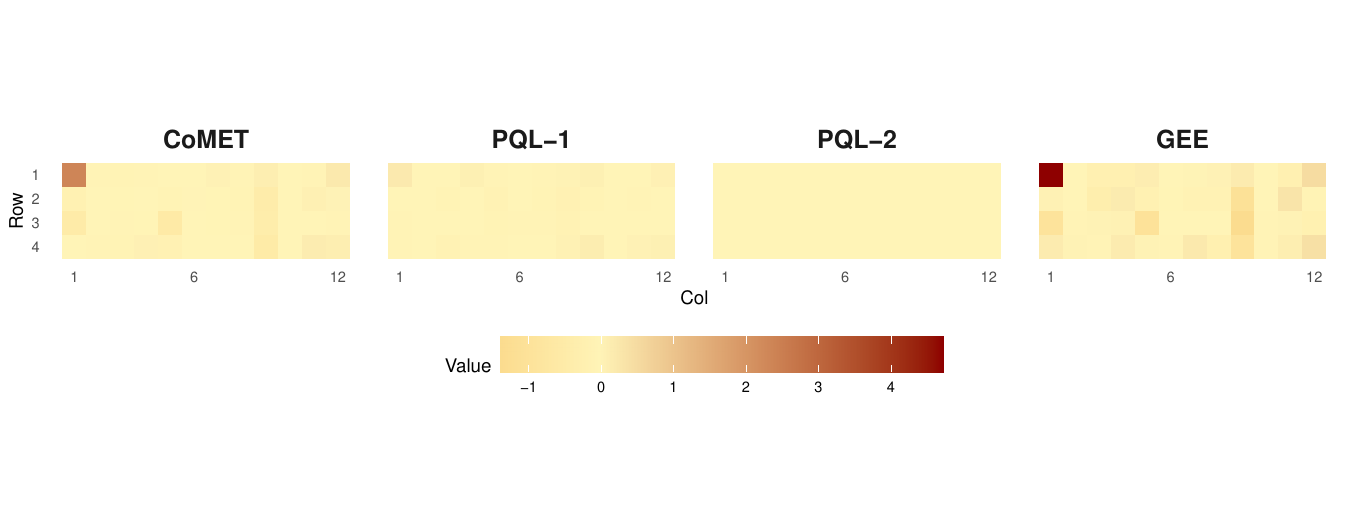}
    \caption{Estimated $12 \times 4$ matrix-valued fixed-effect coefficient $\mathcal{B}$ (transposed for visualization), averaged over 25 random training splits of the DEAM dataset. CoMET, the compressed mixed-effects model for tensors; PQL-1, the penalized quasi-likelihood method of \citet{FanLi12}; and PQL-2, the penalized quasi-likelihood approach of \citet{Lietal21}; GEE, the penalized generalized estimating equations approach of \citet{2019_Zhang_etal}.}
    \label{fig:B_estimate_DEAM}
\end{figure}

\begin{figure}[h!]
    \centering
    \includegraphics[width=0.85\linewidth,height=6in]{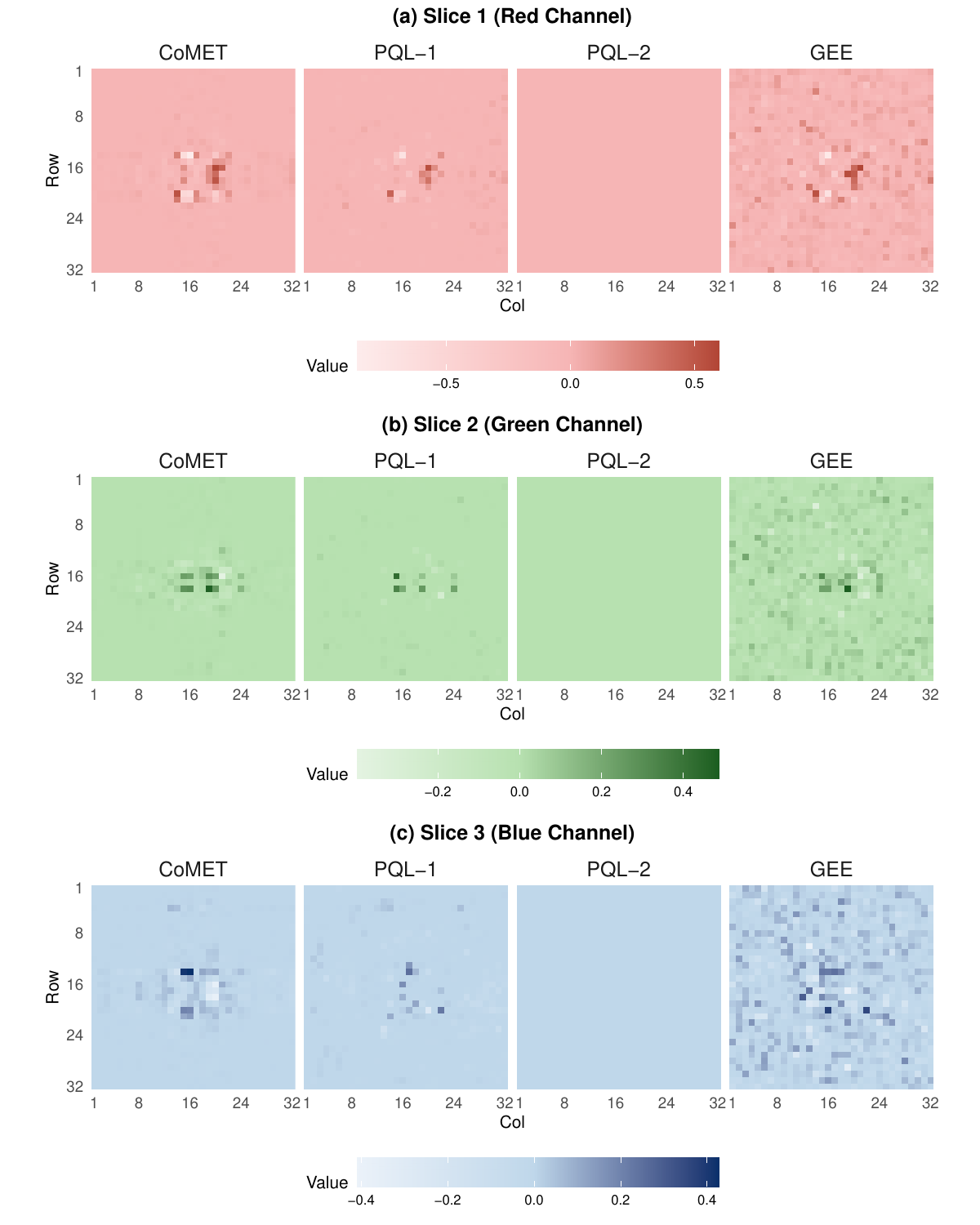}
    \caption{Estimated frontal slices of $32 \times 32 \times 3$ fixed-effect coefficient $\mathcal{B}$, corresponding to each of the 3 color channels, averaged over 25 random training splits of the LFW dataset. CoMET, the compressed mixed-effects model for tensors; PQL-1, the penalized quasi-likelihood approach of \citet{FanLi12}; and PQL-2, the penalized quasi-likelihood approach of \citet{Lietal21}; GEE, the penalized generalized estimating equations approach of \citet{2019_Zhang_etal}.}
    \label{fig:B_estimate_LFW3D}
\end{figure}

\section{Discussion}

The proposed compressed scalar-on-tensor mixed-model allows for several extensions beyond the setup in \eqref{eq:cme1}. For simplicity, we adopted a mode-wise covariance compression strategy in this work. Other tensor-structured projection strategies can be considered to simultaneously reduce the number of tensor modes and mode-specific dimensions \citep{2025_CompBTR_Craiu}. Beyond Gaussian random matrices, alternative constructions of the random projection matrices are possible, including sparse or structured random matrices comprising independent and identically distributed entries with mean zero and finite fourth-order moment \citep{MukDun20}. Additionally, we have used a single realization of the random projection matrices for computational scalability. The prediction risk of the CoMET model can be further improved by averaging over an ensemble of random projections \citep{Sla18}.

The CoMET model's compression strategy can be naturally extended to a semi-parametric mixed-model framework accounting for the temporal dependencies, including those in the DEAM data. Consider the following reformulation of the tensor mixed-effects model in \eqref{eq:Origmodel}
\begin{equation}\label{eq:semi-para}
    y_{ij} = \langle \mathcal{X}_{ij}, \mathcal{B} \rangle + \langle \mathcal{Z}(t_{ij}), \mathcal{A}_{i} \rangle + \epsilon_{ij}, \quad \mathcal{Z}(t) = \sum_{v = 1}^V \bm\phi_{1}^{(v)}(t) \circ \cdots \circ \bm \phi_{D}^{(v)}(t) , \quad \mathcal{A}_i \perp \epsilon_{ij},
\end{equation}
where the random-effects covariate tensors $\mathcal{Z}_{ij} = \mathcal{Z}(t_{ij})$s admit a rank-$V$ CP structure with every $v$-th rank-one component constructed using basis expansions of the observed time points $t_{ij}$s. The function $\bm\phi_{d}^{(v)}(\cdot) \in \mathbb{R}^{q_d}$ denotes a user-specified basis expansion to model the mode-$d$ covariance structure. This construction induces a smooth temporal dependence separable across different modes. As a result, CoMET's covariance compression strategy admits a straightforward extension to model temporally indexed tensor-valued acoustic features in the DEAM data for improved performance in music emotion recognition.

The compression paradigm introduced in this work naturally generalizes to a tensor-on-tensor modeling framework for multimodal applications. For instance, instead of reducing arousal and valence to a scalar response via their first principal component for the DEAM data (see Section \ref{Section:DEAMData}), we can model the two-dimensional emotion to examine how different acoustic features are associated with specific emotion dimensions. We are extending the current framework to accommodate such multivariate responses using a tensor-on-tensor mixed-effects model formulation based on the contracted tensor product \citep{2009_Kolda}.  

\section*{Declaration of the use of generative AI and AI-assisted technologies}
During the preparation of this work the authors used ChatGPT in order to check for grammatical errors. After using this tool the authors reviewed and edited the content as necessary and take full responsibility for the content of the publication.

\section*{Acknowledgement}
Sreya Sarkar was supported by the National Science Foundation grant DMS-1854667 for this work. Kshitij Khare's work in this project was partially supported by the National Science Foundation grant DMS-2506059. Sanvesh Srivastava was partially supported by the National Science Foundation grant DMS-2506058 and the National Institutes of Health grant R01 HD121993. An open-source implementation of CoMET is available in the R package \texttt{BayesCoMET} (\url{https://github.com/SreyaSarkar21/BayesCoMET}) 

\bibliographystyle{apalike}
\bibliography{ref}

\newpage
\begin{center}
    \huge \textbf{Supplementary Material for ``CoMET: A Compressed Bayesian Mixed-Effects Model for High-Dimensional Tensors"}
\end{center}
\appendix
\section{Theoretical Properties of Mean Square Prediction Risk of CoMET}\label{Theory_vecB}
\subsection{Proof of Theorem 3.1}\label{Section:ProofThm}
Consider the setup for establishing the theoretical properties of the compressed mixed model for tensor covariates of order three, without any loss of generality. Let $\mathcal{B}^0, \tau^2_0$ and $\Sigma^0 = \Sigma_{3}^{0} \otimes \Sigma_{2}^{0} \otimes \Sigma_{1}^{0}$ denote the true values of the parameters. The marginalized true data generating model and working covariance compression model for the $i$-th subject are
\begin{align}\label{eq:popln&comp_models}
    y_i &= X_i \bm\beta^0 + \epsilon_{0i}, \quad \epsilon_{0i} \sim \mathcal{N}_{m_i}(0, \tau_0^2 V_{0i}), \quad V_{0i} = Z_i (\Sigma_{3}^{0} \otimes \Sigma_{2}^{0} \otimes \Sigma_{1}^{0}) Z_i^{\T} + I_{m_i}, \nonumber \\
    y_i &= X_i \bm\beta + \epsilon'_{i}, \quad \epsilon'_{i} \sim \mathcal{N}_{m_i}(0, \tau_0^2 C_{i}), \quad C_{i} = Z_i S^{*\T} \Gamma^* R^*R^{* \T} \Gamma^{*\T} S^* Z_i^{\T} + I_{m_i}, \quad i = 1, \dots, n,
\end{align}
respectively, where we define
\begin{align}\label{eq:notations}
    y_i = \begin{pmatrix} y_{i1} \\ \vdots \\ y_{i m_{i}} \end{pmatrix} \in \RR^{m_i}, \quad X_i &= \begin{pmatrix} \text{vec}(\mathcal{X}_{i1})^{\T} \\ \vdots \\ \text{vec}(\mathcal{X}_{i m_{i}})^{\T} \end{pmatrix} \in \RR^{m_i \times p^*}, \quad Z_i = \begin{pmatrix} \text{vec}(\mathcal{Z}_{i1})^{\T} \\ \vdots \\ \text{vec}(\mathcal{Z}_{i m_{i}})^{\T} \end{pmatrix} \in \RR^{m_i \times q^*}, \nonumber \\
    S^* = S_3 \otimes S_2 \otimes S_1 &\in \RR^{k^* \times q^*}, \quad R^* = R_3 \otimes R_2 \otimes R_1 \in \RR^{k^* \times q^*}, \nonumber \\ \Gamma^* = \Gamma_3 \otimes \Gamma_2 \otimes \Gamma_1 &\in \RR^{k^* \times k^*}, \quad \bm\beta = \text{vec}(\mathcal{B}) \in \RR^{p^*}, \ \bm\beta^0 = \text{vec}(\mathcal{B}^0), \nonumber \\
    p^* = p_1 p_2 p_3, \quad & q^* = q_1 q_2 q_3, \quad k^* = k_1 k_2 k_3.
\end{align}

We now set forth the assumptions needed for deriving the theoretical guarantees. Let the error variance $\tau^2$ be known and equals the true value $\tau_0^2$. The support for the Gaussian prior on $\gamma_d$ is restricted to the set $\{\Gamma_d \in \RR^{k_d \times k_d}: \| \Gamma_d\| \leq b_d\}$ where $b_d$ is a universal constant for $d = 1,2,3$, and $\| \cdot \|$ is the operator norm of a matrix. Using the fact that $\|A \otimes B\| = \|A\| \|B\|$ for any two matrices $A$ and $B$, we define the set of matrices $\mathcal{G} = \{\Gamma^* \in \RR^{k^* \times k^*} : \|\Gamma^*\| \leq b^*\}$, where $b^* = b_1 b_2 b_3$. The entries of the tensors $\mathcal{X}_{ij}$ and $\mathcal{Z}_{ij}$ are mutually independent Gaussian random variables with mean zero and variance $\sigma_X^2$ and $\sigma_Z^2$, respectively, where $\sigma_X^2$ is a constant, and $\sigma_Z^2$ satisfies $n\sigma_Z^4 \left(\prod_{d=1}^3q_d^6/k_d^4+ \prod_{d=1}^3 m_{\max}^2/k_d^4 \right) = O(1)$, with $m_{\max} = \max(m_1, \dots, m_n)$.

For simplicity, we assume that $\mathcal{B}$ is full-rank as discussed in the main manuscript, and assign Horseshoe prior directly on the elements of $\text{vec}(\mathcal{B})$ as follows:
\begin{align}\label{hsprior_vecB}
   \mathcal{B}_{j_1 j_2 j_3} \mid \lambda_{j_1 j_2 j_3}^2, \delta^2, \quad \tau^2 & \sim \mathcal{N}(0, \lambda_{j_1 j_2 j_3}^2 \delta^2 \tau^2 ), \quad \lambda_{j_1 j_2 j_3} \sim \mathcal{C}^{+}(0, 1), \quad \delta \sim \mathcal{C}^{+}(0, 1), 
\end{align}
independently for $j_d = 1, \ldots, p_d$ and $d = 1,2,3$. The squared global shrinkage parameter $\delta^2$ controlling the overall sparsity in $\bm\beta$ can depend on $N, p^*, q^*$, and the squared local shrinkage parameter $\lambda_{j_1 j_2 j_3}^2$ for each cell of $\mathcal{B}$ is assumed to be supported on a compact domain $[a, 1/a]$ for a universal constant $a$.

Define $C = \diag(C_1, \ldots, C_n)$, $y \in \RR^N$ with $i$-th row block $y_i$, and $X \in \RR^{N \times p^*}$ with $i$-th row block $X_i$. Then the compressed conditional posterior density of $\bm\beta$ given $\Lambda, \delta^2, C$ is
\begin{align}
  \label{eq:cond_vecBSupp}
  \bm\beta \mid \Lambda, \delta^2, C, y, X  \sim \mathcal{N} \left( \{ X^{\T} C^{-1} X + (\delta^{2} \Lambda)^{-1}\}^{-1} X^{\T} C^{-1} y, \tau_0^{2} \{ X^{\T} C^{-1} X + (\delta^{2} \Lambda)^{-1}\}^{-1}  \right),
\end{align}
where we define $\Lambda = \diag(\lambda_{111}^2, \dots, \lambda_{p_1p_2p_3}^2)$.

The prediction risk of the covariance compressed posterior is defined as
\begin{align}\label{eq:vecB_predriskSupp}
    \frac{1}{N}\EE \|X\bm{\beta}^0 - X \bar{\bm{\beta}}(\phi)\|_2^2 &= \frac{1}{N}\EE_{R,S,X,Z} \EE_y \{ \EE_{\phi \mid y} \|X\bm{\beta}^0 - X \bar{\bm{\beta}}(\phi)\|_2^2 \},
\end{align}
where the conditional posterior mean of $\bm\beta$ in \eqref{eq:cond_vecB_supp} is denoted as $\bar{\bm\beta}(\phi)$ with $\phi = \{\Gamma, \Lambda, \delta^2\}$, $\EE_{R,S, X, Z}$, $\EE_{y}$, and $\EE_{\phi \mid y}$ respectively denote the expectations with respect to the distributions of $(R_1, R_2, R_3, S_1, S_2, S_3, X, Z_1, \dots, Z_n)$, $y$ and the conditional distribution of $\phi$ given $y$. 

We restate Theorem 3.1 from Section~\ref{Section:tensorTheory} of the main manuscript below, followed by a detailed proof.

\begin{theorem}
\label{vecB_TheoremPredRiskSupp}
If the assumptions (A1)-(A3) in Section \ref{Section:tensorTheory} of the main manuscript hold, $p^* = o(N)$, $k^{*2}\log k^* = o(p^*)$, $k^{*2}\log \log N = o(p^*)$ and $\|\bm{\beta}^0 \|^2 = o(N)$, then the posterior predictive risk satisfies
\begin{align*}
   \frac{1}{N}\EE \| X \bm{\beta}^0 - X \bar{\bm{\beta}} (\phi) \|_2^2 &\leq \EE_{X}\left\{ \kappa^4(X)\right\} \Bigg[ \frac{2\| \bm{\beta}^0\|_2^{2}}{N}\left\{O\left(nb^{*8}\sigma_Z^8 \left(\prod_{d=1}^3 \frac{q_d^{12}}{k_d^8} + \prod_{d=1}^3\frac{m_{\max}^4}{k_d^8} \right) + nb^{*8}\sigma_Z^8 \right)\right\} \\
    &\qquad \qquad \qquad \qquad + \frac{4\tau_{0}^{2}}{a^{4}} \Bigg\{ \left(\frac{N-p^*}{N\log N}\right)O\left( nb^{*5}\sigma_Z^6 \left(\prod_{d=1}^3\frac{q_d^9}{k_d^6} + \prod_{d=1}^3 \frac{m_{\max}^3}{k_d^6}\right)\right) \\
        &\qquad \qquad \qquad \qquad + \left(\frac{p^* + 4e^{-p^*/8}}{N}\right)O\left(nb^{*3}\sigma_Z^4\left(\prod_{d=1}^3\frac{q_d^6}{k_d^4}+ \prod_{d=1}^3 \frac{m_{\max}^2}{k_d^4} \right)\right)\Bigg\}\Bigg] \\
    &= o(1),
\end{align*}
where $\kappa(X)$ denotes the condition number of $X$.
\end{theorem}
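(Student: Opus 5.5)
The plan is to follow the matrix-variate argument of \citet{2025_SarKhSri} and adapt each step to the Kronecker structure $S^*$, $R^*$, $\Gamma^*$. Write the conditional posterior mean as $\bar{\bm\beta}(\phi)=W^{-1}X^{\T}C^{-1}y$, where $W = X^{\T}C^{-1}X+(\delta^2\Lambda)^{-1}$. Substituting $y = X\bm\beta^0+\epsilon_0$ under the true model \eqref{eq:populationmodel} gives the decomposition
\begin{align*}
X\bm\beta^0 - X\bar{\bm\beta}(\phi) = X W^{-1}(\delta^2\Lambda)^{-1}\bm\beta^0 - XW^{-1}X^{\T}C^{-1}\epsilon_0 .
\end{align*}
The elementary inequality $\|u+v\|^2\le 2\|u\|^2+2\|v\|^2$ splits the risk into a bias term and a variance term, which we bound separately. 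In both, we first bound quantities uniformly over $\phi$ in the restricted support. That support is $\Lambda$ with entries in $[a,1/a]$ by (A1) and $\Gamma^*\in\mathcal{G}$ by (A2). The inner expectation $\EE_{\phi\mid y}$ is then handled by a supremum, except for the dependence on $\delta^2$.

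\emph{Bias term.} We would bound $\|XW^{-1}(\delta^2\Lambda)^{-1}\bm\beta^0\|^2$ by $\|\bm\beta^0\|^2$ times operator-norm factors. Using $\lambda_{\min}(C^{-1})\ge 1/\lambda_{\max}(C)$ and $\|C^{-1}\|\le 1$, the eigenvalues of $X^{\T}C^{-1}X$ are sandwiched between $\sigma_{\min}^2(X)/\lambda_{\max}(C)$ and $\sigma_{\max}^2(X)$. This produces the factor $\kappa^4(X)$ together with powers of $\lambda_{\max}(C)=1+\max_i\|Z_iS^{*\T}\Gamma^*R^*\|^2$. Bounding $\|\Gamma^*\|\le b^*$, we need moments of $\|Z_iS^{*\T}\|$ and $\|R^*\|$. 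We would compute these by expanding Frobenius or trace norms of Kronecker products, since $\|S^*\|_F^2=\prod_d\|S_d\|_F^2$. Gaussian moment calculations for $Z_i$ and $S_d$, with $\EE$ of fourth and eighth powers, give the $\sigma_Z^8\prod_d q_d^{12}/k_d^8$ and $\prod_d m_{\max}^4/k_d^8$ terms. A union over $i$ supplies the factor $n$. By (A3) these are $O(1)$, so the bias is $O(\|\bm\beta^0\|^2/N)=o(1)$, provided $\delta^2$ is not too small. We control that through the posterior of $\delta^2$ in the next step.

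\emph{Variance term.} We bound $\EE_y\|XW^{-1}X^{\T}C^{-1}\epsilon_0\|^2$. Conditioning on $\phi$ would give a trace $\tau_0^2\,\mathrm{tr}(P V_0 P^{\T})$ with $P=XW^{-1}X^{\T}C^{-1}$. The difficulty is that $\phi$ is drawn from the posterior given $y$, so it depends on $\epsilon_0$. We therefore follow \citet{2025_SarKhSri} and write $\epsilon_0 = V_0^{1/2}\xi$. We then project $\xi$ onto the $p^*$-dimensional column space of $X$ and its complement. On the column-space part, the supremum over $\phi$ costs at most a $\chi^2_{p^*}$ variable, whose tail bound yields the $(p^*+4e^{-p^*/8})/N$ term. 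The complement contributes through $\delta^2$. Here we would show that the posterior of the global parameter concentrates so that $\delta^{-2}$ is effectively of order $\log N$ on the relevant event. That gives the $(N-p^*)/(N\log N)$ term, and it is where the conditions $k^{*2}\log k^*=o(p^*)$ and $k^{*2}\log\log N=o(p^*)$ enter. They come from a covering or entropy argument over $\mathcal{G}$ (dimension $\approx k^{*2}$, an $\epsilon$-net of size $e^{O(k^{*2}\log k^*)}$) combined with the $\log\log N$ grid over $\delta^2$. The factors $\|V_0\|$ and $\|C\|$ are again bounded by the Kronecker moment calculations, producing the $\sigma_Z^4$ and $\sigma_Z^6$ products, and $a^{-4}$ comes from $\Lambda$.

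\emph{Main obstacle.} The hardest step is this variance term: uniformly controlling the data-dependent posterior over $(\Gamma^*,\delta^2)$ in the quadratic form in $\epsilon_0$. I would first try to transcribe the matrix-case lemma of \citet{2025_SarKhSri} verbatim, with $\Gamma$ replaced by $\Gamma^*$ and the class $\mathcal{G}$ as stated. The new ingredient relative to that paper is verifying that the Kronecker-product moment bounds factor mode-wise. Finally, we combine the terms and invoke $p^*=o(N)$, $\|\bm\beta^0\|^2=o(N)$ and (A3) to conclude that the bound is $o(1)$. This also requires $\EE_X\kappa^4(X)=O(1)$, which follows from Gaussian singular value bounds when $p^*=o(N)$.
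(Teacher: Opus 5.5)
The bias/variance split is the same as the paper's. After the Woodbury-type identity it reads $T_1=\|C(X\delta^2\Lambda X^{\T}+C)^{-1}X\bm{\beta}^0\|_2^2$ and $T_2=\|X\delta^2\Lambda X^{\T}(X\delta^2\Lambda X^{\T}+C)^{-1}\epsilon_0\|_2^2$. The gap is in the variance term: you take the $1/\log N$ factor from an unproved posterior concentration of $\delta^2$, and the $\log\log N$ condition from a grid over $\delta^2$.

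In the paper, $\delta^2$ plays no role in $T_2$. Rotate by the left singular vectors $U$ of $X$ and invert $U^{\T}(X\delta^2\Lambda X^{\T}+C)U$ blockwise. This gives $T_2=\|\tilde DF^{-1}\{\tilde\epsilon_1-A(\Gamma^*)\tilde\epsilon_2\}\|_2^2$, with $\tilde D=\delta^2DV^{\T}\Lambda VD$, $F\succeq\tilde D$ and $A(\Gamma^*)=\tilde C_{12}\tilde C_{22}^{-1}$. Hence $\|\tilde DF^{-1}\|\le\lambda_{\max}(\tilde D)/\lambda_{\min}(\tilde D)\le a^{-2}\kappa^2(X)$ uniformly in $(\delta^2,\Lambda,\Gamma^*)$, because the scale $\delta^2$ cancels. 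Even if $\delta^{-2}\asymp\log N$ held, that prior precision would be negligible next to $X^{\T}C^{-1}X$, whose smallest eigenvalue is of order $N/\|C\|$. So it could not produce a $1/\log N$ reduction.

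The posterior's dependence on $y$ is therefore handled only through a supremum over $\Gamma^*\in\mathcal{G}$. What you are missing is how that supremum is controlled, including the error between a point of $\mathcal{G}$ and its nearest net point. The paper uses a Lipschitz bound for $\Gamma^*\mapsto A(\Gamma^*)^{\T}A(\Gamma^*)$ (Lemma~\ref{auxlem2}) together with a net of resolution $1/\log N$ (Lemma~\ref{auxlem1}).
\begin{itemize}
\item The discretization error multiplies $\|\tilde\epsilon_2\|_2^2$, whose mean is $\tau_0^2(N-p^*)$; this is the source of the $(N-p^*)/(N\log N)$ term.
\item The net has $\lceil 2b^*k^*\log N\rceil^{k^{*2}}$ points. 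The union bound with the $\chi^2_{p^*}$ tail gives a term of order $p^*$ plus $\lceil 2b^*k^*\log N\rceil^{k^{*2}}e^{-p^*/8}$. This equals $e^{-p^*\{1+o(1)\}/8}$ exactly when $k^{*2}\log k^*=o(p^*)$ and $k^{*2}\log\log N=o(p^*)$.
\end{itemize}
So $\log\log N$ comes from the $k^{*2}$-dimensional net; a grid over the scalar $\delta^2$ would contribute only $\log\log N$, not $k^{*2}\log\log N$. Likewise, the $\sigma_Z$-weighted $(p^*+4e^{-p^*/8})$ term comes from the orthogonal block $\tilde\epsilon_2$, not the column-space block.

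On the bias term, your concern about small $\delta^2$ is legitimate, but deferring it to the same unproved concentration leaves this step open too. The paper uses the same block decomposition to get $T_1\le\|C\|^2\{1+\|A(\Gamma^*)\|^2\}\lambda_{\max}^2(D)\lambda_{\min}^{-2}(\tilde D)\|\bm{\beta}^0\|_2^2$, and then writes the ratio as $\kappa^4(X)$ in \eqref{eq:vecB_T1_4}. Since only $\lambda_{\min}(\tilde D)\ge a\delta^2s_{\min}^2(X)$ is available, the ratio is really bounded by $\kappa^4(X)/\{a^2\delta^4s_{\max}^2(X)\}$. As $\delta^2\to0$, $\bar{\bm{\beta}}(\phi)\to0$ and the loss tends to $\|X\bm{\beta}^0\|_2^2$. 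So some lower control on $\delta^2$, for instance $\delta^4s_{\max}^2(X)$ bounded away from zero, must be imposed as an assumption rather than extracted from the variance analysis.

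Your whitening $\epsilon_0=V_0^{1/2}\xi$ is more careful than the paper, which treats $U^{\T}\epsilon_0$ as $\mathcal{N}(0,\tau_0^2I_N)$. But it introduces $\|V_0\|\le1+\|\Sigma^0\|\max_i\|Z_i\|^2$, and (A1)--(A3) say nothing about $\|\Sigma^0\|$. To reach the displayed, $\Sigma^0$-free bound you would need an extra assumption such as $\|\Sigma^0\|=O(1)$.
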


\begin{proof}
We begin with decomposing the squared loss with respect to the compressed posterior mean $\bar{\bm{\beta}} (\phi)$ as follows:
    \begin{align}\label{eq:vecB_mse}
  \| X \bm{\beta}^0 - X \bar{\bm{\beta}} (\phi) \|_2^2 &=  \big \|  X \bm{\beta}^0 -  X \{  X^{\T} C^{-1} X + (\delta^{2} \Lambda)^{-1}\}^{-1} X^{\T} C^{-1} y \big \|_2^2 \nonumber \\
    &=
    \big \|  X \bm{\beta}^0 -  X \{  X^{\T} C^{-1} X + (\delta^{2} \Lambda)^{-1}\}^{-1} X^{\T} C^{-1} (X \bm{\beta}^0 + \epsilon_0) \big \|_2^2 \nonumber  \\
    &\leq 2 \big \|  \left[ X - X \{  X^{\T} C^{-1} X + (\delta^{2} \Lambda)^{-1}\}^{-1} X^{\T} C^{-1} X \right] \bm{\beta}^0 \big \|_2^2 + \nonumber \\
    &\quad \; 2
    \big \|X \{  X^{\T} C^{-1} X + (\delta^{2} \Lambda)^{-1}\}^{-1} X^{\T} C^{-1} \epsilon_0 \big \|_2^2.
\end{align}
Lemma A.3 in \cite{2025_SarKhSri} implies that \eqref{eq:vecB_mse} reduces to
\begin{align}
  \label{eq:vecB_mse1}
   \| X \bm{\beta}^0 - X \bar{\bm{\beta}} (\phi) \|_2^2 &\leq 2 \| C (X \delta^2 \Lambda X^\T
   + C)^{-1} X \bm{\beta}^0 \|_2^{2} + 2 \| X\delta^2 \Lambda X^\T (X \delta^2 \Lambda
   X^\T + C)^{-1} \epsilon_0 \|_2^2 \nonumber \\
   &\equiv 2 T_1 + 2 T_2.
\end{align}
Lemma \ref{vecB_lemt1} implies that $T_1$ in \eqref{eq:vecB_mse1} satisfies
\begin{equation}\label{vecB_boundEt1}
    2\EE(T_1) \leq 2O\left(nb^{*8}\sigma_Z^8 \left(\prod_{d=1}^3 \frac{q_d^{12}}{k_d^8} + \prod_{d=1}^3\frac{m_{\max}^4}{k_d^8} \right) + nb^{*8}\sigma_Z^8 \right) \EE_{X} \left\{
\kappa^4(X) \right\} \| \bm{\beta}^0\|_2^{2},
\end{equation}
where $\kappa(X) = s_{\max}(X) / s_{\min}(X)$ is the condition number, and $s_{\max}(X)$ and $s_{\min}(X)$ are the maximum and minimum singular values of $X$. Lemma \ref{vecB_lemt2} implies that $T_2$ in \eqref{eq:vecB_mse1} satisfies
\begin{align}\label{vecB_boundEt2}
    2\EE (T_2) &\leq \frac{4\tau_{0}^{2}}{a^{4}} \; \EE_{X}\bigl\{\kappa^4(X)\bigr\} \; \Bigg\{ \left(\frac{N-p^*}{\log N}\right)O\left( nb^{*5}\sigma_Z^6 \left(\prod_{d=1}^3\frac{q_d^9}{k_d^6} + \prod_{d=1}^3 \frac{m_{\max}^3}{k_d^6}\right)\right) \nonumber\\
&\quad + (p^* + 4e^{-p^*/8})O\left(nb^{*3}\sigma_Z^4\left(\prod_{d=1}^3\frac{q_d^6}{k_d^4}+ \prod_{d=1}^3 \frac{m_{\max}^2}{k_d^4} \right)\right)\Bigg\} \nonumber \\
\end{align}

Combining \eqref{vecB_boundEt1} and \eqref{vecB_boundEt2}, we have the prediction risk
Combining \eqref{vecB_boundEt1} and \eqref{vecB_boundEt2}, we have the prediction risk
\begin{align*}
    \frac{1}{N}\EE \| X \bm{\beta}^0 - X \bar{\bm{\beta}} (\phi) \|_2^2 &\leq  \EE_{X}\left\{ \kappa^4(X)\right\} \Bigg[ \frac{2\| \bm{\beta}^0\|_2^{2}}{N}\left\{O\left(nb^{*8}\sigma_Z^8 \left(\prod_{d=1}^3 \frac{q_d^{12}}{k_d^8} + \prod_{d=1}^3\frac{m_{\max}^4}{k_d^8} \right) + nb^{*8}\sigma_Z^8 \right)\right\} \\
    &\qquad \qquad \qquad \qquad + \frac{4\tau_{0}^{2}}{a^{4}} \Bigg\{ \left(\frac{N-p^*}{N\log N}\right)O\left( nb^{*5}\sigma_Z^6 \left(\prod_{d=1}^3\frac{q_d^9}{k_d^6} + \prod_{d=1}^3 \frac{m_{\max}^3}{k_d^6}\right)\right)\\
        &\qquad \qquad \qquad \qquad + \left(\frac{p^* + 4e^{-p^*/8}}{N}\right)O\left(nb^{*3}\sigma_Z^4\left(\prod_{d=1}^3\frac{q_d^6}{k_d^4}+ \prod_{d=1}^3 \frac{m_{\max}^2}{k_d^4} \right)\right)\Bigg\}\Bigg] \\
    &= o(1),
\end{align*}
since under Assumption (A3), $\EE_{X} \left\{ \kappa^4(X) \right\} = 1 + o(1)$; see \cite{KSS:2025} for details. The theorem is proved.
\end{proof}

\subsection{Upper Bound for $T_2$}\label{Section:vecB_ProofT2}

\begin{lemma}\label{vecB_lemt2}
Under the assumptions of Theorem 1 in the main manuscript,
\begin{align*}
\EE (T_2) &\leq \frac{2\tau_{0}^{2}}{a^{4}} \; \EE_{X}\bigl\{\kappa^4(X)\bigr\} \; \Bigg\{ \left(\frac{N-p^*}{\log N}\right)O\left( nb^{*5}\sigma_Z^6 \left(\prod_{d=1}^3\frac{q_d^9}{k_d^6} + \prod_{d=1}^3 \frac{m_{\max}^3}{k_d^6}\right)\right) \\
&\quad + (p^* + 4e^{-p^*/8})O\left(nb^{*3}\sigma_Z^4\left(\prod_{d=1}^3\frac{q_d^6}{k_d^4}+ \prod_{d=1}^3 \frac{m_{\max}^2}{k_d^4} \right)\right)\Bigg\}
\end{align*}

\end{lemma}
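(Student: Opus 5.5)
Set $M = X\delta^2\Lambda X^\T$, so that $T_2 = \|M(M+C)^{-1}\epsilon_0\|_2^2$. The plan is to condition on $(R,S,X,Z)$ and $\phi$, integrate out $\epsilon_0$, and then control the resulting quadratic form through spectral bounds on $C$, $V_0$ and $X$. Since $\epsilon_0 \sim \mathcal{N}(0,\tau_0^2 V_0)$ with $V_0=\diag(V_{01},\dots,V_{0n})$, we have
\[
\EE_y(T_2) \le \tau_0^2\,\|V_0\|\,\operatorname{tr}\{(M+C)^{-1}M^2(M+C)^{-1}\}.
\]
Because $\phi$ is drawn from its posterior given $y$, $\EE_{\phi\mid y}$ and $\EE_y$ cannot be swapped naively. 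Following \cite{2025_SarKhSri}, I would therefore bound the integrand uniformly over the support of $\phi$: $\Gamma^*\in\mathcal{G}$ and $\lambda^2\in[a,1/a]$. The remaining scalar $\delta^2$ is handled as described below.

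For the trace, write $X = U D W^\T$ (thin SVD, with $U$ of rank $p^*$). On the range of $U$, $M$ has eigenvalues between $\delta^2 a\, s_{\min}^2(X)$ and $\delta^2 a^{-1} s_{\max}^2(X)$. Using $C \preceq \|C\| I$ and $C\succeq I$, each of the $p^*$ nonzero directions contributes at most $O(\kappa^4(X)/a^4)\cdot\|C\|^2$ via an operator-monotone comparison $M(M+C)^{-1}\preceq$ a function of $\|C\|$ and $a^{\pm1}$. The $N-p^*$ directions orthogonal to the column space of $X$ contribute through the off-diagonal coupling of $C^{-1}$. For these I would use that the relevant factor behaves like $\|C\|^3/(\delta^2 s^2_{\min}(X))$, and then invoke the posterior concentration of $\delta^2$ (or the restriction) to obtain the $1/\log N$ factor. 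This is where the $(N-p^*)/\log N$ term and the cubic power of $\|C\|$ come from.

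The $4e^{-p^*/8}$ term arises from a Gaussian tail bound (e.g.\ $\chi^2_{p^*}$ deviation, or $s_{\min}$ of a Gaussian matrix) used to control $s_{\min}(X)$ on a bad event, with $\kappa$ bounded elsewhere. The next step bounds moments of $\|C\|$ and $\|V_0\|$. Since $C = I + Z S^{*\T}\Gamma^* R^* R^{*\T}\Gamma^{*\T} S^* Z^\T$ blockwise,
\[
\|C\|\le 1 + b^{*2}\max_i \|Z_i S^{*\T}\|^2\,\|R^*\|^2 .
\]
Mixed moments $\EE\|C\|^r$ for $r=2,3$ then follow from Kronecker factorization $\|S^*\|=\prod_d\|S_d\|$ and Gaussian operator-norm moments, $\EE\|S_d\|^{2r}\lesssim (q_d/k_d)^{r}$. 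Together with $\|Z_i\|^2\lesssim \sigma_Z^2(q^*+m_i)$ and a union over $i$ producing the factor $n$, this yields the products $\prod_d q_d^{3r}/k_d^{2r}+\prod_d m_{\max}^r/k_d^{2r}$ appearing in the statement. Here $\|V_0\|$ is treated as bounded, absorbed via (A3); Cauchy--Schwarz separates the $X$ factors, giving $\EE_X\kappa^4(X)$, from the $Z,R,S$ factors.

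The main obstacle is the $N-p^*$ null-space contribution. Showing it is genuinely $O((N-p^*)/\log N)$ requires controlling $\delta^2$ under the posterior, which is not restricted by (A1). I would try to mimic the corresponding lemma of \cite{2025_SarKhSri}, which lower-bounds $\delta^2$ with high posterior probability. This is where the conditions $k^{*2}\log k^*=o(p^*)$ and $k^{*2}\log\log N=o(p^*)$ should enter.
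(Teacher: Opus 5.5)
There is a genuine gap, located where you expect the main obstacle. The mechanism you propose for it, posterior concentration of $\delta^2$, does not close it.

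Because $\Gamma^*$ is drawn from its posterior given $y$, you must bound $\mathbb{E}_{\tilde\epsilon}\sup_{\Gamma^*\in\mathcal{G}}$ of a quadratic form in the noise. Your trace identity evaluates the expectation for a fixed $\phi$, so it takes $\sup$ and $\mathbb{E}$ in the wrong order. Split $\tilde\epsilon=U^{\T}\epsilon_0$ into $\tilde\epsilon_1\in\mathbb{R}^{p^*}$ (the column space of $X$) and $\tilde\epsilon_2\in\mathbb{R}^{N-p^*}$. Up to the factor $\kappa^4(X)/a^4$, the complement contributes $\tilde\epsilon_2^{\T}A(\Gamma^*)^{\T}A(\Gamma^*)\tilde\epsilon_2$ with $A=\tilde C_{12}\tilde C_{22}^{-1}$. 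Bounding this uniformly by $\sup_{\Gamma^*}\|A(\Gamma^*)\|^2\|\tilde\epsilon_2\|^2$ gives only $O(N-p^*)$, which is not $o(N)$.

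The missing idea is a covering argument in $\Gamma^*$, which the paper carries out in three steps.
\begin{enumerate}
\item Take a $1/\log N$-net $\tilde{\mathcal{G}}$ of $\mathcal{G}$ with $\lceil 2b^*k^*\log N\rceil^{k^{*2}}$ elements.
\item Show that $\Gamma^*\mapsto A(\Gamma^*)^{\T}A(\Gamma^*)$ is Lipschitz in operator norm with constant $4b^*(2\breve c+3b^{*2}\breve c^2+b^{*4}\breve c^3)$. The discretization error is then at most this constant times $\|\tilde\epsilon_2\|^2/\log N$. This, not $\delta^2$, is the source of the $(N-p^*)/\log N$ term and of the cubic power of $\breve c$.
\item Bound $\mathbb{E}\max_{\tilde{\mathcal{G}}}$ as follows. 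For each fixed net point, $\|A\tilde\epsilon_2\|^2$ is stochastically dominated by $c_*^2\tau_0^2\chi^2_{p^*}$ with $c_*=1+b^{*2}\breve c$. Split the tail integral at $2c_*^2\tau_0^2p^*$ and apply a union bound with $\int_{p^*}^{\infty}e^{-u/8}\,du=8e^{-p^*/8}$.
\end{enumerate}
The conditions $k^{*2}\log k^*=o(p^*)$ and $k^{*2}\log\log N=o(p^*)$ are used precisely to absorb the net cardinality in this union bound. That is where the $4e^{-p^*/8}$ term comes from; it has nothing to do with a tail event for $s_{\min}(X)$.

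Second, $\delta^2$ needs no control at all. This matters because (A1) leaves $\delta$ with its unrestricted half-Cauchy prior, so no posterior lower bound on it is available. Write $X\delta^2\Lambda X^{\T}=U\,\mathrm{diag}(\tilde D,0)\,U^{\T}$ with $\tilde D=DV^{\T}\delta^2\Lambda VD$, set $\tilde C=U^{\T}CU$, and let $F=\tilde D+\tilde C_{11}-\tilde C_{12}\tilde C_{22}^{-1}\tilde C_{12}^{\T}\succeq\tilde D$ be the Schur complement. Block inversion then gives $T_2\le2\tilde\epsilon_1^{\T}F^{-1}\tilde D^2F^{-1}\tilde\epsilon_1+2\tilde\epsilon_2^{\T}A^{\T}F^{-1}\tilde D^2F^{-1}A\tilde\epsilon_2$. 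Moreover $F^{-1}\tilde D^2F^{-1}\preceq\{\lambda_{\max}(\tilde D)/\lambda_{\min}(\tilde D)\}^2I_{p^*}\preceq a^{-4}\kappa^4(X)I_{p^*}$, and $\delta^2$ cancels in this ratio.

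So your heuristic that the null-space factor behaves like $\|C\|^3/\{\delta^2s^2_{\min}(X)\}$ points the wrong way. The factor $F^{-1}\tilde D^2F^{-1}$ is bounded uniformly in $\delta^2$ and vanishes as $\delta^2\to0$.

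Finally, you do not need Cauchy--Schwarz to separate $X$ from $(R,S,Z)$. By independence, the bound factorizes into $\mathbb{E}_X\kappa^4(X)$ times the $(R,S,Z)$-expectation of a polynomial in $\breve c=\prod_d\|S_d\|^2\|R_d\|^2\max_i\|Z_i\|^2$. Your moment computation for $\mathbb{E}\,\breve c^{\,r}$, $r=1,2,3$, is in the right spirit.
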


\begin{proof}
The term $T_{2} = \| X\delta^2 \Lambda X^\T (X \delta^2 \Lambda
   X^\T + C)^{-1} \epsilon_0 \|_2^2$ depends on $X\delta^2 \Lambda X^\T$ and $X \delta^2 \Lambda
   X^\T + C$. Let the singular value decomposition of $X$ and
spectral decomposition of $X \delta^2 \Lambda X^\T$ be
\begin{align} \label{eq:vecB_T2_1}
  X = U
  \begin{bmatrix}
  D \\
  0_{N-p^* \times p^*}
  \end{bmatrix}
  V^\T, \quad
  X \delta^2 \Lambda X^\T = U
\begin{bmatrix}
  D V^\T \delta^2 \Lambda V D & 0_{p^* \times N - p^*} \\
  0_{N-p^* \times p^*} & 0_{N - p^* \times N - p^*}
  \end{bmatrix}
  U^\T,
\end{align}
where $N > p^*$, $U$ and $V$ are $N \times N$ and $p^* \times p^*$
orthonormal matrices, and $D$ is a $p^* \times p^*$ diagonal matrix of singular
values of $X$. If $\tilde D = D V^\T \delta^2 \Lambda V D$ and $\tilde C = U^{\T} C U$, then the second
decomposition in (\ref{eq:vecB_T2_1}) implies that
\begin{align}\label{eq:vecB_T2_2}
  X \delta^2 \Lambda X^\T
+ C =   X \delta^2 \Lambda X^\T
+ U U^{\T} C U U^{\T} = U
\begin{bmatrix}
  \tilde D + \tilde C_{11} & \tilde C_{12} \\
  \tilde C_{12}^{\T} & \tilde C_{22}
  \end{bmatrix}
  U^\T,
\end{align}
where $\tilde C_{ij}$ is the $ij$-th block of the $2 \times 2$ block matrix
$\tilde C$. Substituting (\ref{eq:vecB_T2_1}) and (\ref{eq:vecB_T2_2}) in $T_{2}$ implies that
\begin{align}
\label{eq:vecB_T2_3}
T_{2} &=
\left \|  U
\begin{bmatrix}
  \tilde D & 0_{p^* \times N - p^*} \\
  0_{N-p^* \times p^*} & 0_{N - p^* \times N - p^*}
  \end{bmatrix}
  U^\T U
\begin{bmatrix}
  \tilde D + \tilde C_{11} & \tilde C_{12} \\
  \tilde C_{12}^{\T} & \tilde C_{22}
  \end{bmatrix}^{{-1}}
  U^\T \epsilon_0 \right  \|_2^{2} \nonumber \\
  &=
\left \|
\begin{bmatrix}
  \tilde D & 0_{p^* \times N - p^*} \\
  0_{N-p^* \times p^*} & 0_{N - p^* \times N - p^*}
  \end{bmatrix}
\begin{bmatrix}
  \tilde D + \tilde C_{11} & \tilde C_{12} \\
  \tilde C_{12}^{\T} & \tilde C_{22}
  \end{bmatrix}^{{-1}}
  \tilde \epsilon \right  \|_2^{2},
\end{align}
where the first $U$ leaves the norm unchanged because $U^{\T} U = I_{N}$ and
$\tilde \epsilon = U^{\T} \epsilon_0$ is distributed as $\Ncal(0, \tau_{0}^{2}
I_{N})$ because $U$ is orthonormal. Furthermore, $C$ is positive definite because $C=\text{diag}(C_1, \dots, C_n)$ with $C_i(\Gamma^*) = Z_i S^{*\T}
\Gamma^* R^*  \left( Z_i S^{*\T} \Gamma^* R^* \right)^{\T} + I_{m_i}$, which implies that $\tilde C = U^{\T}C U$ is also a positive definite matrix due to orthonormality of $U$. Hence, $\tilde C_{22}^{-1}$ exists. If $F = \tilde D + \tilde C_{11} - \tilde
C_{12} \tilde C_{22}^{-1} \tilde C_{12}^{\T}$, then the block matrix inversion
formula \citep{Har97} gives
\begin{align}\label{eq:blockmatinv}
&\begin{bmatrix}
  \tilde D + \tilde C_{11} & \tilde C_{12} \\
  \tilde C_{12}^{\T} & \tilde C_{22}
  \end{bmatrix}^{{-1}} =
\begin{bmatrix}
  F^{-1} &  - F^{{-1}} \tilde C_{12} \tilde C_{22}^{-1} \\
  - \tilde C_{22}^{{-1}} \tilde C_{12}^{\T} F^{-1}  & \tilde C_{22}^{-1} + \tilde
  C_{22}^{-1} C_{12}^{\T} F^{{-1}} \tilde C_{12} \tilde C_{22}^{-1}
\end{bmatrix}.
\end{align}
Substituting this identity in (\ref{eq:vecB_T2_3}) implies that
\begin{align}\label{eq:vecB_T2_4}
\begin{bmatrix}
  \tilde D & 0_{p^* \times N - p^*} \\
  0_{N-p^* \times p^*} & 0_{N - p^* \times N - p^*}
  \end{bmatrix}
&\begin{bmatrix}
  \tilde D + \tilde C_{11} & \tilde C_{12} \\
  \tilde C_{12}^{\T} & \tilde C_{22}
  \end{bmatrix}^{{-1}} =
\begin{bmatrix}
  \tilde D F^{-1} &  - \tilde D  F^{{-1}} \tilde C_{12} \tilde C_{22}^{-1} \\
   0_{N-p^* \times p^*} & 0_{N - p^* \times N - p^*}
\end{bmatrix}.
\end{align}
Using (\ref{eq:vecB_T2_4}), (\ref{eq:vecB_T2_3}) is written as the quadratic form
\begin{align}
\label{eq:vecB_T2_5}
  T_{2} &= \tilde \epsilon^{\T}
  \begin{bmatrix}
   F^{-1} \tilde D &  0_{p^* \times N- p^*} \\
   - \tilde C_{22}^{-1} \tilde C_{12}^{\T} F^{{-1}} \tilde D      & 0_{N - p^* \times N - p^*}
  \end{bmatrix}
  \begin{bmatrix}
  \tilde D F^{-1} &  - \tilde D  F^{{-1}} \tilde C_{12} \tilde C_{22}^{-1} \\
   0_{N-p^* \times p^*} & 0_{N - p^* \times N - p^*}
\end{bmatrix}  \tilde \epsilon \nonumber \\
&= \tilde \epsilon^{\T}
  \begin{bmatrix}
   F^{-1} \tilde D^{2} F^{-1} &  - F^{{-1}} \tilde D^{2}  F^{{-1}} \tilde C_{12} \tilde C_{22}^{-1} \\
   - \tilde C_{22}^{-1} \tilde C_{12}^{\T} F^{{-1}} \tilde D^{2} F^{-1}      &
   \tilde C_{22}^{-1} \tilde C_{12}^{\T} F^{{-1}} \tilde D^{2} F^{-1} \tilde C_{12} \tilde C_{22}^{-1}
  \end{bmatrix}
  \tilde \epsilon,
\end{align}
which is further simplified by partitioning $\tilde \epsilon$ into two blocks as $\tilde \epsilon^{\T} = (\tilde \epsilon_{1}^{\T}, \tilde
\epsilon_{2}^{\T})$ of dimensions $p^* \times 1$ and $(N -
p^*) \times 1$, respectively.  Using Cauchy-Schwartz inequality for quadratic forms in
\eqref{eq:vecB_T2_5} gives
\begin{align}
\label{eq:vecB_T2_6}
  T_{2} \leq  2 \tilde \epsilon_{1}^{\T} F^{-1} \tilde D^{2} F^{-1} \tilde
  \epsilon_{1} + 2 \tilde \epsilon_{2}^{\T} \tilde C_{22}^{-1} \tilde C_{12}^{\T}
  F^{{-1}} \tilde D^{2} F^{-1} \tilde C_{12} \tilde C_{22}^{-1} \tilde \epsilon_{2}.
\end{align}
The first term in (\ref{eq:vecB_T2_6}) satisfies
\begin{align}
\label{eq:vecB_T2_7}
\tilde \epsilon_{1}^{\T} F^{-1} \tilde D^{2} F^{-1} \tilde \epsilon_{1}
\leq  \lambda_{\max}(\tilde D^{2}) \;   \tilde \epsilon_{1}^{\T} F^{-1} F^{-1}
\tilde \epsilon_{1} = \lambda^{2}_{\max}(\tilde D) \;   \tilde \epsilon_{1}^{\T} F^{-2} \tilde \epsilon_{1},
\end{align}
where $\lambda_{\max}(\tilde D)$ is the maximum eigen value of $\tilde D$ and $\lambda_{\max}(\tilde D^{2}) = \lambda^{2}_{\max}(\tilde D)$ because $\tilde D$ is a symmetric matrix. 
The Schur complement of $\tilde C_{22}$ in $\tilde C$ is $\tilde C_{11} - \tilde C_{12}
\tilde C_{22}^{-1} \tilde C_{12}^{\T}$, so $\lambda_{\min}(\tilde C_{11} - \tilde C_{12}
\tilde C_{22}^{-1} \tilde C_{12}^{\T}) \geq 0$, where $\lambda_{\min}(A)$ is the minimum eigen value of $A$. This implies that
\begin{align}
\label{eq:vecB_T2_8}
  F^{{-1}} = (\tilde D + \tilde C_{11} - \tilde C_{12} \tilde C_{22}^{-1} \tilde
C_{12}^{\T})^{{-1}} \preceq \tilde D^{-1} \preceq \lambda_{\min}^{-1}(\tilde D) \; I_{p^*},
  \quad F^{{-2}} \preceq \lambda_{\min}^{-2}(\tilde D) \; I_{p^*},
\end{align}
where $A \preceq B$ for two positive semi-definite matrices $A$ and $B$ implies
that $\lambda_{\min}(B - A) \geq 0$. Using \eqref{eq:vecB_T2_8} in \eqref{eq:vecB_T2_7} gives
\begin{align}
\label{eq:vecB_T2_9}
\tilde \epsilon_{1}^{\T} F^{-1} \tilde D^{2} F^{-1} \tilde \epsilon_{1}
\leq  \frac{\lambda_{\max}^{2}(\tilde D)}{\lambda_{\min}^{2}(\tilde D)} \;   \tilde \epsilon_{1}^{\T} \tilde \epsilon_{1}.
\end{align}

We simplify $\tilde D$ to find upper and lower bounds for
$\lambda_{\max}(\tilde D)$ and $\lambda_{\min}(\tilde D)$. The Assumption (A1) on the support of $\Lambda$ implies that $\lambda_{j_1j_2j_3}^2 \in
[a, 1/a]$ for $j_d = 1, \dots, p_d$ and $ d = 1,2,3$, so 
\begin{align}
\label{eq:vecB_T2_10}
\tilde D &= D V^\T \delta^2 \Lambda V D \preceq
a^{-1} \delta^2  D V^\T  V D  = a^{-1} \delta^2
D^{2}, \nonumber\\
 \tilde D &= D V^\T \delta^2 \Lambda V D \succeq
 a \delta^2  D V^\T  V D = a \delta^2 D^{2}.
\end{align}
Furthermore, (\ref{eq:vecB_T2_1}) implies that $D_{11} = s_{\max}(X)$ and $D_{p^* p^*} =
s_{\min}(X)$, where $s_{\max}(X)$ and $s_{\min}(X)$ are the maximum and minimum singular values of $X$. Using this in \eqref{eq:vecB_T2_10} gives
\begin{align}
\label{eq:vecB_T2_11}
\lambda_{\max}(\tilde D) \leq  a^{-1} \delta^{2} s^{2}_{\max}(X), \quad
\lambda_{\min}(\tilde D) \geq a \delta^{2}  s^{2}_{\min}(X), \quad
\frac{\lambda_{\max}^{2}(\tilde D)}{\lambda_{\min}^{2}(\tilde D)} \leq \frac{1}{a^{4}} \frac{s_{\max}^{4}(X)}{s_{\min}^{4}(X)}.
\end{align}
Substituting the last inequality in \eqref{eq:vecB_T2_9} implies that
\begin{align}
\label{eq:vecB_T2_12}
\tilde \epsilon_{1}^{\T} F^{-1} \tilde D^{2} F^{-1} \tilde \epsilon_{1}
\leq  \frac{1}{a^{4}} \; \kappa^4(X)  \; \tilde \epsilon_{1}^{\T} \tilde \epsilon_{1}, 
\end{align}
where $\kappa(X) = {s_{\max}(X)}/{s_{\min}(X)}$ denotes the condition number of $X$. Finally, we take expectations on both sides in (\ref{eq:vecB_T2_12}). The term $\tilde
\epsilon_{1}^{\T} \tilde \epsilon_{1}$ only depends on the distribution of
$\tilde \epsilon$ and is independent of the distribution of $X$, implying that
\begin{align}
\label{eq:vecB_T2_13}
\EE (\tilde \epsilon_{1}^{\T} F^{-1} \tilde D^{2} F^{-1} \tilde \epsilon_{1})
\leq  \frac{1}{a^{4}} \; \EE_{X} \{\kappa^4(X)\}  \;  \EE_{\tilde \epsilon} (\tilde
\epsilon_{1}^{\T} \tilde \epsilon_{1}) = \frac{\tau_{0}^{2}}{a^{4}} \; \EE_{X} \{\kappa^4(X)\}  \;  p^*,
\end{align}
where $\EE_{\tilde \epsilon} (\tilde \epsilon_{1}^{\T} \tilde \epsilon_{1}) = \tau_{0}^2 p^*$
because $\epsilon_{1} / \tau_{0}$ is distributed as $\Ncal(0, I_{p^*})$.

Consider the second term in (\ref{eq:vecB_T2_6}). The bounds in (\ref{eq:vecB_T2_9}) and
(\ref{eq:vecB_T2_11}) imply that
\begin{align}
\label{eq:vecB_T2_14}
\tilde \epsilon_{2}^{\T} \tilde C_{22}^{-1} \tilde C_{12}^{\T} F^{{-1}} \tilde
D^{2} F^{-1} \tilde C_{12} \tilde C_{22}^{-1} \tilde \epsilon_{2} &\leq
\frac{\lambda_{\max}^{2}(\tilde D)}{\lambda_{\min}^{2}(\tilde D)} \;   \tilde
\epsilon_{2}^{\T} \tilde C_{22}^{-1} \tilde C_{12}^{\T}  \tilde C_{12} \tilde
C_{22}^{-1} \tilde \epsilon_{2} \nonumber \\
&\leq \frac{1}{a^{4}} \kappa^4(X) \;   \tilde
\epsilon_{2}^{\T} \tilde C_{22}^{-1} \tilde C_{12}^{\T}  \tilde C_{12} \tilde
C_{22}^{-1} \tilde \epsilon_{2}.
\end{align}
The quadratic form $\tilde
\epsilon_{2}^{\T} \tilde C_{22}^{-1} \tilde C_{12}^{\T}  \tilde C_{12} \tilde
C_{22}^{-1} \tilde \epsilon_{2}$ is independent of the distribution of $X$, so taking expectation on both sides of (\ref{eq:vecB_T2_14}) gives
\begin{align}
\label{eq:vecB_T2_15}
    \EE (\tilde \epsilon_{2}^{\T} \tilde C_{22}^{-1} \tilde C_{12}^{\T} F^{{-1}} \tilde
D^{2} F^{-1} \tilde C_{12} \tilde C_{22}^{-1} \tilde \epsilon_{2}) &\leq \frac{1}{a^4}\EE_{X}\left\{\kappa^4(X) \right\} \EE(\tilde
\epsilon_{2}^{\T} \tilde C_{22}^{-1} \tilde C_{12}^{\T}  \tilde C_{12} \tilde
C_{22}^{-1} \tilde \epsilon_2).
\end{align}
Let $\EE_{R,S, Z}, \ \EE_{\tilde \epsilon \mid R, S, Z}$ and $\EE_{\Gamma^* \mid y}$ respectively denote the expectations with respect to the distributions of $(R_1, R_2, R_3, S_1, S_2, S_3, Z_1, \dots, Z_n)$, $\tilde \epsilon$ given $(R_1, R_2, R_3, S_1, S_2, S_3, Z_1, \dots, Z_n)$, and $\Gamma^*$ given $y$. Then, the second expectation on the right of \eqref{eq:vecB_T2_15} is
\begin{align}
\label{eq:vecB_T2_16}
\EE(\tilde \epsilon_{2}^{\T} \tilde C_{22}^{-1} \tilde C_{12}^{\T}  \tilde
C_{12} \tilde C_{22}^{-1} \tilde \epsilon_{2}) &=\EE_{R,S,Z} \EE_{\tilde \epsilon \mid R,S,Z} \EE_{\Gamma^* \mid y}
(\tilde \epsilon_{2}^{\T} \tilde C_{22}^{-1} \tilde C_{12}^{\T}  \tilde C_{12}
\tilde C_{22}^{-1} \tilde \epsilon_{2})\nonumber \\
&\leq \EE_{R,S,Z}\EE_{\tilde \epsilon \mid R,S,Z} \{\sup_{{\Gamma^*: \| \Gamma^* \| \leq b^*}} (\tilde
\epsilon_{2}^{\T} \tilde C_{22}^{-1} \tilde C_{12}^{\T}  \tilde C_{12} \tilde
C_{22}^{-1} \tilde \epsilon_{2}) \EE_{\Gamma^* \mid y} (1) \} \nonumber \\
& =  \EE_{R,S,Z}\EE_{\tilde \epsilon \mid R,S,Z} \{\sup_{{\Gamma^*: \| \Gamma^* \| \leq b^*}} (\tilde
\epsilon_{2}^{\T} \tilde C_{22}^{-1} \tilde C_{12}^{\T}  \tilde C_{12} \tilde
C_{22}^{-1} \tilde \epsilon_{2}) \},
\end{align}
where $b^*$ is a universal positive constant defined in Assumption (A3). 

We use a covering argument to bound the last term involving the supremum. Lemma \ref{auxlem1} shows that there is a $1/ \log N$-net, denoted as $\tilde \Gcal$, for the set $\Gcal = \{ \Gamma^* \in \RR^{k^* \times k^*}: \| \Gamma^* \| \leq b^*\}$ such that the cardinality of $\tilde \Gcal$ is $\lceil 2b^*k^*\log N \rceil^{k^{*2}}$. This implies that for any $\Gamma^* \in \Gcal$, there is a  $\tilde \Gamma^* \in \tilde \Gcal$ such that $\| \Gamma^* - \tilde \Gamma^* \| \leq 1 / \log N$. Let  $A(\Gamma^*) = \tilde C_{12}(\Gamma^*) \tilde C_{22} (\Gamma^*)^{-1}$ be the matrix that appears in \eqref{eq:vecB_T2_16}, where the notation emphasizes the dependence on $\Gamma^*$. We show that the norms of terms involving $A(\Gamma^*)$ for any $\Gamma^* \in \Gcal$ are controlled using $A(\tilde \Gamma^*)$ for  $\tilde \Gamma^* \in \tilde \Gcal$. Specifically,
\begin{align}\label{eq:vecB_T2_17}
\tilde \epsilon_{2}^{\T} \left\{ A(\Gamma^*)^{\T} A(\Gamma^*) - A(\tilde \Gamma^*)^{\T} A(\tilde \Gamma^*)\right\} \tilde \epsilon_{2} &\leq \| A(\Gamma^*)^{\T} A(\Gamma^*) - A(\tilde \Gamma^*)^{\T} A(\tilde \Gamma^*) \| \| \tilde \epsilon_{2} \|^2 \nonumber\\
&\overset{(i)}{\leq} 4b^* (2\breve{c} + 3 b^{*2} \breve{c}^2 + b^{*4} \breve{c}^3)\, \| \Gamma^* - \tilde \Gamma^*\| \nonumber\\
&\overset{(ii)}{\leq} 4b^* (2\breve{c} + 3 b^{*2} \breve{c}^2 + b^{*4} \breve{c}^3) /\log N
\end{align} 
where $(i)$ follows from Lemma \ref{auxlem2}, $\breve{c}$ is defined in Lemma \ref{auxlem2},  and $(ii)$ follows because $\tilde \Gcal$ is a $1/\log N$-net of $\Gcal$.

Using  \eqref{eq:vecB_T2_17} in  \eqref{eq:vecB_T2_16} gives 
\begin{align}
\label{eq:vecB_T2_18}
\sup_{\Gamma^*: \| \Gamma^* \| \leq b^*} (\tilde
\epsilon_{2}^{\T} \tilde C_{22}^{-1} \tilde C_{12}^{\T}  \tilde C_{12} \tilde
C_{22}^{-1} \tilde \epsilon_{2}) &=
\sup_{{\Gamma^*: \| \Gamma^* \| \leq b^*}} \tilde \epsilon_{2}^{\T} \Bigl[
 \left\{ A(\Gamma^*)^{\T} A(\Gamma^*) - A(\tilde \Gamma^*)^{\T} A(\tilde \Gamma^*)\right\} \nonumber \\
 &\quad + \left\{ A(\tilde \Gamma^*)^{\T} A(\tilde \Gamma^*)\right\} \Bigr] \tilde \epsilon_{2} \nonumber\\
&\leq \frac{4b^*}{\log N} (2\breve{c} + 3 b^{*2} \breve{c}^2 + b^{*4} \breve{c}^3)  \| \tilde \epsilon_{2} \|^{2} + \underset{\tilde \Gamma^* \in \tilde \Gcal}{\max} \; \tilde \epsilon_{2}^{\T} \left\{ A(\tilde \Gamma^*)^{\T} A(\tilde \Gamma^*)\right\} \tilde \epsilon_{2},\nonumber\\
\EE_{\tilde \epsilon \mid R,S, Z} \left\{ \sup_{{\Gamma^*: \| \Gamma^* \| \leq b^*}} (\tilde
\epsilon_{2}^{\T} \tilde C_{22}^{-1} \tilde C_{12}^{\T}  \tilde C_{12} \tilde
C_{22}^{-1} \tilde \epsilon_{2}) \right\} &\leq   \frac{4b^*}{\log N} (2\breve{c} + 3 b^{*2} \breve{c}^2 + b^{*4} \breve{c}^3)   
 \EE_{\tilde \epsilon \mid R,S, Z} \| \tilde \epsilon_{2} \|^{2} + \nonumber\\
&\qquad \EE_{\tilde \epsilon \mid R, S, Z} \underset{\tilde \Gamma^* \in \tilde \Gcal}{\max} \; \tilde \epsilon_{2}^{\T} \left\{ A(\tilde \Gamma^*)^{\T} A(\tilde \Gamma^*)\right\} \tilde \epsilon_{2} \nonumber\\
&\overset{(i)}{=} \frac{4b^*\tau_0^2(N-p^*)}{\log N} (2\breve{c} + 3 b^{*2} \breve{c}^2 + b^{*4} \breve{c}^3)  
    + \nonumber\\
&\qquad \EE_{\tilde \epsilon \mid R, S, Z} \underset{\tilde \Gamma^* \in \tilde \Gcal}{\max} \; \tilde \epsilon_{2}^{\T} \left\{ A(\tilde \Gamma^*)^{\T} A(\tilde \Gamma^*)\right\} \tilde \epsilon_{2},
\end{align}
where $(i)$ follows because $ \epsilon_{2} / \tau_{0}$ is distributed as $\Ncal(0, I_{N-p^*})$. 
For every $\tilde \Gamma^* \in \tilde \Gcal$, the rank of $A(\tilde \Gamma^*)  = \tilde C_{12} (\tilde \Gamma^*) \tilde C_{22}(\tilde \Gamma^*)^{-1}$ is $\min\{k^*, p^*\}$ and \eqref{eq:bound_A_Gamma} implies that $\| A(\tilde \Gamma^*)\|$ is bounded above by $c_*$, where $c_* = 1 + b^{*2} \check c$ is the upper bound on $\|C\|$. This in turn implies that $\|A(\tilde \Gamma^*)\|^{2}_{2}$ is stochastically dominated by $c_*^{2} \tau_{0}^{2}\| \breve{\epsilon} \|^{2}$, where $\breve{\epsilon}$ is distributed as $\Ncal(0, I_{\min(k^*, p^*)})$ and $\| \breve{\epsilon} \|^{2}$ is distributed as $\chi^{2}_{\min(k^*, p^*)}$. Using this fact, we provide an upper bound for the expectation in \eqref{eq:vecB_T2_18} as follows
\begin{align}
\label{eq:vecB_T2_19}
\EE_{\tilde \epsilon \mid R, S, Z} \underset{\tilde \Gamma^* \in \tilde \Gcal}{\max} \; \tilde \epsilon_{2}^{\T} A(\tilde \Gamma^*)^{\T} A(\tilde \Gamma^*) \tilde \epsilon_{2} & = \int_{0}^{\infty} \PP \left \{ \underset{\tilde \Gamma^* \in \tilde \Gcal}{\max} \; \| A(\tilde \Gamma^*) \tilde \epsilon_{2} \|^{2}_{2} > t \right\} dt \nonumber \\
&= 2c_*^{2} \tau_{0}^{2} p^*  +
\int_{2 c_*^{2} \tau_{0}^{2} p^*}^{\infty} \PP \left \{ \underset{\tilde \Gamma^* \in \tilde \Gcal}{\max} \;\| A(\tilde \Gamma^*) \tilde \epsilon_{2} \|^{2}_{2} > t \right\} dt, \nonumber \\
\int_{2 c_*^{2} \tau_{0}^{2} p^*}^{\infty} \PP \left \{ \underset{\tilde \Gamma^* \in \tilde \Gcal}{\max} \;\| A(\tilde \Gamma^*) \tilde \epsilon_{2} \|^{2}_{2} > t \right\} dt
 &\overset{(i)}{\leq}
\sum_{\tilde \Gamma^* \in \tilde \Gcal} \int_{2 c_*^{2} \tau_{0}^{2} p^*}^{\infty} \PP \left \{ \| A(\tilde \Gamma^*) \tilde \epsilon_{2} \|^{2}_{2} > t \right\} dt \nonumber \\
&\overset{(ii)}{\leq}
\sum_{\tilde \Gamma^* \in \tilde \Gcal}
\int_{2 c_*^{2} \tau_{0}^{2} p^*}^{\infty} \PP \left \{ c_*^{2} \tau_{0}^{2} \chi^{2}_{\min(k^*, p^*)} > t \right\} dt \nonumber \\
&\overset{(iii)}{\leq}
\sum_{\tilde \Gamma^* \in \tilde \Gcal}
\int_{2 c_*^{2} \tau_{0}^{2} p^*}^{\infty} \PP \left( \chi^{2}_{p^*} > t / c_*^{2} \tau_{0}^{2}  \right) dt \nonumber\\
&\overset{(iv)}{=}
c_*^{2} \tau_{0}^{2}  \lceil 2 b^* k^* \log N \rceil^{k^{*2}}
\int_{p^*}^{\infty} \PP \left( \chi^{2}_{p^*} - p^* > u  \right) du,
\end{align}
where $(i)$ follows from the union bound, $(ii)$ follows because  $c_*^{2} \tau_{0}^{2}\| \breve{\epsilon} \|^{2}$ stochastically dominates $\| A(\tilde \Gamma^*) \tilde \epsilon_{2} \|^{2}_{2}$,  $(iii)$ follows because $ \chi^{2}_{p^*}$ stochastically dominates $\chi^{2}_{\min(k^*, p^*)}$, and $(iv)$ follows from the substitution $u = t/c_*^{2} \tau_{0}^{2} - p^*$. Definition 2.7 in \citet{Wainwright_2019} implies that if $f$ follows $\chi^{2}_{p}$ distribution, then its moment generating function is
\begin{align*}
\EE \{e^{\lambda (f - p)}\} = \frac{e^{- \lambda p}}{(1 - 2 \lambda)^{p/2}} \leq e^{4 p \lambda^{2} / 2}, \quad |\lambda | < 1/4,
\end{align*}
and it is subexponential with parameters $(\nu, \alpha) = (2 \sqrt{p}, 4)$. Proposition 2.9  in \citet{Wainwright_2019} gives
\begin{align}
\label{eq:vecB_T2_20}
\int_{p}^{\infty} \PP \left( \chi^{2}_{p} - p > u  \right) du \leq \int_{p}^{\infty} e^{- \frac{u}{8}} du = 8 e^{-p/8}.
\end{align}
Using (\ref{eq:vecB_T2_20}) in (\ref{eq:vecB_T2_19}) gives
\begin{align}
\label{eq:vecB_T2_21}
\EE_{\tilde \epsilon \mid R, S, Z} \underset{\tilde \Gamma^* \in \tilde \Gcal}{\max} \; \tilde \epsilon_{2}^{\T} A(\tilde \Gamma^*)^{\T} A(\tilde \Gamma^*) \tilde \epsilon_{2}
&\leq 2 c_*^{2} \tau_{0}^{2} p^*  + 8 c_*^{2} \tau_{0}^{2}  \lceil 2 b^* k^* \log N \rceil^{k^{*2}}  e^{-p^*/8} \nonumber \\
&= 2 c_*^{2} \tau_{0}^{2} p^*  + 8 c_*^{2} \tau_{0}^{2} e^{k^{*2} \log \lceil 2 b^* k^* \log N \rceil -p^*/8 } \nonumber\\
&\leq 2 c_*^{2} \tau_{0}^{2} p^*  + 8 c_*^{2} \tau_{0}^{2} e^{k^{*2} \log  \{(2 b^* + 1) k^* \log N\}  -p^*/8 } \nonumber\\
&= 2 c_*^{2} \tau_{0}^{2} p^*  + 8 c_*^{2} \tau_{0}^{2} e^{k^{*2} \log  (2 b^* + 1) +  k^{*2} \log k^* + k^{*2} \log \log N  -p^*/8 } \nonumber\\
&\overset{(i)}{=} 2 c_*^{2} \tau_{0}^{2} p^*  + 8 c_*^{2} \tau_{0}^{2}e^{-p^*/8 \{1 + o(1) \}}, 
\end{align}
where $(i)$ follows if  $k^{*2} \log k^* = o(p^*)$ and $k^{*2} \log \log N = o(p^*)$.
We combine \eqref{eq:vecB_T2_16}, \eqref{eq:vecB_T2_18}, \eqref{eq:vecB_T2_21} and use them in \eqref{eq:vecB_T2_15} to obtain
\begin{align}
\label{eq:vecB_T2_22}
&\EE (\tilde \epsilon_{2}^{\T} \tilde C_{22}^{-1} \tilde C_{12}^{\T} F^{{-1}} \tilde
D^{2} F^{-1} \tilde C_{12} \tilde C_{22}^{-1} \tilde \epsilon_{2}) \nonumber\\
&\leq \frac{\tau_0^2}{a^4}\EE_{X}\left\{ \kappa^4(X) \right\} \, \EE_{R, S, Z} \left[\frac{4b^*(N-p^*)}{\log N} (2\breve c + 3 b^{*2} \breve c^2 + b^{*4} \breve c^3)  + 2 c_*^{2} p^*  + 8c_*^{2} e^{-p^*/8 }\right] \nonumber\\
&= \frac{2\tau_0^2}{a^4}\EE_{X}\left\{ \kappa^4(X) \right\} \, \EE_{R, S, Z} \left[\frac{2b^*(N-p^*)}{\log N} (2\breve c + 3 b^{*2} \breve c^2 + b^{*4} \breve c^3)  +  (1 + b^{*4} \breve c^2 + 2 b^{*2} \breve c) (p^*  + 4 e^{-p^*/8}) \right] \nonumber\\
&= \frac{2\tau_0^2}{a^4}\EE_{X}\left\{ \kappa^4(X) \right\} \, \left\{ (p^*  + 4 e^{-p^*/8}) + a_{1} \EE_{R, S, Z}(\breve c) + a_2 \EE_{R, S, Z} (\breve c^2) + a_3 \EE_{R, S, Z} (\breve c^3) \right\}, \nonumber\\
a_1 &= \frac{4b^*(N-p^*)}{\log N} + 2 b^{*2} (p^*  + 4 e^{-p^*/8}), \quad a_2 = \frac{6b^{*3}(N-p^*)}{\log N} + b^{*4}(p^*  + 4 e^{-p^*/8}), \quad 
a_3 =  \frac{2b^{*5}(N-p^*)}{\log N}. 
\end{align}
Finally, taking expectation on both sides of \eqref{eq:vecB_T2_6}, and substituting \eqref{eq:vecB_T2_13}, \eqref{eq:vecB_T2_22}, 
and using $r=1,2,3$ in Lemma \ref{auxlem3}
gives
\begin{align*}
\EE (T_2)
&\leq \frac{2\tau_{0}^{2}}{a^{4}} \; \EE_{X}\bigl\{\kappa^4(X)\bigr\} \; \Bigl[3p^* + 8e^{-p^*/8} \\
&\quad + 2^8  b^*\sigma_Z^2\Bigl\{\frac{2(N-p^*)}{\log N} + b^* (p^*  + 4 e^{-p^*/8})\Bigr\} \prod_{d=1}^3\frac{1}{k_d^{2}}\left(q_d + k_d + 2\right)^2\,\{n(q^*+2)+N\} \\
&\quad + 2^{19}b^{*3}\sigma_Z^4\, \Bigl\{\frac{6(N-p^*)}{\log N}  + b(p^* + 4e^{-p^*/8}) \Bigr\}\prod_{d=1}^3\frac{1}{k_d^{4}}\left(q_d^2 + k_d^2 + 2\right)^2\,\Bigl\{n(q^{*2}+2)+\sum_{i=1}^n m_i^2\Bigr\} \\
&\quad + 2^{32}b^{*5}\sigma_Z^6\,\Bigl( \frac{N-p^*}{\log N} \Bigr)\prod_{d=1}^3\frac{1}{k_d^{6}}\left(q_d^3 + k_d^3 + 3\right)^2\,\Bigl\{n(q^{*3}+3)+\sum_{i=1}^n m_i^3\Bigr\}\Bigr] \nonumber \\
&\overset{(i)}{\leq} \frac{2\tau_{0}^{2}}{a^{4}} \; \EE_{X}\bigl\{\kappa^4(X)\bigr\} \; \Bigl[3p^* + 8e^{-p^*/8} \\
&\quad + 2^8  b^*\sigma_Z^2\Bigl\{\frac{2(N-p^*)}{\log N} + b^* (p^*  + 4 e^{-p^*/8})\Bigr\} \prod_{d=1}^3 \frac{1}{k_d^2}(4q_d^2+4k_d^2+8)\,\{n(q^*+2)+N\} \\
&\quad + 2^{19}b^{*3}\sigma_Z^4\, \Bigl\{\frac{6(N-p^*)}{\log N} + b^*(p^* + 4 e^{-p^*/8}) \Bigr\}\prod_{d=1}^3 \frac{1}{k_d^4}(4 q_d^4+4 k_d^4+8)\,\Bigl\{n(q^{*2}+2)+n m_{\max}^2\Bigr\} \\
&\quad + 2^{32}b^{*5}\sigma_Z^6\,\Bigl( \frac{N-p^*}{\log N} \Bigr)\prod_{d=1}^3 \frac{1}{k_d^6}(4 q_d^6+4 k_d^6+18)\,\Bigl\{n(q^{*3}+3)+n m_{\max}^3\Bigr\}\Bigr], \\
&= \frac{2\tau_{0}^{2}}{a^{4}} \; \EE_{X}\bigl\{\kappa^4(X)\bigr\} \; \Bigg\{ \left(\frac{N-p^*}{\log N}\right)O\left( nb^{*5}\sigma_Z^6 \left(\prod_{d=1}^3\frac{q_d^9}{k_d^6} + \prod_{d=1}^3 \frac{m_{\max}^3}{k_d^6}\right)\right) \\
&\quad + (p^* + 4e^{-p^*/8})O\left(nb^{*3}\sigma_Z^4\left(\prod_{d=1}^3\frac{q_d^6}{k_d^4}+ \prod_{d=1}^3 \frac{m_{\max}^2}{k_d^4} \right)\right)\Bigg\}
\end{align*}
where $(i)$ follows from the inequality $(u+v)^2 \leq 2(u^2 + v^2)$. The lemma is proved.

\end{proof}

\subsection{Upper Bound for $T_1$}\label{Section:vecB_ProofT1}
\begin{lemma}\label{vecB_lemt1}
Under the assumptions of Theorem 1 in the main manuscript,
\begin{align*}
\EE(T_1) \leq  O\left(nb^{*8}\sigma_Z^8 \left(\prod_{d=1}^3 \frac{q_d^{12}}{k_d^8} + \prod_{d=1}^3\frac{m_{\max}^4}{k_d^8} \right) + nb^{*8}\sigma_Z^8 \right) \EE_{X} \left\{
\kappa^4(X) \right\} \| \bm{\beta}^0\|_2^{2}
\end{align*}

\end{lemma}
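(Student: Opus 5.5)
We bound $T_1 = \| C (X \delta^2 \Lambda X^\T + C)^{-1} X \bm{\beta}^0 \|_2^2$ by reusing the SVD and block decomposition from the proof of Lemma~\ref{vecB_lemt2}. With $X = U [D; 0] V^\T$ and $\tilde C = U^\T C U$, we have $U^\T X\bm\beta^0 = (w^\T, 0^\T)^\T$ with $w = D V^\T \bm\beta^0 \in \RR^{p^*}$. Using \eqref{eq:vecB_T2_2} and the block inverse \eqref{eq:blockmatinv}, the nonzero part of $(X\delta^2\Lambda X^\T + C)^{-1} X\bm\beta^0$ is $(F^{-1} w,\ -\tilde C_{22}^{-1}\tilde C_{12}^\T F^{-1} w)$. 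Multiplying by $\tilde C$ and using orthonormality of $U$ gives
\begin{align*}
T_1 \le \|\tilde C\|^2 \left(1 + \|\tilde C_{22}^{-1}\tilde C_{12}^\T\|^2\right) \|F^{-1}\|^2 \|w\|_2^2 .
\end{align*}

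Next we control each factor. From \eqref{eq:vecB_T2_8} and \eqref{eq:vecB_T2_11}, $\|F^{-1}\| \le \lambda_{\min}^{-1}(\tilde D) \le \{a\delta^2 s_{\min}^2(X)\}^{-1}$. Also $\|w\|_2^2 \le s_{\max}^2(X)\|\bm\beta^0\|_2^2$. This leaves a factor $s_{\max}^2/s_{\min}^4$ rather than a clean $\kappa^4(X)$. To fix this, we will not bound $F^{-1}$ crudely. Instead we use $F \succeq \tilde D$ together with the identity $C(X\delta^2\Lambda X^\T+C)^{-1}X = X - X\{X^\T C^{-1}X+(\delta^2\Lambda)^{-1}\}^{-1}X^\T C^{-1}X$ (Lemma A.3 of \cite{2025_SarKhSri}). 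Writing this as $X(\delta^2\Lambda)^{-1}\{X^\T C^{-1}X+(\delta^2\Lambda)^{-1}\}^{-1}$, we get
\begin{align*}
T_1 \le \|X\|^2 \, \|(\delta^2\Lambda)^{-1}\|^2 \, \lambda_{\min}^{-2}(X^\T C^{-1} X)\, \|\bm\beta^0\|^2 \le \frac{s_{\max}^2(X)\,\|C\|^2}{a^2\delta^4\, s_{\min}^4(X)}\,\|\bm\beta^0\|^2 .
\end{align*}
Here $\delta^2$ is allowed to depend on $N$. We then use that the relevant ratio scales like $\kappa^4(X)$ times the powers of $\|C\|$, after normalizing $X$ by $\sqrt N$ and using $s_{\max}^2(X) \asymp N$. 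We also use that $\delta^2$ is not shrinking too fast, which is absorbed into the $O(\cdot)$. Either route leaves a factor $\kappa^4(X)$ and at most the fourth power of $\|C\|$ (or of $\|\tilde C\|$ and $\|\tilde C_{22}^{-1}\|$). Because $\tilde C \succeq I$, $\|\tilde C_{22}^{-1}\| \le 1$ and $\|\tilde C_{12}\| \le \|C\|$, so in all cases
\begin{align*}
T_1 \lesssim \kappa^4(X)\,\|C\|^4\, \|\bm\beta^0\|_2^2 .
\end{align*}

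The stochastic part is $\|C\|$, where $C = \operatorname{diag}(C_i)$ with $C_i = I + Z_iS^{*\T}\Gamma^*R^*R^{*\T}\Gamma^{*\T}S^*Z_i^\T$. Since $\|\Gamma^*\| \le b^*$ on $\Gcal$, this gives $\|C\| \le 1 + b^{*2}\breve c$, with $\breve c$ the quantity from Lemma~\ref{auxlem2} built from $\max_i \|Z_iS^{*\T}\|^2\|R^*\|^2$. This bound is uniform in $\Gamma^*$, so the posterior expectation $\EE_{\phi\mid y}$ is handled trivially, as in \eqref{eq:vecB_T2_16}. Expanding $(1+b^{*2}\breve c)^4 \le 8(1 + b^{*8}\breve c^4)$ reduces the task to bounding $\EE_{R,S,Z}(\breve c^4)$. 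By independence of $X$ from $(R,S,Z)$, the expectation factorizes into $\EE_X\kappa^4(X)$ times this moment.

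We then apply Lemma~\ref{auxlem3} with $r = 4$. Using the Kronecker structure, the moment factorizes modewise into Gaussian matrix moments of $S_d$ and $R_d$ with variance $1/k_d$, and of $Z_i$ with variance $\sigma_Z^2$. The result is $2^{c}\sigma_Z^8\prod_d k_d^{-8}(q_d^4+k_d^4+4)^2\{n(q^{*4}+4)+\sum_i m_i^4\}$. Simplifying with $(u+v)^2 \le 2(u^2+v^2)$ and $m_i \le m_{\max}$ gives $O\bigl(nb^{*8}\sigma_Z^8(\prod_d q_d^{12}/k_d^8 + \prod_d m_{\max}^4/k_d^8)\bigr)$, and the constant $1$ term contributes the extra $nb^{*8}\sigma_Z^8$ summand.

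The main obstacle is the deterministic step. We must get exactly $\kappa^4(X)$ rather than an unnormalized $s_{\max}^2/s_{\min}^4$, while keeping $\delta^2$ from appearing. If the Schur-complement route does not close cleanly, the fallback is to bound $T_1 \le \|C\|^2\,\|(X\delta^2\Lambda X^\T + C)^{-1}X\bm\beta^0\|^2$ and compare with the $T_2$ analysis via $\tilde D F^{-1} \preceq$ its $\kappa^2/a^2$ bound. This route accepts one extra power of $\|C\|$, which is consistent with the fourth-moment rate in the statement.
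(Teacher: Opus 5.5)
Your reduction is sound up to $T_1\le\|C\|^2 s_{\max}^2(X)\|\bm\beta^0\|_2^2/\{a^2\delta^4 s_{\min}^4(X)\}$, or the analogous bound with $\|C\|^2(1+\|C\|^2)$ from the block inverse. (The Woodbury identity actually gives $X\{X^\T C^{-1}X+(\delta^2\Lambda)^{-1}\}^{-1}(\delta^2\Lambda)^{-1}$, with the factors in that order, but the norm bound is unaffected.) The uniform bound $\|C\|\le 1+b^{*2}\breve c$ and the moment step through Lemma~\ref{auxlem3} also match the paper.

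\textbf{The gap.} The step that fails is turning $s_{\max}^2(X)/\{a^2\delta^4 s_{\min}^4(X)\}=\kappa^4(X)/\{a^2\delta^4 s_{\max}^2(X)\}$ into $\kappa^4(X)$. This needs $a^2\delta^4 s_{\max}^2(X)\gtrsim 1$, and nothing in (A1)--(A3) bounds $\delta^2$ from below. It cannot be ``absorbed into the $O(\cdot)$'', for three reasons.
\begin{itemize}
\item $\delta^2$ is integrated against its posterior inside $\EE_{\phi\mid y}$. Its half-Cauchy prior, which (A1) does not truncate, leaves posterior mass near $0$.
\item The $O(\cdot)$ in the statement involves only $n,b^*,\sigma_Z,q_d,k_d,m_{\max}$.
\item A surviving $\delta$ couples $X$ with $(R,S,Z)$ through $y$, so the factorization into $\EE_X\{\kappa^4(X)\}$ times a moment of $\breve c$ breaks.
\end{itemize}
The $\delta$-dependence is genuine. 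As $\delta^2\to 0$, $T_1\to\|X\bm\beta^0\|_2^2$, whose $X$-expectation is $N\sigma_X^2\|\bm\beta^0\|_2^2$, so no $\delta$-free bound of the stated form holds uniformly in $\delta$. Rescaling $X$ by $\sqrt N$ is not available: it changes the model, rescaling both $\delta^2$ and $X\bm\beta^0$. Your fallback through the $T_2$ bound has the same defect. Writing $F^{-1}w=(F^{-1}\tilde D)\,\tilde D^{-1}w$ gives $\|F^{-1}\tilde D\|\le\kappa^2(X)/a^2$, but $\tilde D^{-1}w=D^{-1}V^\T(\delta^2\Lambda)^{-1}\bm\beta^0$ still carries a factor $1/\{a\delta^2 s_{\min}(X)\}$.

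\textbf{How to close it.} Impose an explicit lower bound on the support of $\delta^2$, in the spirit of the compact support for $\lambda^2_{j_1j_2j_3}$ in (A1). For instance, $\delta^2\ge 1/\{a\,s_{\max}(X)\}$ gives $s_{\max}^2(X)/\{a^2\delta^4 s_{\min}^4(X)\}\le\kappa^4(X)$ exactly. The bound then depends on $X$ only through $\kappa(X)$, and the rest of your argument goes through: $(1+b^{*2}\breve c)^4\le 8(1+b^{*8}\breve c^4)$ and Lemma~\ref{auxlem3} with $r=4$.

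The paper's proof does not supply this either. Display \eqref{eq:vecB_T1_4} ends with $\lambda_{\max}^2(D)\|\bm\beta^0\|_2^2/\lambda_{\min}^2(\tilde D)=\kappa^4(X)\|\bm\beta^0\|_2^2$. This silently drops $a^2\delta^4$ and identifies $s_{\max}^2/s_{\min}^4$ with $s_{\max}^4/s_{\min}^4$. So your proposal is essentially the paper's argument, weak point included; you were right to flag it, but flagging it is not resolving it.

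Two minor slips:
\begin{itemize}
\item For $r=4$, Lemma~\ref{auxlem3} has $2^{2-r}r\Gamma(r)=6$, not $4$.
\item The $nb^{*8}\sigma_Z^8$ summand comes from that term. The constant $1$ in $8(1+b^{*8}\breve c^4)$ instead contributes an additive $O(1)$, which the stated bound omits; the paper's computation likewise drops its leading ``$10+$''.
\end{itemize}
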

\begin{proof}
The term $T_{1}$ satisfies
\begin{align}
\label{eq:vecB_T1_1}
T_1 \leq \| C \|^2 \, \| (X \delta^2 \Lambda X^\T + C)^{-1} X  \bm{\beta}^{0}\|_2^2 \leq c_*^2 \, \| (X \delta^2 \Lambda X^\T + C)^{-1} X  \bm{\beta}^0\|_2^2 ,
\end{align}
where $c_* = 1 + b^{*2} \breve{c}$ is the upper bound of $\|C\|$ in \eqref{normC-bd}. Using \eqref{eq:vecB_T2_1}, \eqref{eq:vecB_T2_2} and \eqref{eq:blockmatinv}, we get
\begin{align}
\label{eq:vecB_T1_2}
 (X \delta^2 \Lambda X^\T + C)^{-1} X  \bm{\beta}^0 &= U
\begin{bmatrix}
  F^{-1} &  - F^{{-1}} \tilde C_{12} \tilde C_{22}^{-1} \\
  - \tilde C_{22}^{{-1}} \tilde C_{12}^{\T} F^{-1}  & \tilde C_{22}^{-1} + \tilde
  C_{22}^{-1} C_{12}^{\T} F^{{-1}} \tilde C_{12} \tilde C_{22}^{-1}
\end{bmatrix}
U^{\T} U
  \begin{bmatrix}
  D \\
  0_{N-p \times p}
  \end{bmatrix}
  V^\T \bm{\beta}^0 \nonumber\\
  &= U
\begin{bmatrix}
  F^{-1} D V^{\T} \bm{\beta}^0\\
  - \tilde C_{22}^{{-1}} \tilde C_{12}^{\T} F^{-1}  D V^{\T} \bm{\beta}^0
\end{bmatrix}.
\end{align}
This further implies that the second term on the right-hand side of \eqref{eq:vecB_T1_1} satisfies
\begin{align}
\label{eq:vecB_T1_3}
 \| (X \delta^2 \Lambda X^\T + C)^{-1} X  \bm{\beta}^0 \|_2^{2} &= \|  F^{-1} D V^{\T}
 \bm{\beta}^0 \|_2^{2} + \| \tilde C_{22}^{{-1}} \tilde C_{12}^{\T} F^{-1}  D V^{\T}
 \bm{\beta}^0\|_2^{2} \nonumber\\
 &\leq \bm{\beta}^{0\T} V D F^{-2} D
V^{\T} \bm{\beta}^0 + \| \tilde C_{22}^{{-1}} \tilde C_{12}^{\T}\|^2 \|F^{-1}  D V^{\T}
 \bm{\beta}^0\|_2^{2} \nonumber\\
&= \bm{\beta}^{0\T} V D F^{-2} D
V^{\T} \bm{\beta}^0 + \|A(\Gamma)^{\T}\|^2\bm{\beta}^{0\T} V D F^{-2} D
V^{\T} \bm{\beta}^0 \nonumber\\
&\leq \{1 + (1 + b^{*2} \breve{c})^2\} \bm{\beta}^{0\T} V D F^{-2} D V^{\T} \bm{\beta}^0,
\end{align}
which follows from \eqref{eq:bound_A_Gamma} in Lemma \ref{auxlem2}.

Following the same arguments to derive \eqref{eq:vecB_T2_13}, we obtain that
\begin{align}
\label{eq:vecB_T1_4}
\bm{\beta}^{0\T} V D F^{-2} D V^{\T} \bm{\beta}^0 &\leq
\frac{1}{\lambda_{\min}^{2}(\tilde D)} \bm{\beta}^{0\T} V D^{2} V^{\T} \bm{\beta}^0 \nonumber \\ &\leq \frac{\lambda_{\max}^{2}(
  D)}{\lambda_{\min}^{2}(\tilde D)} \, \bm{\beta}^{0\T} V V^{\T} \bm{\beta}^0 \nonumber \\
  &=  \frac{\lambda_{\max}^{2}(
  D)}{\lambda_{\min}^{2}(\tilde D)} \,\| \bm{\beta}^0\|_2^{2} \nonumber \\
  & = \kappa^4(X) \| \bm{\beta}^0\|_2^{2}.
\end{align}
Substituting \eqref{eq:vecB_T1_4} and \eqref{eq:vecB_T1_3} in \eqref{eq:vecB_T1_1} gives
\begin{align}\label{vecB_bound_T1}
    T_1 &\leq (1 + b^{*2} \breve{c})^2 \{1 + (1 + b^{*2} \breve{c})^2\}\kappa^4(X) \| \bm{\beta}^0\|_2^{2} \nonumber \\
    &= \{(1 + b^{*2} \breve{c})^2 + (1 + b^{*2} \breve{c})^4 \} \kappa^4(X) \| \bm{\beta}^0\|_2^{2} \nonumber \\
    &\leq \{2(1 + b^{*4} \breve{c}^2) + 2^3(1 + b^{*8} \breve{c}^4)\}\kappa^4(X) \| \bm{\beta}^0\|_2^{2},
\end{align}
which follows from the inequality $(u+v)^r \leq 2^{r-1}(u^r + v^r)$. Taking expectations on both sides of \eqref{vecB_bound_T1} implies that
\begin{align*}
\EE(T_{1}) &\leq \EE_{R,S,Z}\left\{2(1 + b^{*4} \breve{c}^2) + 2^3(1 + b^{*8} \breve{c}^4)\right\} \EE_{X} \left\{
\kappa^4(X) \right\} \| \bm{\beta}^0\|_2^{2} \nonumber \\
&\overset{(i)}{\leq} \Bigl[10 + 2^{19}b^{*4} \sigma_Z^4 \prod_{d=1}^3 \frac{1}{k_d^4}(q_d^2 + k_d^2 + 2)^2\Bigl\{n(q^{*2}+2)+\sum_{i=1}^n m_i^2\Bigr\} \nonumber \\
&\quad + 2^{45}b^8\sigma_Z^8 \prod_{d=1}^3 \frac{1}{k_d^8}(q_d^4 + k_d^4 + 6)^2\Bigl\{n(q^{*4}+6)+\sum_{i=1}^n m_i^4\Bigr\}\Bigr]\EE_{X} \left\{
\kappa^4(X) \right\} \| \bm{\beta}^0\|_2^{2} \nonumber \\
&\overset{(ii)}{\leq} \Bigl[10 + 2^{19}b^{*4} \sigma_Z^4 \prod_{d=1}^3 \frac{1}{k_d^4}(4q_d^4 + 4k_d^4 + 8)\Bigl\{n(q^{*2}+2)+\sum_{i=1}^n m_i^2\Bigr\} \nonumber \\
&\quad + 2^{45}b^{*8}\sigma_Z^8 \prod_{d=1}^3 \frac{1}{k_d^8}(4q_d^8 + 4k_d^8 + 72)\Bigl\{n(q^{*4}+6)+\sum_{i=1}^n m_i^4\Bigr\}\Bigr]\EE_{X} \left\{
\kappa^4(X) \right\} \| \bm{\beta}^0\|_2^{2} \\
&\leq \Bigl[10 + 2^{19}b^{*4} \sigma_Z^4\prod_{d=1}^3 \frac{1}{k_d^4}(4 q_d^4 + 4 k_d^4 + 8)\Bigl\{n(q^{*2}+2)+n m_{\max}^2\Bigr\} \nonumber \\
&\quad + 2^{45}b^{*8}\sigma_Z^8\prod_{d=1}^3 \frac{1}{k_d^8}(4 q_d^8 + 4 k_d^8 + 72)\Bigl\{n(q^{*4}+6)+n m_{\max}^4\Bigr\}\Bigr]\EE_{X} \left\{
\kappa^4(X) \right\} \| \bm{\beta}^0\|_2^{2} \\
&= O\left(nb^{*8}\sigma_Z^8 \left(\prod_{d=1}^3 \frac{q_d^{12}}{k_d^8} + \prod_{d=1}^3\frac{m_{\max}^4}{k_d^8} \right) + nb^{*8}\sigma_Z^8 \right) \EE_{X} \left\{
\kappa^4(X) \right\} \| \bm{\beta}^0\|_2^{2} ,
\end{align*}
where $(i)$ holds by using $r = 2$ and $r = 4$ in Lemma \ref{auxlem3} and $(ii)$ follows from the inequality $(u+v)^2 \leq 2(u^2 + v^2)$. The lemma is proved.
\end{proof}

\subsection{Auxiliary Lemmas}
\begin{lemma}\label{auxlem1}
Consider the set of matrices $\Gcal_{d} = \{\Gamma_d \in \RR^{k_d \times k_d}: \| \Gamma_d \| \leq b_d \}$ for all mode $d = 1, 2, 3$. Define, $\Gcal = \{\Gamma^* \in \RR^{k^* \times k^*}: \| \Gamma^* \| \leq b^* \}$ with $b^* = b_1 b_2 b_3$. Then, there is a $1/(\log N)$-cover for the metric space $(\Gcal, \| \cdot \|)$ with cardinality $\lceil 2 b^* k^* \log N \rceil^{k^{*2}} $, where $k^* = k_1 k_2 k_3$.
\end{lemma}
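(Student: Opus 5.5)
The plan is to build the cover by an entrywise grid construction, using the elementary comparison between the operator norm and the maximum absolute entry of a matrix. Two facts do all the work. First, every $\Gamma^* \in \Gcal$ satisfies $|\Gamma^*_{uv}| = |e_u^{\T}\Gamma^* e_v| \leq \|\Gamma^*\| \leq b^*$, so every entry lies in the interval $[-b^*, b^*]$. Second, for any $k^* \times k^*$ matrix $M$,
\begin{equation*}
\|M\| \leq \|M\|_F \leq k^* \max_{u,v}|M_{uv}|.
\end{equation*}
Hence it suffices to approximate every entry to within $1/(k^*\log N)$.

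Set $\eta = 1/(k^* \log N)$ and cover $[-b^*, b^*]$ by the grid points $-b^* + (2l-1)\eta$, for $l = 1, \dots, L$, where $L = \lceil b^*/\eta \rceil = \lceil b^* k^* \log N \rceil$. Every point of $[-b^*,b^*]$ is then within $\eta$ of some grid point. Define $\tilde\Gcal$ as the set of all $k^*\times k^*$ matrices whose entries lie in this grid.

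Given $\Gamma^* \in \Gcal$, choose $\tilde\Gamma^*$ by rounding each entry to its nearest grid point. Then every entry of $\Gamma^* - \tilde\Gamma^*$ is at most $\eta$ in absolute value, and the displayed inequality gives $\|\Gamma^* - \tilde\Gamma^*\| \leq k^*\eta = 1/\log N$. The cardinality of $\tilde\Gcal$ is $L^{k^{*2}} \leq \lceil 2b^*k^*\log N\rceil^{k^{*2}}$. If an exact count is wanted, use the coarser grid with spacing $1/(k^*\log N)$ and $\lceil 2b^*k^*\log N\rceil$ points, which matches the stated bound directly.

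One subtlety concerns whether the net points must themselves lie in $\Gcal$. For the covering argument in Lemma \ref{vecB_lemt2} they should, because the bound on $\|A(\tilde\Gamma^*)\|$ uses $\|\tilde\Gamma^*\|\leq b^*$. The fix is to keep only those grid points within $1/\log N$ of $\Gcal$ and replace each by a nearby element of $\Gcal$. This can at worst double the radius to $2/\log N$, which does not affect the rates. Alternatively, one can project onto $\Gcal$, accepting the same constant factor. Apart from this membership issue the construction is routine.

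The construction works directly on $\Gamma^*$ and never needs the Kronecker structure $\Gamma^* = \Gamma_3 \otimes \Gamma_2 \otimes \Gamma_1$; the sets $\Gcal_d$ enter only through $b^* = b_1b_2b_3$.
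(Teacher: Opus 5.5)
Your proof is correct and is essentially the paper's argument: an entrywise grid on $[-b^*,b^*]$ with $\lceil 2b^*k^*\log N\rceil$ cells of length at most $1/(k^*\log N)$, combined with $\|M\|\le k^*\max_{u,v}|M_{uv}|$; the paper obtains that inequality from $\|M\|\le(\|M\|_1\|M\|_\infty)^{1/2}$ rather than your Frobenius bound. Your membership remark points to a real omission in the paper, and your fix costs nothing here. The paper's grid matrices can have operator norm up to about $k^*b^*$, yet Lemma~\ref{auxlem2} and the $T_2$ bound use $\|\tilde\Gamma^*\|\le b^*$; meanwhile the grid you call ``coarser'' is in fact the finer one, with entrywise error $1/(2k^*\log N)$ and hence operator-norm error $1/(2\log N)$, so replacing each relevant grid matrix by a point of $\mathcal{G}$ within $1/(2\log N)$ gives a $1/\log N$-net inside $\mathcal{G}$ of cardinality at most $\lceil 2b^*k^*\log N\rceil^{k^{*2}}$, with no doubling of the radius.
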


\begin{proof}

  Consider the approximation of $[-b^*, b^*]$ by a uniform grid with $\lceil 2b^* k^* \log
  N \rceil$ partitions. Every partition in the grid has length less than or equal to $1 / (k^* \log N)$. If the center of the $l$-th partition is $g_l$, then we can form at most $\lceil 2b^*k^* \log N \rceil^{k^{*2}}$ $k^* \times k^*$  matrices using $\{g_l: l = 1, \ldots, \lceil 2b^* k^* \log
  N \rceil\}$. Denote the set formed using these matrices as $\tilde \Gcal = \{\tilde \Gamma^*_l \in \RR^{k^* \times k^*}: l = 1, \ldots, \lceil 2b^* k^* \log
  N \rceil^{k^{*2}} \}$.

  We prove the lemma by showing that $\tilde \Gcal$ is a $1/(\log N)$-net for the metric space $(\Gcal, \| \cdot \|)$. For any $\Gamma^* \in \Gcal$ and $i, j = 1, \ldots, k^*$,  $ \Gamma^*_{ij} \in [-b^*, b^*]$ because $\| \Gamma^* \| \leq b^*$. The construction of $\tilde \Gcal$ implies that there exists a $\tilde \Gamma^* \in \tilde \Gcal$ such that  $\mid \Gamma^*_{ij} -  \tilde \Gamma^*_{{ij}} \mid \leq 1 / (k^* \log N)$ for $i,j = 1, \ldots, k^*$. Furthermore, summing this inequality over rows and columns gives
  \begin{align}
  \label{eq:auxlem1_1}
  \sum_{j=1}^{k^*} \mid \Gamma^*_{ij} - \tilde \Gamma^*_{{ij}} \mid \leq \frac{1}{\log
  N}, \quad
  \sum_{i=1}^{k^*} \mid \Gamma^*_{ij} - \tilde \Gamma^*_{{ij}} \mid \leq \frac{1}{\log
  N}.
  \end{align}
  Using \eqref{eq:auxlem1_1}, in the definitions of $\| \cdot \|_{\infty}$, $\| \cdot \|_1$, and $\| \cdot \|$ norms for matrices implies that
  \begin{align}
  \| \Gamma^* - \tilde  \Gamma^* \|_{\infty} &= \underset{1 \leq i \leq k^*}{\max}
  \sum_{j=1}^{k^*} \mid \Gamma^*_{ij} - \tilde \Gamma^*_{{ij}} \mid \leq \frac{1}{\log
  N}, \quad
  \| \Gamma^* - \tilde \Gamma^* \|_{1} = \underset{1 \leq j \leq k^*}{\max}
  \sum_{i=1}^{k} \mid \Gamma^*_{ij} - \tilde \Gamma^*_{{ij}} \mid \leq \frac{1}{\log
  N}, \nonumber\\
  \|\Gamma^* - \tilde \Gamma^* \| &\leq \left( \| \Gamma^* - \tilde \Gamma^* \|_{1} \| \Gamma^* - \tilde
  \Gamma^* \|_{\infty}\right)^{1/2} \leq \frac{1}{\log N},
  \end{align}
  where the last inequality proves that $\tilde \Gcal$ is a $1 / (\log N)$-cover for the metric space $(\Gcal, \| \cdot \|)$. The lemma is proved.
\end{proof}

\begin{lemma} \label{auxlem2}
Let $\Gcal = \{ \Gamma^* \in \RR^{k^* \times k^*}: \| \Gamma^* \| \leq b^*\}$ be the set used in in Lemma \ref{auxlem1} with $\tilde \Gcal$ as its $1/ \log N$-net and $A(\Gamma^*) = \tilde C_{12}(\Gamma^*) \tilde C_{22} (\Gamma^*)^{-1}$ be the matrix that appears in \eqref{eq:vecB_T2_16}, where the notation emphasizes the dependence on $\Gamma^*$. Then, for any $\Gamma^* \in \Gcal$ and $\tilde \Gamma^* \in \tilde \Gcal$,
 \begin{align}\label{eq:auxlem2_1}
     \| A(\Gamma^*)^{\T} A(\Gamma^*) - A(\tilde \Gamma^*)^{\T} A(\tilde \Gamma^*) \| 
  &\leq 4b^* (2\breve c +3  b^{*2} \breve c^2 + b^{*4} \breve c^3)\, \| \Gamma^* - \tilde \Gamma^*\|, \nonumber \\
  \breve c &= \prod_{d=1}^3 \left(\|S_d\|^2 \|R_d\|^2\right) \max(\|Z_1\|^2, \ldots, \| Z_n\|^2).
 \end{align}   

\end{lemma}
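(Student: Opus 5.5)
The plan is to write everything in terms of $C(\Gamma^*)$ and use that $\tilde C = U^{\T} C U$ is a linear function of $C$. Set $W(\Gamma^*) = \mathrm{diag}\{Z_i S^{*\T}\Gamma^* R^*\}_{i=1}^n$, so that $C(\Gamma^*) = W W^{\T} + I_N$. Then $\tilde C_{12} = U_1^{\T} C U_2$ and $\tilde C_{22} = U_2^{\T} C U_2$, where $U=[U_1,U_2]$ is the split of the columns of $U$.

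The first step is a set of uniform bounds. Because $\|S^*\| = \prod_d \|S_d\|$ and $\|R^*\| = \prod_d\|R_d\|$ (Kronecker norm identity) and $\|W\| \le \max_i \|Z_i\|\,\|S^*\|\,\|\Gamma^*\|\,\|R^*\|$, we get $\|WW^{\T}\| \le b^{*2}\breve c$. Hence $\|C\| \le 1 + b^{*2}\breve c = c_*$, which is the bound \eqref{normC-bd} used elsewhere. We also have $\tilde C_{22} \succeq I$, so $\|\tilde C_{22}^{-1}\|\le 1$. Together these give $\|A(\Gamma^*)\| \le \|\tilde C_{12}\| \le c_*$, which is \eqref{eq:bound_A_Gamma}.

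The second step is a Lipschitz bound for $C$. Write
\[
WW^{\T} - \tilde W\tilde W^{\T} = (W-\tilde W)W^{\T} + \tilde W (W-\tilde W)^{\T}.
\]
Since $W$ is linear in $\Gamma^*$, this gives $\|C(\Gamma^*) - C(\tilde\Gamma^*)\| \le 2 b^* \breve c\,\|\Gamma^*-\tilde\Gamma^*\|$. It then transfers directly to $\tilde C_{12}$ and $\tilde C_{22}$, because $U_1$ and $U_2$ have orthonormal columns.

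The third step is to bound $\Delta = A^{\T}A - \tilde A^{\T}\tilde A = (A-\tilde A)^{\T}A + \tilde A^{\T}(A-\tilde A)$, so that $\|\Delta\| \le 2c_*\|A-\tilde A\|$. For the middle factor we use
\[
A - \tilde A = (\tilde C_{12} - \tilde C_{12}')\tilde C_{22}^{-1} + \tilde C_{12}'\big(\tilde C_{22}^{-1} - \tilde C_{22}'^{-1}\big),
\]
where primes denote evaluation at $\tilde\Gamma^*$. The inverse difference is handled with the resolvent identity $P^{-1}-Q^{-1} = P^{-1}(Q-P)Q^{-1}$ together with $\|\tilde C_{22}^{-1}\|\le1$. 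This yields $\|A-\tilde A\| \le (1+c_*)\,2b^*\breve c\,\|\Gamma^*-\tilde\Gamma^*\|$.

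Collecting terms gives $\|\Delta\| \le 4b^*\breve c\, c_*(1+c_*)\|\Gamma^*-\tilde\Gamma^*\|$. Substituting $c_* = 1+b^{*2}\breve c$ and expanding $\breve c(1+b^{*2}\breve c)(2+b^{*2}\breve c) = 2\breve c + 3b^{*2}\breve c^2 + b^{*4}\breve c^3$ recovers exactly the stated constant. This is a good sign that the decomposition above is the intended one.

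The step needing most care is the norm bound $\|W\|^2 \le b^{*2}\breve c$. The definition of $\breve c$ contains squared norms $\|S_d\|^2\|R_d\|^2$, so I have to check that $\|W\|^2$ indeed involves $\|S^*\|^2\|R^*\|^2$, and not first powers that would change the polynomial in $\breve c$. I also need the net element $\tilde\Gamma^*$ to satisfy $\|\tilde\Gamma^*\|\le b^*$ so that the bounds of the first step apply to it as well. If the grid in Lemma \ref{auxlem1} can leave $\mathcal G$, I would replace $b^*$ by $b^*+1/\log N$, which only changes constants, or project the net onto $\mathcal G$.
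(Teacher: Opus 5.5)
Your proof is correct and is essentially the paper's own argument. It uses the same split $A^{\top}A-\tilde A^{\top}\tilde A=(A-\tilde A)^{\top}A+\tilde A^{\top}(A-\tilde A)$, the same bound $\|A\|\le\|C\|\le 1+b^{*2}\breve c$ (the paper gets it through interlacing, you through $\tilde C_{22}\succeq I$), the same two-term expansion of $A-\tilde A$ with the resolvent identity, and the same Lipschitz bound $\|C(\Gamma^*)-C(\tilde\Gamma^*)\|\le 2b^*\breve c\,\|\Gamma^*-\tilde\Gamma^*\|$, so you arrive at the identical constant.

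Your caveat about net membership is well founded, and the paper's proof passes over it. The paper uses $\|\tilde\Gamma^*\|\le b^*$, but the entrywise grid of Lemma~\ref{auxlem1} does not lie in $\mathcal{G}$, and even the grid point nearest to some $\Gamma^*\in\mathcal{G}$ can have operator norm slightly above $b^*$. Of your two fixes, projecting onto $\mathcal{G}$ is the one that keeps the inequality exactly as stated: it gives a $2/\log N$-net inside $\mathcal{G}$ of the same cardinality, whereas replacing $b^*$ by $b^*+1/\log N$ changes the constant.
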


\begin{proof}
Recall that 
$C_i(\Gamma^*) = Z_i S^{*\T}
\Gamma^* R^* R^{*\T} \Gamma^{*\T} S^* Z_i^{\T} + I_{m_i}$. For any $\Gamma^* \in \Gcal$, 
\begin{align}\label{eq:auxlem2_2}
1 \overset{(i)}{\leq} \lambda_{\min}\{ C_i(\Gamma^*) \} &\leq \lambda_{\max}\{ C_i(\Gamma^*) \} \leq 1 + \|Z_i S^{*\T} \Gamma^* R^* \|^2 \leq  1 + b^{*2} \| S^* \|^2 \| R^*\|^2 \|Z_i\|^2, \nonumber\\
1 \overset{(ii)}{\leq} \lambda_{\min}\{C(\Gamma^*)\} &\leq \lambda_{\max}\{C(\Gamma^*)\} \overset{(iii)}{\leq} 1 + b^{*2} \| S^* \|^2 \| R^*\|^2 \max(\|Z_1\|^2, \ldots, \| Z_n\|^2), 
\end{align}
where $(i)$ follows because $C_i(\Gamma^*) = Z_i S^{*\T}
\Gamma^* R^*  \left( Z_i S^{*\T} \Gamma^* R^* \right)^{\T} + I_{m_i}$,  $(ii)$ follows because  $\lambda_{\min}\{C(\Gamma^*)\} = \min_i \lambda_{\min}\{ C_i(\Gamma^*) \}$, and $(iii)$ follows because $\lambda_{\max}\{C(\Gamma^*)\} = \max_i \lambda_{\max}\{ C_i(\Gamma^*) \}$. Finally, using the fact that $\|A \otimes B\| = \|A\| \|B\|$ for any two matrices $A$ and $B$, for any $\Gamma^* \in \mathcal{G}$, we can provide an upper bound for the operator norm of the working covariance matrix as
\begin{align}\label{normC-bd}
\|C(\Gamma^*)\| &= \lambda_{\max}\{C(\Gamma^*)\} \nonumber \\
&\leq 1 + b^{*2} \prod_{d=1}^3 \left(\|S_d\|^2 \|R_d\|^2\right) \max(\|Z_1\|^2, \ldots, \| Z_n\|^2) \nonumber \\
&\equiv 1 + b^{*2} \breve{c}.
\end{align}

For any $\Gamma^* \in \Gcal$ and $\tilde \Gamma^* \in \tilde \Gcal$,
\begin{align}\label{eq:auxlem2_3}
   \| C_i(\Gamma^*) - C_i(\tilde \Gamma^*) \| &= \| Z_i S^{*\T}
\Gamma^* R^* R^{*\T} \Gamma^{*\T} S^* Z_i^{\T} - Z_i S^{*\T}
\tilde \Gamma^* R^* R^{*\T} \tilde \Gamma^{*\T} S^* Z_i^{\T} \| \nonumber\\
&= \| Z_i S^{*\T}
(\Gamma^* - \tilde \Gamma^*) R^* R^{*\T} \Gamma^{*\T} S^* Z_i^{\T} + 
Z_i S^{*\T} \tilde \Gamma^* R^* R^{*\T}  (\Gamma^{*\T} - \tilde \Gamma^{*\T}) S^* Z_i^{\T} 
  \| \nonumber\\  
&\leq \| Z_i S^{*\T} \| \|R^* R^{*\T} \Gamma^{*\T} S^* Z_i^{\T}\| \|\Gamma^* - \tilde \Gamma^* \| + \|Z_i S^{*\T} \tilde \Gamma^* R^* R^{*\T}\| \| S^* Z_i^{\T} \| \| \Gamma^* - \tilde \Gamma^*\| \nonumber\\
&\leq 2 b^* \, \| Z_i S^{*\T} \| \, \| R^* R^{*\T} \| \, \| S^* Z_i^{\T} \| \, \| \Gamma^* - \tilde \Gamma^*\| \nonumber \\
&= 2 b^* \, \| Z_i \|^2 \| S^* \|^2 \, \| R^* \|^2  \, \| \Gamma^* - \tilde \Gamma^*\| \nonumber \\
&= 2 b^* \, \| Z_i \|^2 \prod_{d=1}^3 \left(\|S_d\|^2 \|R_d\|^2\right)\| \Gamma^* - \tilde \Gamma^*\|,
\end{align}
which implies
\begin{align}\label{eq:auxlem2_4}
    \|C(\Gamma^*) - C(\tilde \Gamma^*)\| &= \max_i \|C_i(\Gamma^*) - C_i(\tilde \Gamma^*)\| \nonumber \\
    &= 2 b^* \prod_{d=1}^3 \left(\|S_d\|^2 \|R_d\|^2\right) \max(\|Z_1\|^2, \dots, \|Z_n\|^2) \| \Gamma^* - \tilde \Gamma^*\| \nonumber \\
    &\equiv 2 b^* \breve{c} \| \Gamma^* - \tilde \Gamma^*\|.
\end{align}
Consider the term
\begin{align}\label{eq:auxlem2_5}
  \| A(\Gamma^*)^{\T} A(\Gamma^*) - A(\tilde \Gamma^*)^{\T} A(\tilde \Gamma^*) \|  &=  \| A(\Gamma^*)^{\T} A(\Gamma^*) - A(\tilde \Gamma^*)^{\T} A(\Gamma^*) + A(\tilde \Gamma^*)^{\T} A(\Gamma^*) - A(\tilde \Gamma^*)^{\T} A(\tilde \Gamma^*) \| \| \nonumber\\
&=  \| \{ A(\Gamma^*)^{\T}  - A(\tilde \Gamma^*)^{\T} \} A(\Gamma^*) + A(\tilde \Gamma^*)^{\T} \{ A(\Gamma^*) -  A(\tilde \Gamma^*)\}\|  \nonumber\\
&\leq \{ \| A(\Gamma^*) \| + \| A(\tilde \Gamma^*) \| \} \, \|A(\Gamma^*)  - A(\tilde \Gamma^*)\|.
\end{align}
For any $\Gamma^* \in \Gcal$, 
\begin{align}\label{eq:bound_A_Gamma}
    \| A(\Gamma^*)\| = \|\tilde C_{12}(\Gamma^*) \tilde C_{22}(\Gamma^*)^{-1}\| &\leq \|\tilde C_{12}(\Gamma^*)\| \|\tilde C_{22}(\Gamma^*)^{-1}\| = \|\tilde C_{12}(\Gamma^*)\| \frac{1}{\lambda_{\min} \{ \tilde C_{22}(\Gamma^*) \} } \nonumber \\
    &\overset{(i)}{\leq}  \frac{\|\tilde C(\Gamma^*)\|}{\lambda_{\min}\{\tilde C_{22}(\Gamma^*)\}} \overset{(ii)}{\leq}  \frac{\| C(\Gamma^*)\|}{\lambda_{\min}\{ C(\Gamma^*)\}} \nonumber\\
    &\overset{(iii)}{\leq} \| C(\Gamma^*)\| \overset{(iv)}{\leq}  1 + b^{*2} \breve c,
\end{align}
where $(i)$ follows from Cauchy's interlacing theorem, $(ii)$ follows from Cauchy's interlacing theorem and the fact that  $\|\tilde C\| = \|U^{\T}CU\| = \|C\|$ because $U$ is an orthonormal matrix, and $(iii)$ and $(iv)$ follow from \eqref{eq:auxlem2_2} and \eqref{normC-bd} respectively.
Consider the last term in \eqref{eq:auxlem2_5},
\begin{align}\label{eq:auxlem2_6}
    \|A(\Gamma^*)  - A(\tilde \Gamma^*)\| &= \| A(\Gamma^*) - \tilde C_{12}(\tilde \Gamma^*) \tilde C_{22} (\Gamma^*)^{-1} + \tilde C_{12}(\tilde \Gamma^*) \tilde C_{22} (\Gamma^*)^{-1} - A(\tilde \Gamma^*)\| \nonumber \\
    &= \| \{ \tilde C_{12}(\Gamma^*) - \tilde C_{12}(\tilde \Gamma^*) \}\tilde C_{22} (\Gamma^*)^{-1} + \tilde C_{12}(\tilde \Gamma^*) \{\tilde C_{22} (\Gamma^*)^{-1} -\tilde C_{22} (\tilde \Gamma^*)^{-1} \}\| \nonumber \\
    &\leq \| \tilde C_{22} (\Gamma^*)^{-1}\| \|\tilde C_{12}(\Gamma^*) - \tilde C_{12}(\tilde \Gamma^*)\| + \|\tilde C_{12}(\tilde \Gamma^*) \| \|\tilde C_{22} (\Gamma^*)^{-1} -\tilde C_{22} (\tilde \Gamma^*)^{-1} \| \nonumber \\
    &= \| \tilde C_{22} (\Gamma^*)^{-1}\| \| \tilde C_{12}(\Gamma^*) - \tilde C_{12}(\tilde \Gamma^*)\| + \|\tilde C_{12}(\tilde \Gamma^*)\| \|\tilde C_{22} (\Gamma^*)^{-1} \{\tilde C_{22} (\tilde \Gamma^*) - \tilde C_{22} (\Gamma^*)\} \tilde C_{22} (\tilde \Gamma^*)^{-1}\| \nonumber \\
    &\overset{(i)}{\leq} \| \tilde C_{22} (\Gamma^*)^{-1}\| \|C(\Gamma^*) - C(\tilde \Gamma^*)\| + \|\tilde C_{12}(\tilde \Gamma^*)\| \|\tilde C_{22} (\Gamma^*)^{-1}\| \|\tilde C_{22} (\tilde \Gamma^*)^{-1}\|  \| C (\tilde \Gamma^*) -  C (\Gamma^*)\| \nonumber \\
     &=\lambda_{\min}^{-1} \{\tilde C_{22} (\Gamma^*)\} \, ( 1 + \|\tilde C_{12}(\tilde \Gamma^*)\| \|\tilde C_{22} (\tilde \Gamma^*)^{-1}\| )  \, \|C(\Gamma^*) - C(\tilde \Gamma^*)\|  \nonumber \\
   &\overset{(ii)}{\leq}  \lambda_{\min}^{-1} \{C(\Gamma^*)\} \, ( 1 + \| C(\Gamma^*)\| )  \, \|C(\Gamma^*) - C(\tilde \Gamma^*)\| \nonumber\\
   &\overset{(iii)}{\leq} (2 + b^{*2} \breve{c}) \|C(\Gamma^*) - C(\tilde \Gamma^*)\| \nonumber\\
   &\overset{(iv)}\leq  (2 + b^{*2} \breve{c}) \, 2 b^* \breve{c} \, \|\Gamma^* - \tilde \Gamma^*\| \nonumber\\
   &= (4b^* \breve{c} + 2b^{*3} \breve{c}^2) \, \| \Gamma^* - \tilde \Gamma^*\|,
\end{align}
where $(i)$ holds because $\tilde C_{12}(\Gamma^*) - \tilde C_{12}(\tilde \Gamma^*)$ and $\tilde C_{22}(\Gamma^*) - \tilde C_{22}(\tilde \Gamma^*)$ are $p \times (N-p)$ and $(N-p) \times (N-p)$ submatrices of $\tilde C(\Gamma^*) - \tilde C(\tilde \Gamma^*)$, $(ii)$ and $(iii)$ follow from the arguments used in \eqref{eq:bound_A_Gamma}, and $(iv)$ follows from \eqref{eq:auxlem2_4}. Combining \eqref{eq:auxlem2_5}, \eqref{eq:bound_A_Gamma} and
\eqref{eq:auxlem2_6} implies that 
\begin{align*}
  \| A(\Gamma^*)^{\T} A(\Gamma^*) - A(\tilde \Gamma^*)^{\T} A(\tilde \Gamma^*) \| &\leq 2 (1 + b^{*2} \breve{c}) (4 b^* \breve{c} + 2 b^{*3} \breve{c}^2)\, \| \Gamma^* - \tilde \Gamma^*\| \nonumber\\
  &= 
  4b^* (2\breve{c} +3  b^{*2} \breve{c}^2 + b^{*4} \breve{c}^3)\, \| \Gamma^* - \tilde \Gamma^*\| .
\end{align*}
The lemma is proved.
\end{proof}

\begin{lemma} \label{auxlem3}   
Let $R^*$, $S^*$, $Z_1, \ldots, Z_n$ be the matrices defined in \eqref{eq:notations} used to define the compressed model \eqref{eq:popln&comp_models}. Then, for any $r \geq 1$,
\begin{align*}
    \EE_{R,S,Z}(\breve{c}^{r}) \leq 2^{12r - 6}\sigma_Z^{2r}\prod_{d=1}^3 \left[ \frac{1}{k_d^{2r}}\left\{q_d^{r} + k_d^{r} + 2^{2-r}r\Gamma(r) \right\}^2\right] \, \left\{ nq^{*r} + n2^{2-r}r\Gamma(r) + \sum_{i=1}^n m_i^{r} \right\},
\end{align*}
where we define $\breve{c} = \prod_{d=1}^3 \left(\|S_d\|^2 \|R_d\|^2\right) \max(\|Z_1\|^2, \ldots, \| Z_n\|^2)$.
\end{lemma}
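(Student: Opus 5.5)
The bound follows from independence and a product structure. The seven random factors $\|S_1\|^2,\|R_1\|^2,\dots,\|S_3\|^2,\|R_3\|^2$ and $M=\max_i\|Z_i\|^2$ are mutually independent, since all entries of $R_d,S_d$ and of the $Z_i$ are independent Gaussians. Hence
\[
\EE_{R,S,Z}(\breve c^{\,r})=\prod_{d=1}^3 \EE\|S_d\|^{2r}\,\EE\|R_d\|^{2r}\cdot \EE(M^r),
\]
and it suffices to bound one Gaussian operator-norm moment of the form $\EE\|G\|^{2r}$. The target has one factor $\{q_d^r+k_d^r+2^{2-r}r\Gamma(r)\}/k_d^r$ per projection matrix, which is why it appears squared over $S_d$ and $R_d$. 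It also has one factor of the same shape with $(q^*,m_i)$ for the $Z$ part.

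\textbf{Step 1 (projection matrices).} Write $S_d=k_d^{-1/2}G$ with $G\in\RR^{k_d\times q_d}$ having iid $\mathcal N(0,1)$ entries. I would use the Gaussian concentration bound for the top singular value, $\PP(\|G\|\ge \sqrt{k_d}+\sqrt{q_d}+t)\le e^{-t^2/2}$, which is a standard result (Wainwright 2019, Thm.~6.1 / Davidson–Szarek). Then
\[
\EE\|G\|^{2r}\le 2^{2r-1}\bigl\{(\sqrt{k_d}+\sqrt{q_d})^{2r}+\EE (t_+)^{2r}\bigr\},
\]
using $(u+v)^{2r}\le 2^{2r-1}(u^{2r}+v^{2r})$. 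The tail integral is
\[
\int_0^\infty 2r\,t^{2r-1}e^{-t^2/2}\,dt=2^{r}r\Gamma(r).
\]
Applying the same inequality again, $(\sqrt{k_d}+\sqrt{q_d})^{2r}\le 2^{2r-1}(k_d^r+q_d^r)$. Collecting powers of two then gives
\[
\EE\|S_d\|^{2r}\le 2^{c r}k_d^{-r}\{q_d^r+k_d^r+2^{2-r}r\Gamma(r)\}.
\]
The identical bound holds for $R_d$. The product over the six factors accounts for the squared bracket and for most of the constant $2^{12r-6}$.

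\textbf{Step 2 (random-effects covariates).} Bound the maximum by a sum, $\EE M^r\le\sum_{i=1}^n\EE\|Z_i\|^{2r}$, and write $Z_i=\sigma_Z G_i$ with $G_i\in\RR^{m_i\times q^*}$ standard Gaussian. The same computation gives
\[
\EE\|Z_i\|^{2r}\le 2^{c' r}\sigma_Z^{2r}\{q^{*r}+m_i^r+2^{2-r}r\Gamma(r)\}.
\]
Summing over $i$ yields $nq^{*r}+n2^{2-r}r\Gamma(r)+\sum_i m_i^r$, which is exactly the last factor in the statement.

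\textbf{Main obstacle.} The only delicate part is the bookkeeping of powers of $2$, so that the seven factors combine to at most $2^{12r-6}$. If the exponents I obtain do not match, I would loosen the constant, since only the $O(\cdot)$ form is used downstream in Lemmas \ref{vecB_lemt1}–\ref{vecB_lemt2}. The probabilistic content is just the singular-value tail bound together with independence.
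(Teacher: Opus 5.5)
Your argument is the same as the paper's. You factor $\EE(\breve c^{\,r})$ by independence into the six projection moments $\EE\|S_d\|^{2r}$, $\EE\|R_d\|^{2r}$ and the term $\EE\max_i\|Z_i\|^{2r}$. You bound each Gaussian operator-norm moment, and you replace the maximum by the sum over $i$. The only difference is that the paper cites Lemma A.6 of \cite{2025_SarKhSri} for the per-matrix bound, while you derive it from the Davidson--Szarek tail. Carried out, your Step~1 gives $\EE\|S_d\|^{2r}\le 2^{4r-2}k_d^{-r}\{q_d^r+k_d^r+2^{1-r}r\Gamma(r)\}$, so your $2^{cr}$ is $2^{4r-2}$. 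This is exactly the per-factor bound the paper plugs in, where the paper has the slightly larger $2^{2-r}r\Gamma(r)$; the same holds for the $Z$ factor.

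The power-of-two bookkeeping you flagged cannot be closed. Six projection factors at $2^{4r-2}$ each already give $2^{24r-12}>2^{12r-6}$ for $r\ge 1$, so they do not account for ``most of'' the stated constant; all seven factors give $2^{28r-14}$. That is also what the paper's own displayed intermediate line multiplies out to, so its final $2^{12r-6}$ is an arithmetic slip.

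In fact the inequality with $2^{12r-6}$ is false in general. Take $r=1$, $n=1$, $m_1=q^*$ and $k_d=q_d\to\infty$. By Bai--Yin, $\|S_d\|^2,\|R_d\|^2\to 4$ and $\|Z_1\|^2/(\sigma_Z^2 q^*)\to 4$ in probability. Independence and Fatou then give $\liminf \EE(\breve c)/(\sigma_Z^2 q^*)\ge 4^7=2^{14}$. The right-hand side divided by $\sigma_Z^2 q^*$ tends to $2^6\cdot 2^6\cdot 2=2^{13}$.

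So your fallback is necessary, not optional: state and prove the lemma with constant $2^{28r-14}$. Lemmas \ref{vecB_lemt1} and \ref{vecB_lemt2} use it only for fixed $r\in\{1,2,3,4\}$. Only their explicit powers of $2$ change, not their $O(\cdot)$ bounds or the main theorem.
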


\begin{proof}

Consider any $r \geq 1$. Then, using the independence of the distributions of $R_d$'s and $S_d$'s,
\begin{align*}
    \EE_{R,S,Z}(\breve{c}^r) &= \EE_{R,S,Z}\left\{\prod_{d=1}^3 \left(\|S_d\|^{2r} \|R_d\|^{2r}\right) \max(\|Z_1\|^{2r}, \ldots, \| Z_n\|^{2r})\right\} \nonumber \\
    &= \prod_{d=1}^3 \EE_{S_d}\|S_d\|^{2r} \EE_{R_d}\|R_d\|^{2r} \EE_Z \max(\|Z_1\|^{2r}, \ldots, \| Z_n\|^{2r}) \nonumber \\
    &\overset{(i)}{\leq} \prod_{d=1}^3 \left[\frac{1}{k_d^{2r/2}} 2^{4r - 2} \{q_d^{2r/2} + k_d^{2r/2} + 2^{1-2r/2}(2r)\Gamma(2r/2) \}\right]^2 \nonumber \\
    &\quad \sigma_Z^{2r}2^{4r - 2} \left[ n(q^*)^{2r/2} + n2^{1-2r/2}(2r)\Gamma(2r/2) + \sum_{i=1}^n m_i^{2r/2} \right] \nonumber \\
    &= 2^{12r - 6}\sigma_Z^{2r}\prod_{d=1}^3 \left[ \frac{1}{k_d^{2r}}\left\{q_d^{r} + k_d^{r} + 2^{2-r}r\Gamma(r) \right\}^2\right] \, \left\{ nq^{*r} + n2^{2-r}r\Gamma(r) + \sum_{i=1}^n m_i^{r} \right\}
\end{align*}
where $(i)$ follows from Lemma A.6 in \cite{2025_SarKhSri}. The lemma is proved.
\end{proof}

\section{Additional Numerical Results}\label{Section:supp_numresults}

In this section, we report the empirical findings corresponding to the larger covariance compression dimensions ($k = 6, 9$) considered in the simulation study. The experimental setup remains same as that described in Section~\ref{Section:Sim_Setup} of the main manuscript. The additional simulation results allows us to assess the robustness of CoMET to the choice of covariance compression dimensions. 

Similar to the findings for $k = 3$ in the main manuscript, across different choices of fixed-effect CP-rank ($K$) and cluster size ($m$), the CoMET model exhibits superior estimation and prediction accuracy at both $k = 6$ and $k = 9$, compared to its regularized competitors  (Figures~\ref{fig:rmsermspe_equicorr_k6} and~\ref{fig:rmsermspe_equicorr_k9}). Similarly, CoMET's performance in fixed-effects inference for $k = 6$ and $k = 9$ remain consistent with that for $k = 3$. For moderate choices of the fixed-effect coefficient's rank ($K = 4, 6, 8$), CoMET consistently yields narrower credible intervals than the PQL methods, along with attaining near-nominal coverage (Figures~\ref{fig:covgwidthCI_equicorr_k6} and~\ref{fig:covgwidthCI_equicorr_k9}). Additionally, the CoMET model more accurately identifies the true sparsity patterns in the fixed-effect coefficient tensor $\mathcal{B}$ compared to the penalized methods, while closely matching the performance of the oracle benchmark (Figure~\ref{fig:B_estimate_sim_k3K4}). These findings further strengthen the effectiveness of our proposed model in fixed-effects estimation, prediction, and uncertainty quantification.

\begin{figure}
    \centering
    \includegraphics[width=6.5in,height=6in]{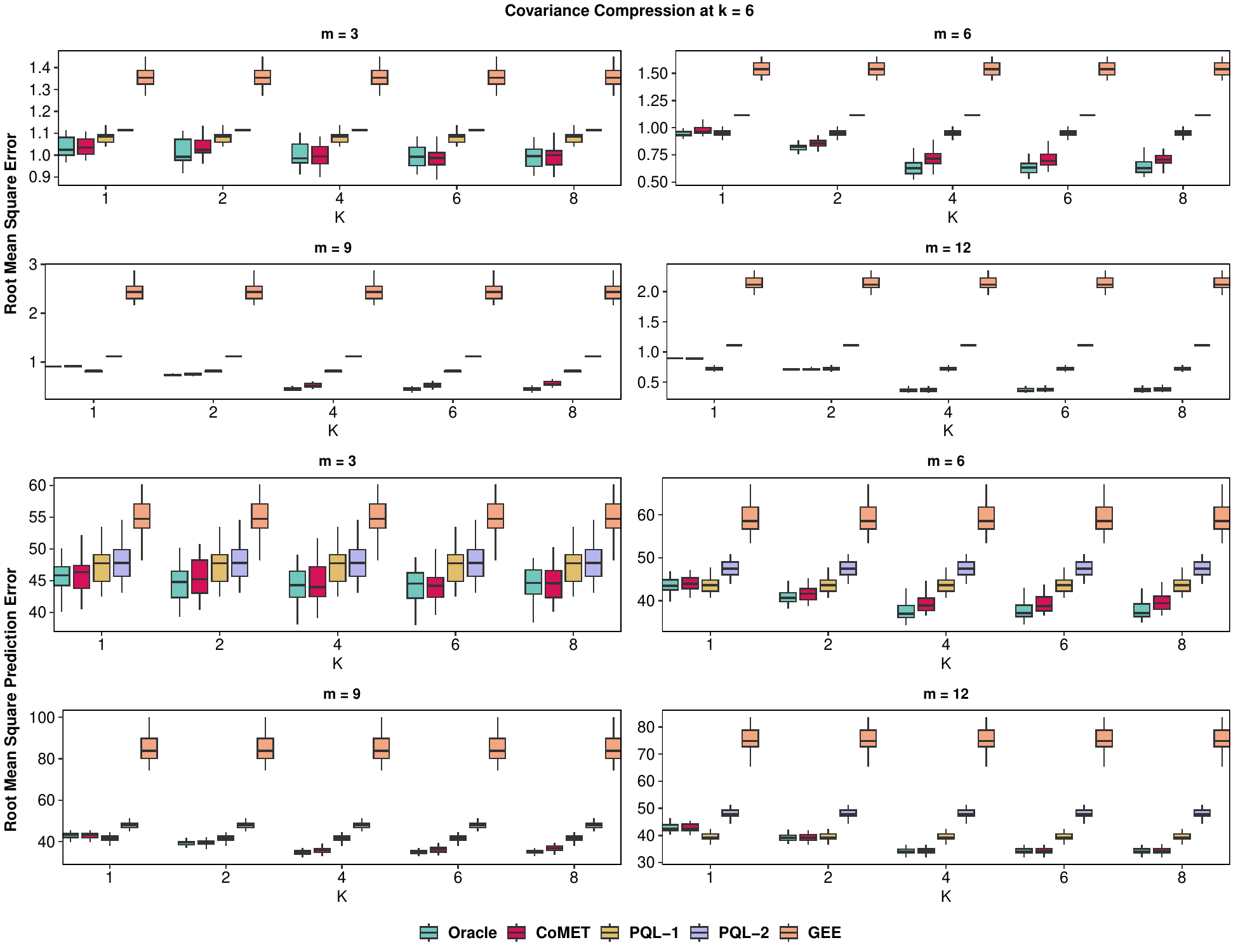}
    \caption{CoMET substantially outperforms existing penalized quasi-likelihood methods in estimation of $\mathcal{B}$ (RMSE) and out-of-sample prediction (RMSPE) across various choices of fixed-effect rank $K$ and cluster sizes $m \in \{3, 6, 9, 12\}$. Results are presented for compressed covariance dimension $k = 6$, summarized over 25 replications. CoMET, compressed mixed-effects tensor model; Oracle, the oracle benchmark; PQL-1, the penalized quasi-likelihood approach of \citet{FanLi12}; PQL-2, the penalized quasi-likelihood approach of \citet{Lietal21}; GEE, the penalized generalized estimating equations method of \citet{2019_Zhang_etal}.}
    \label{fig:rmsermspe_equicorr_k6}
\end{figure}

\begin{figure}
    \centering
    \includegraphics[width=6.5in,height=6in]{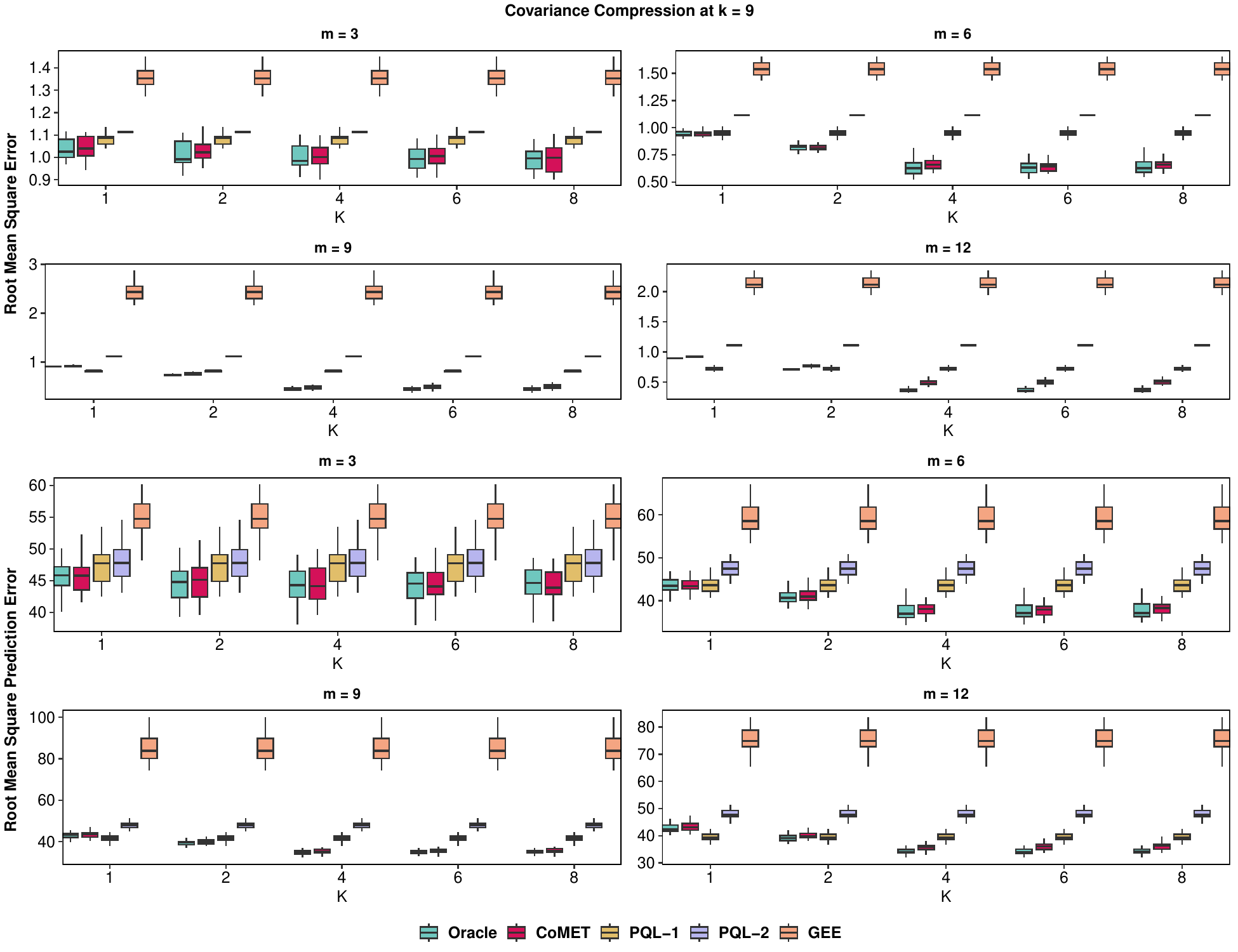}
    \caption{CoMET substantially outperforms existing penalized quasi-likelihood methods in estimation of $\mathcal{B}$ (RMSE) and out-of-sample prediction (RMSPE) across various choices of fixed-effect rank $K$ and cluster sizes $m \in \{3, 6, 9, 12\}$. Results are presented for compressed covariance dimension $k = 9$, summarized over 25 replications. CoMET, compressed mixed-effects tensor model; Oracle, the oracle benchmark; PQL-1, the penalized quasi-likelihood approach of \citet{FanLi12}; PQL-2, the penalized quasi-likelihood approach of \citet{Lietal21}; GEE, the penalized generalized estimating equations method of \citet{2019_Zhang_etal}.}
    \label{fig:rmsermspe_equicorr_k9}
\end{figure}

\begin{figure}[htbp]
    \centering
    \includegraphics[width=6.5in,height=6in]{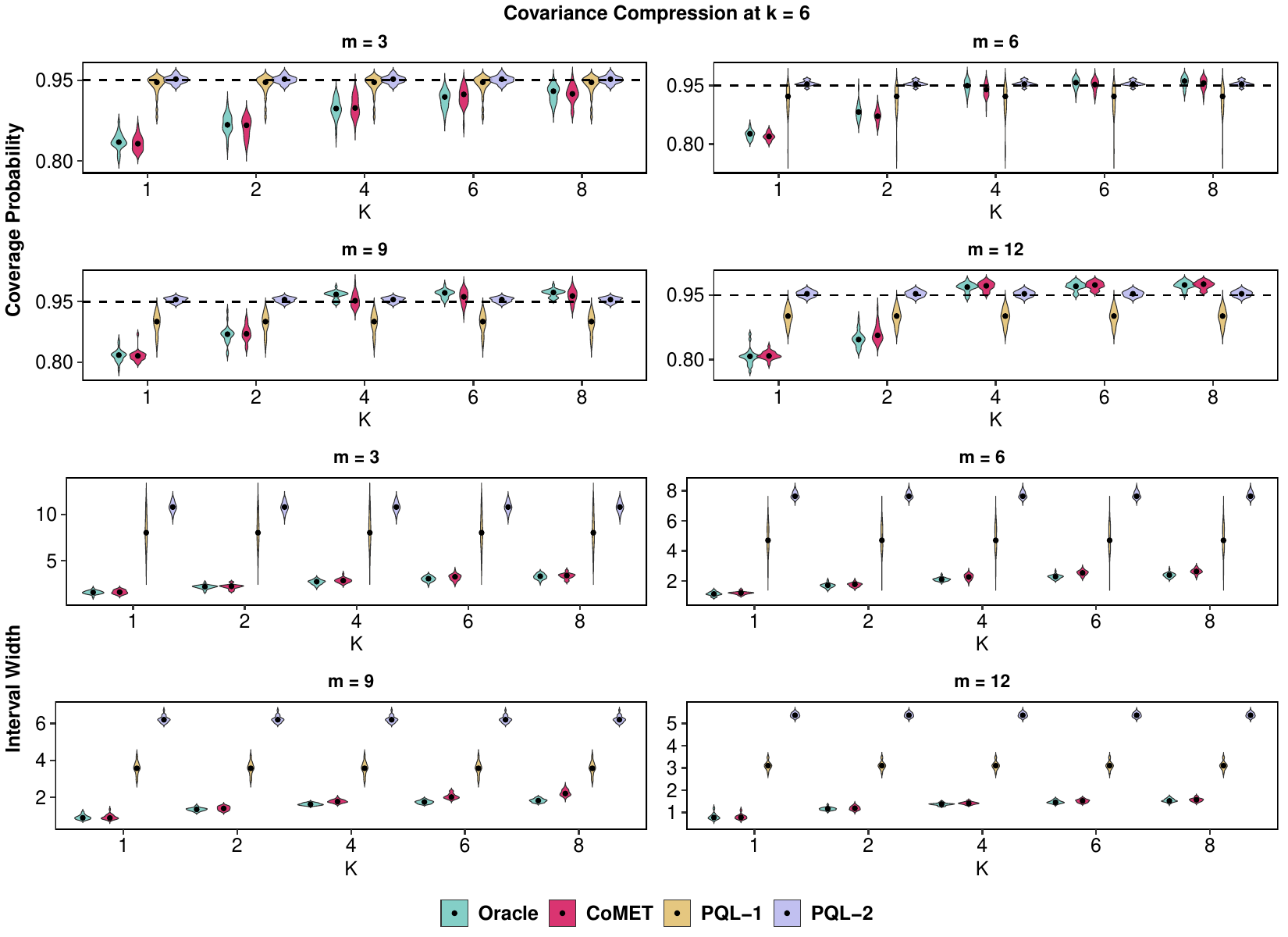}
    \caption{Fixed-effects inference comparisons. The CoMET model produces narrower credible intervals for $\mathcal{B}$ entries than the PQL methods when $k = 6$ and $K \in \{4, 6, 8\}$, along with achieving near-nominal coverage across all cluster sizes $m \in \{3, 6, 9, 12\}$. All metrics are summarized over 25 replications. CoMET, the compressed mixed-effects tensor model; oracle, the oracle variant of CoMET; PQL-1, the penalized quasi-likelihood approach of \citet{FanLi12}; PQL-2, the penalized quasi-likelihood approach of \citet{Lietal21}. GEE, the penalized generalized estimating equations approach of \citet{2019_Zhang_etal}, is excluded because of absence of debiasing technique for confidence intervals construction.}
    \label{fig:covgwidthCI_equicorr_k6}
\end{figure}

\begin{figure}[htbp]
    \centering
    \includegraphics[width=6.5in,height=6in]{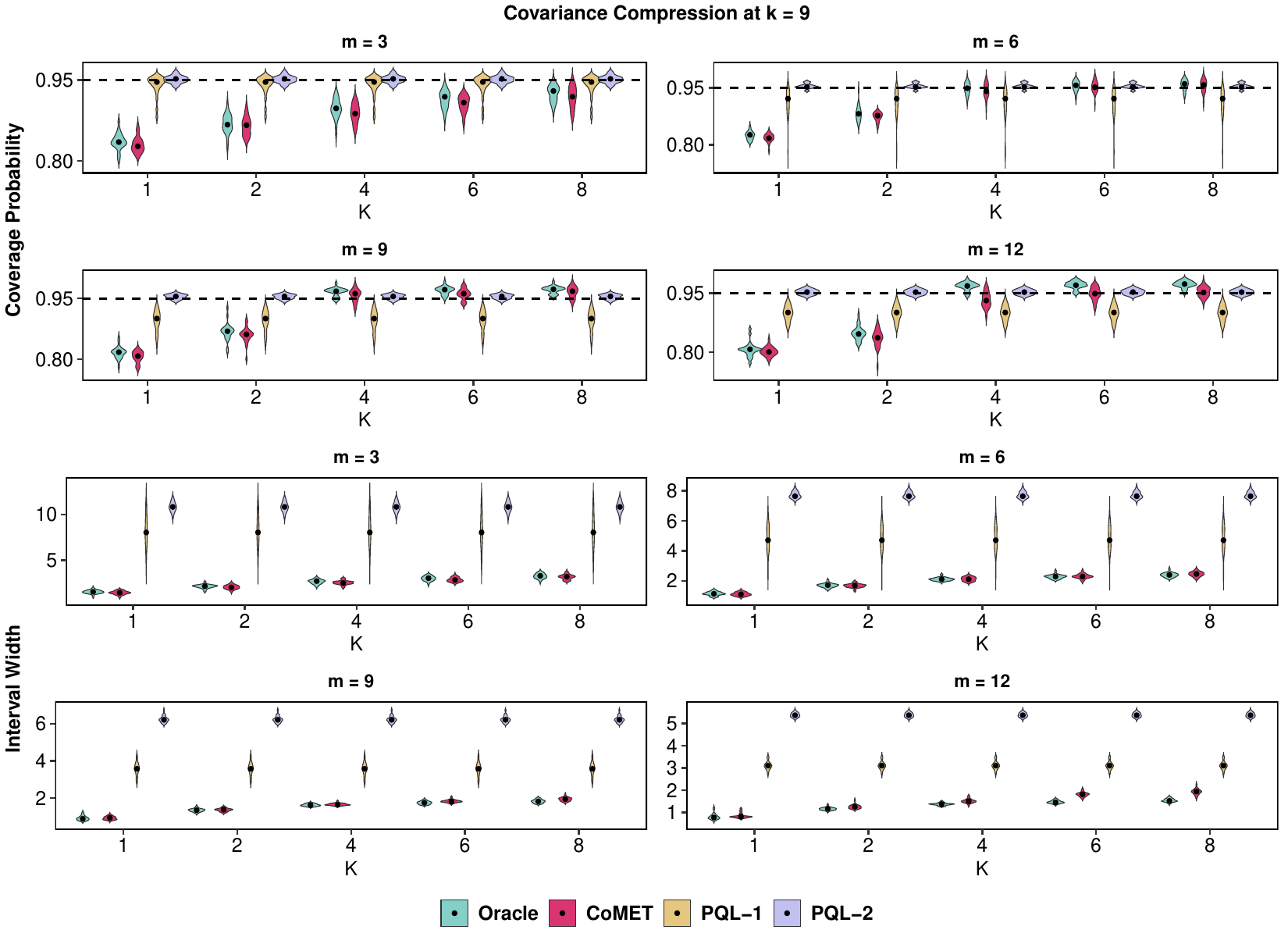}
    \caption{Fixed-effects inference comparisons. The CoMET model produces narrower credible intervals for $\mathcal{B}$ entries than the PQL methods when $k = 9$ and $K \in \{4, 6, 8\}$, along with achieving near-nominal coverage across all cluster sizes $m \in \{3, 6, 9, 12\}$. All metrics are summarized over 25 replications. CoMET, the compressed mixed-effects tensor model; oracle, the oracle variant of CoMET; PQL-1, the penalized quasi-likelihood approach of \citet{FanLi12}; PQL-2, the penalized quasi-likelihood approach of \citet{Lietal21}. GEE, the penalized generalized estimating equations approach of \citet{2019_Zhang_etal}, is excluded because of absence of debiasing technique for confidence intervals construction.}
    \label{fig:covgwidthCI_equicorr_k9}
\end{figure}

\begin{figure}[h!]
    \centering
    \includegraphics[width=0.95\linewidth,height=6.5in]{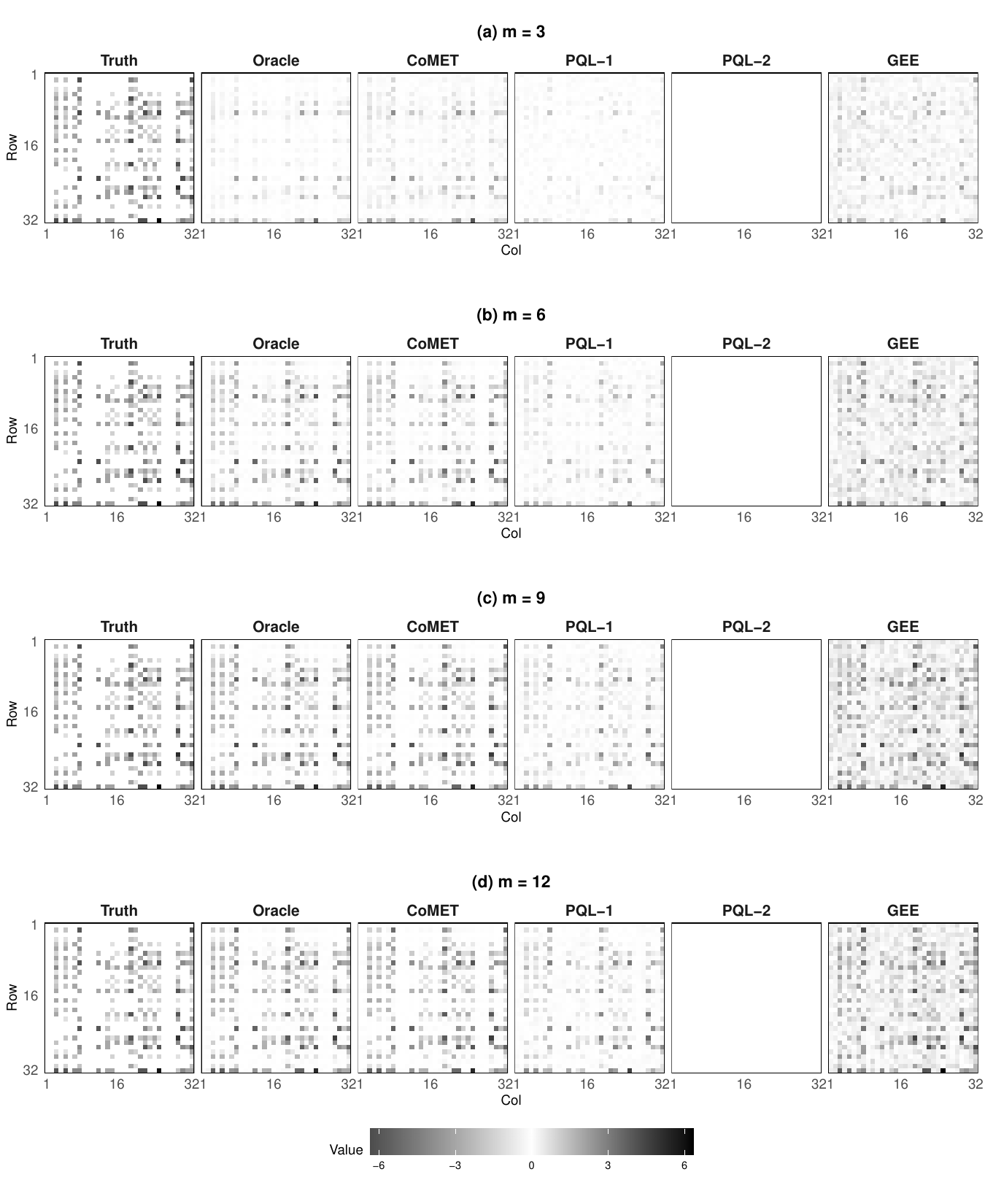}
    \caption{Estimated $32 \times 32$ fixed-effect coefficient matrix $\mathcal{B}$, averaged over 25  simulated training datasets. CoMET accurately recovers the true sparsity patterns in $\mathcal{B}$, closely mimicing the oracle. Results for oracle and CoMET are presented at $K = 4$ and $k = 3$. Truth, the data generating $\mathcal{B}$; Oracle, the oracle benchmark with known random-effects covariance structure; CoMET, the compressed mixed-effects model for tensors; PQL-1, the penalized quasi-likelihood approach of \citet{FanLi12}; and PQL-2, the penalized quasi-likelihood approach of \citet{Lietal21}; GEE, the penalized generalized estimating equation approach of \citet{2019_Zhang_etal} with equicorrelation working correlation specification.}
    \label{fig:B_estimate_sim_k3K4}
\end{figure}

\section{Posterior Computation}\label{Section:FullCondDerivations}

This section presents the detailed derivation of the posterior distributions outlined in Algorithm~\ref{algoCoMET:1} of the main manuscript. We first consider the posterior sampling of the compressed covariance parameters. For every mode $d = 1, 2, 3$, let $\check{Z}_{\gamma_d}$ be the design matrix consisting of the compressed covariance parameters for the remaining modes, and $\check{y}$ be the residual vector obtained by stacking the difference between the observed response and the fitted values computed using $\mathcal{B}$; see the model in \eqref{eq:regmodel_gammad}. Given the observed data, the imputed compressed random-effects $\tilde d_i = \operatorname{vec}(\tilde{\mathcal{D}_i})$s, $(\mathcal{B}, \tau^2)$, and $\gamma_{d'} (d' \neq d)$s, the Gaussian prior on $\gamma_d$ along with its Gaussian likelihood yields the full conditional density as follows:
\begin{align}
    &f(\gamma_d \mid y, \tilde{d_1}, \dots, \tilde{d_n}, \gamma_1, \dots, \gamma_{d-1}, \gamma_{d+1}, \dots, \gamma_3, \mathcal{B}, \tau^2) \nonumber \\
    &\propto f(\check y \mid \gamma_1, \gamma_2, \gamma_3, \mathcal{B}, \tau^2) f(\gamma_d) \nonumber \\
    &\propto (\tau^2)^{-\frac{N}{2}} \exp \left\{ -\frac{1}{2\tau^2} (\check y - \check Z_{\gamma_d} \gamma_d)^{\T} (\check y - \check Z_{\gamma_d} \gamma_d) \right\} (\sigma_{d}^2)^{-\frac{k_d^2}{2}} \exp \left( -\frac{1}{2\sigma_{d}^2} \gamma_d^{\T} \gamma_d\right) \nonumber \\
    &\propto (\sigma_{d}^2)^{-\frac{k_d^2}{2}} \exp \left[ -\frac{1}{2} \left\{ \gamma_d^{\T} \left( \frac{1}{\tau^2}\check Z_{\gamma_d}^{\T}\check Z_{\gamma_d} + \frac{1}{\sigma_{d}^2} I_{k_d^2}\right)\gamma - \frac{2}{\tau^2}\check y^{\T} \check Z_{\gamma_d} \gamma_d \right\}\right] \nonumber \\
    &\propto \exp \left\{ -\frac{1}{2}(\gamma_d - \mu_{\gamma_d})^{\T} \Sigma_{\gamma_d}^{-1}(\gamma_d - \mu_{\gamma_d})\right\}, \nonumber \\
    & \equiv \mathcal{N}(\mu_{\gamma_d}, \Sigma_{\gamma_d}),
\end{align}
where $\mu_{\gamma_d} = \tau^{-2} \Sigma_{\gamma_d}\check Z_{\gamma_d}^{\T}\check y$, and $\Sigma_{\gamma_d} = \left( \tau^{-2} \check Z_{\gamma_d}^{\T}\check Z_{\gamma_d} + \sigma_{d}^{-2} I_{k_d^2} \right)^{-1}$ respectively denote the mean and covariance parameters of the Gaussian density.

We now derive the full conditional distributions of the factor matrices ($B_1, B_2, B_3$) of the three-dimensional fixed-effect coefficient tensor $\mathcal{B}$, and the idiosyncratic error variance $\tau^2$, based on the likelihood marginalized over the imputed compressed random-effects $\tilde{\mathcal{D}}_i$s; see model \eqref{eq:modelBd_cycle2} in the main manuscript. Using the scale mixture representation of the half-Cauchy distribution assigned a priori to the global and local shrinkage parameters in the margin-structured Horseshoe prior in \eqref{eq:hsprior_2} in Section 2.2 of the main manuscript, we obtain the parameter expanded prior specification as
\begin{align}
    \beta^{(g)}_{dj} \mid \lambda_{gdj}^2, \delta_{g}^2, \tau^2 &\overset{\text{ind}}{\sim} \mathcal{N}(0, \lambda_{gdj}^2 \delta_{g}^2 \tau^2 ),\nonumber \\
    \lambda_{gdj}^2 \mid \nu_{gdj} &\overset{\text{ind}}{\sim} \mathcal{IG}(1/2, 1/\nu_{gdj}), \nonumber \\ 
    \delta_{g}^2 \mid \xi_{g} &\overset{\text{ind}}{\sim} \mathcal{IG}(1/2, 1/\xi_{g}), \nonumber \\
    \tau^2 &\sim \mathcal{IG}(a_0, b_0), \nonumber \\
    \nu_{gdj}, \xi_g &\overset{\text{iid}}{\sim} \mathcal{IG}(1/2, 1), g = 1, \dots, K, \ d = 1, 2, 3, \ j = 1, \dots, p_d,
\end{align}
where $\mathcal{IG}(a, b)$ denotes the inverse-gamma distribution with shape parameter $a$ and scale parameter $b$.

The conditional posterior density of the vectorized mode-$d$ factor matrix $\tilde{\bm \beta}_d^{\T} = (\bm \beta_{1d}^{\T}, \dots, \bm \beta_{Kd}^{\T})$ is then given by
\begin{align}\label{eq:cond_post_vecBd}
f(\tilde{\bm \beta}_d &\mid \tilde{\bm \beta}_1, \dots, \tilde{\bm\beta}_{d-1}, \tilde{\bm\beta}_{d+1}, \dots, \tilde{\bm \beta}_3, \tau^2, \gamma_1, \gamma_2, \gamma_3,\lambda^2_{111}, \dots, \lambda^2_{K3p_3},\nonumber \\ &\delta^2_{1}, \dots, \delta^2_{K}, \nu^2_{111}, \dots, \nu^2_{K3p_3}, \xi_{1}, \dots, \xi_{K}) \nonumber \\
&\propto (\tau^2)^{-\frac{N}{2}} \exp \left\{-\frac{1}{2\tau^2}(y^* - X^*_{B_d}\tilde{\bm \beta}_{d})^{\T}(y^* - X^*_{B_d}\tilde{\bm \beta}_{d})\right\} \prod_{g=1}^K (\tau^2)^{-\frac{p_d}{2}} \exp\left\{-\frac{1}{2\tau^2}\bm \beta_{Kd}^{\T}(\delta_K^2 \Lambda_{Kd})^{-1}\bm \beta_{Kd} \right\} \nonumber \\
&\propto (\tau^2)^{-\frac{N + Kp_d}{2}} \exp \left\{-\frac{1}{2\tau^2}(y^* - X^*_{B_d}\tilde{\bm \beta}_{d})^{\T}(y^* - X^*_{B_d}\tilde{\bm \beta}_{d})\right\} \nonumber \\ &\exp\left[-\frac{1}{2\tau^2}\tilde{\bm \beta}_{d}^{\T}\{\text{diag}(\delta_1^2 \Lambda_{1d}, \dots, \delta_K^2 \Lambda_{Kd})\}^{-1}\tilde{\bm \beta}_{d} \right] \nonumber \\
& \equiv \mathcal{N}(\mu_{B_d}, \tau^2\Sigma_{B_d}),
\end{align}
where $\Sigma_{B_d} = \left[ X_{B_d}^{*\T}X_{B_d}^{*} + \{\text{diag}(\delta_{1}^{2} \Lambda_{1d}, \dots, \delta_{K}^{2} \Lambda_{Kd})\}^{-1} \right]^{-1}$ is a ${Kp_d \times Kp_d}$ positive definite matrix and $\Lambda_{gd} = \text{diag}(\lambda^{2}_{gd1}, \dots, \lambda^{2}_{gdp_d})$ is a ${p_d \times p_d}$ diagonal matrix for $g = 1, \dots, K$ and $d = 1, 2, 3$.

The conditional posterior density of the squared local shrinkage parameter $\lambda^2_{gdj}$ is
\begin{align}\label{eq:cond_post_lambdagdj}
    f(\lambda_{gdj}^2 &\mid \bm{\beta}, \tau^2, \gamma_1, \gamma_2, \gamma_3, \delta_1, \dots, \delta_K, \nonumber \\ &\nu^2_{111}, \dots, \nu^2_{K3p_3}, \xi_{1}, \dots, \xi_{K}) \nonumber\\
    &\propto \frac{1}{(\mid \delta_g^2 \Lambda_{gd} \mid)^{1/2}}\exp \left\{ -\frac{1}{2\tau^2}\bm \beta_{gd}^{\T}(\delta_{g}^2 \Lambda_{gd})^{-1}\bm \beta_{gd}\right\} (\lambda_{gdj}^2)^{-\frac{1}{2}-1}\exp \left( -\frac{1/\nu_{gdj}}{\lambda_{gdj}^2}\right) \nonumber \\
    &\propto \left(\prod_{j=1}^{p_d}\lambda_{gdj}^2\right)^{-\frac{1}{2}} \exp \left\{-\frac{1}{2\tau^2 \delta_g^2}\sum_{j=1}^{p_d} \frac{\beta_{dj}^{2(g)}}{\lambda_{gdj}^2} \right\} (\lambda_{gdj}^2)^{-\frac{1}{2}-1}\exp \left( -\frac{1/\nu_{gdj}}{\lambda_{gdj}^2}\right) \nonumber \\
    &\propto (\lambda_{gdj}^2)^{-1 -1} \exp \left\{ -\frac{1}{\lambda_{gdj}^2} \left(\frac{1}{\nu_{gdj}} + \frac{\beta_{dj}^{2(g)}}{2\tau^2\delta_g^2} \right)\right\} \nonumber \\
    &\equiv \mathcal{IG}\left(1, \frac{1}{\nu_{gdj}} + \frac{\beta_{dj}^{2(g)}}{2\tau^2\delta_g^2}\right), \ g = 1, \dots, K, \ d = 1, 2, 3, \ j = 1, \dots, p_d.
\end{align}

The conditional posterior density of the squared global shrinkage parameter $\delta_g^2$ corresponding to the $g$-th rank-1 component in CP decomposition of $\mathcal{B}$ is
\begin{align}\label{eq:cond_post_deltag}
    f(\delta_g^2 &\mid \bm \beta, \tau^2, \gamma_1, \gamma_2, \gamma_3, \lambda^2_{111}, \dots, \lambda^2_{K3p_3},\nonumber\\ &\nu^2_{111}, \dots, \nu^2_{K3p_3}, \xi_{1}, \dots, \xi_{K}) \nonumber\\
    &\propto \prod_{d=1}^3 \frac{1}{(\mid \delta_g^2 \Lambda_{gd} \mid)^{1/2}}\exp \left\{ -\frac{1}{2\tau^2}\bm \beta_{gd}^{\T}(\delta_g^2 \Lambda_{gd})^{-1}\bm \beta_{gd}\right\} (\delta_g^2)^{-\frac{1}{2}-1}\exp \left( -\frac{1/\xi_{g}}{\delta_g^2}\right) \nonumber \\
    &\propto \prod_{d=1}^3(\delta_g^2)^{-\frac{p_d}{2}} \exp \left\{-\frac{1}{2\tau^2 \delta_g^2}\sum_{j=1}^{p_d} \frac{\beta_{dj}^{2(g)}}{\lambda_{gdj}^2} \right\} (\delta_g^2)^{-\frac{1}{2}-1}\exp \left( -\frac{1/\xi_{g}}{\delta_g^2}\right) \nonumber \\
    &\propto (\delta_g^2)^{-\frac{1 + \sum_{d=1}^3 p_d}{2} - 1} \exp \left\{-\frac{1}{\delta_g^2} \left( \frac{1}{\xi_g} + \frac{1}{2\tau^2}\sum_{d=1}^3 \sum_{j = 1}^{p_d}\frac{\beta_{dj}^{2(g)}}{\lambda_{gdj}^2}\right)\right\} \nonumber \\
    & \equiv \mathcal{IG}\left(\frac{1 + \sum_{d=1}^{3}p_d}{2}, \frac{1}{\xi_g} + \frac{1}{2\tau^2}\sum_{d=1}^3 \sum_{j = 1}^{p_d}\frac{\beta_{dj}^{2(g)}}{\lambda_{gdj}^2}\right), \ g = 1, \dots, K.
\end{align}

The conditional posterior density of the error variance $\tau^2$ is
\begin{align}
    f(\tau^2 &\mid \bm\beta, \gamma_1, \gamma_2, \gamma_3, \lambda^2_{111}, \dots, \lambda^2_{K3p_3}, \nonumber \\
    & \delta_1, \dots, \delta_K, \nu^2_{111}, \dots, \nu^2_{K3p_3}, \xi_{1}, \dots, \xi_{K}) \nonumber\\
    & \propto (\tau^2)^{-\frac{N}{2}} \exp \left\{-\frac{1}{2\tau^2}(y^* - X^*\bm \beta)^{\T}(y^* - X^*\bm \beta)\right\} \nonumber \\
    &\prod_{g=1}^K \prod_{d=1}^3 \frac{1}{(\mid \tau^2 \delta_g^2 \Lambda_{gd} \mid)^{1/2}}\exp \left\{ -\frac{1}{2\tau^2}\bm \beta_{gd}^{\T}(\delta_g^2 \Lambda_{gd})^{-1}\bm \beta_{gd}\right\} (\tau^2)^{-a_0 - 1} \exp{\left(-\frac{b_0}{\tau^2} \right)} \nonumber \\
    & \propto (\tau^2)^{-a_0 - (N + K \sum_{d=1}^3 p_d)/2 - 1}\exp \left[-\frac{1}{\tau^2}\left\{(y^* - X^*\bm \beta)^{\T}(y^* - X^*\bm \beta)/2 + \sum_{g=1}^K \sum_{d=1}^3 \frac{\beta_{dj}^{2(g)}}{2\delta_g^2 \lambda_{gdj}^2}\right\} + b_0\right] \nonumber \\
    & \equiv \mathcal{IG}\left( a_0 + \frac{N + K \sum_{d=1}^3 p_d}{2}, b_0 + \frac{1}{2}\left\{(y^* - X^*\bm \beta)^{\T}(y^* - X^*\bm \beta) + \sum_{g=1}^K \sum_{d=1}^3 \frac{\beta_{dj}^{2(g)}}{\delta_g^2 \lambda_{gdj}^2}\right\} \right)
\end{align}

Finally, we outline the full conditional densities of the auxiliary variables $\xi_g$ and $\nu_{gdj} (g = 1, \dots, K; d = 1, 2, 3; j = 1, \dots, p_d)$ as follows:
\begin{align}
    f(\xi_g &\mid \bm{\beta}, \tau^2, \gamma_1, \gamma_2, \gamma_3, \delta_1, \dots, \delta_K, \lambda^2_{111}, \dots, \lambda^2_{K3p_3},\nonumber\\ &\nu^2_{111}, \dots, \nu^2_{K3p_3}, \xi_1, \dots, \xi_{g-1}, \xi_{g+1}, \dots, \xi_K) \nonumber \\
    &\propto (\xi_g)^{-\frac{1}{2}}(\delta_g^2)^{-\frac{1}{2} - 1} \exp\left(-\frac{1/\xi_g}{\delta_g^2} \right) (\xi_g)^{-\frac{1}{2} - 1}\exp\left( -\frac{1}{\xi_g}\right)\nonumber \\
    &\propto (\xi_g)^{-1 -1} \exp \left\{ -\frac{1}{\xi_g} \left( 1 + \frac{1}{\delta_g^2}\right)\right\} \nonumber \\
    &\equiv \mathcal{IG}\left(1, 1 + \frac{1}{\delta_g^2} \right),
\end{align}
and
\begin{align}
    f(\nu_{gdj} &\mid \bm{\beta}, \tau^2, \gamma_1, \gamma_2, \gamma_3, \nu_{g'd'j'}, \nonumber \\
    &\lambda^2_{111}, \dots, \lambda^2_{K3p_3}, \delta_1, \dots, \delta_K, \xi_1, \dots, \xi_K) \nonumber \\
    &\propto (\nu_{gdj})^{-\frac{1}{2}}(\lambda_{gdj}^2)^{-\frac{1}{2} - 1} \exp\left(-\frac{1/\nu_{gdj}}{\lambda_{gdj}^2} \right) (\nu_{gdj})^{-\frac{1}{2} - 1}\exp\left( -\frac{1}{\nu_{gdj}}\right)\nonumber \\
    &\propto (\nu_{gdj})^{-1 -1} \exp \left\{ -\frac{1}{\nu_{gdj}} \left( 1 + \frac{1}{\lambda_{gdj}^2}\right)\right\} \nonumber \\
    &\equiv \mathcal{IG}\left(1, 1 + \frac{1}{\lambda_{gdj}^2} \right), g' \neq g, d' \neq d, j' \neq j.
\end{align}

\end{document}